\def\lb{\label}
\newcommand{\Lan}{\mathrm{Lan}}
\newcommand{\er}[1]{\textrm{(\ref{#1})}}
\begin{document}


\renewcommand{\theequation}{\arabic{section}.\arabic{equation}}
\theoremstyle{plain}
\newtheorem{theorem}{\bf Theorem}[section]
\newtheorem{lemma}[theorem]{\bf Lemma}
\newtheorem{corollary}[theorem]{\bf Corollary}
\newtheorem{proposition}[theorem]{\bf Proposition}
\newtheorem{definition}[theorem]{\bf Definition}
\theoremstyle{remark}
\newtheorem{remark}[theorem]{\bf Remark}
\newtheorem{remarks}[theorem]{\bf Remarks}
\newtheorem{example}[theorem]{\bf Example}

 \numberwithin{equation}{section}

\def\a{\alpha}  \def\cA{{\mathcal A}}     \def\bA{{\bf A}}  \def\mA{{\mathscr A}}
\def\b{\beta}   \def\cB{{\mathcal B}}     \def\bB{{\bf B}}  \def\mB{{\mathscr B}}
\def\g{\gamma}  \def\cC{{\mathcal C}}     \def\bC{{\bf C}}  \def\mC{{\mathscr C}}
\def\G{\Gamma}  \def\cD{{\mathcal D}}     \def\bD{{\bf D}}  \def\mD{{\mathscr D}}
\def\d{\delta}  \def\cE{{\mathcal E}}     \def\bE{{\bf E}}  \def\mE{{\mathscr E}}
\def\D{\Delta}  \def\cF{{\mathcal F}}     \def\bF{{\bf F}}  \def\mF{{\mathscr F}}
\def\c{\chi}    \def\cG{{\mathcal G}}     \def\bG{{\bf G}}  \def\mG{{\mathscr G}}
\def\z{\zeta}   \def\cH{{\mathcal H}}     \def\bH{{\bf H}}  \def\mH{{\mathscr H}}
\def\e{\eta}    \def\cI{{\mathcal I}}     \def\bI{{\bf I}}  \def\mI{{\mathscr I}}
\def\p{\psi}    \def\cJ{{\mathcal J}}     \def\bJ{{\bf J}}  \def\mJ{{\mathscr J}}
\def\vT{\Theta} \def\cK{{\mathcal K}}     \def\bK{{\bf K}}  \def\mK{{\mathscr K}}
\def\k{\kappa}  \def\cL{{\mathcal L}}     \def\bL{{\bf L}}  \def\mL{{\mathscr L}}
\def\l{\lambda} \def\cM{{\mathcal M}}     \def\bM{{\bf M}}  \def\mM{{\mathscr M}}
\def\L{\Lambda} \def\cN{{\mathcal N}}     \def\bN{{\bf N}}  \def\mN{{\mathscr N}}
\def\m{\mu}     \def\cO{{\mathcal O}}     \def\bO{{\bf O}}  \def\mO{{\mathscr O}}
\def\n{\nu}     \def\cP{{\mathcal P}}     \def\bP{{\bf P}}  \def\mP{{\mathscr P}}
\def\r{\rho}    \def\cQ{{\mathcal Q}}     \def\bQ{{\bf Q}}  \def\mQ{{\mathscr Q}}
\def\s{\sigma}  \def\cR{{\mathcal R}}     \def\bR{{\bf R}}  \def\mR{{\mathscr R}}
\def\S{\Sigma}  \def\cS{{\mathcal S}}     \def\bS{{\bf S}}  \def\mS{{\mathscr S}}
\def\t{\tau}    \def\cT{{\mathcal T}}     \def\bT{{\bf T}}  \def\mT{{\mathscr T}}
\def\f{\phi}    \def\cU{{\mathcal U}}     \def\bU{{\bf U}}  \def\mU{{\mathscr U}}
\def\F{\Phi}    \def\cV{{\mathcal V}}     \def\bV{{\bf V}}  \def\mV{{\mathscr V}}
\def\P{\Psi}    \def\cW{{\mathcal W}}     \def\bW{{\bf W}}  \def\mW{{\mathscr W}}
\def\om{\omega} \def\cX{{\mathcal X}}     \def\bX{{\bf X}}  \def\mX{{\mathscr X}}
\def\x{\xi}     \def\cY{{\mathcal Y}}     \def\bY{{\bf Y}}  \def\mY{{\mathscr Y}}
\def\X{\Xi}     \def\cZ{{\mathcal Z}}     \def\bZ{{\bf Z}}  \def\mZ{{\mathscr Z}}
\def\Om{\Omega}

\newcommand{\gA}{\mathfrak{A}}
\newcommand{\gB}{\mathfrak{B}}
\newcommand{\gC}{\mathfrak{C}}
\newcommand{\gD}{\mathfrak{D}}
\newcommand{\gE}{\mathfrak{E}}
\newcommand{\gF}{\mathfrak{F}}
\newcommand{\gG}{\mathfrak{G}}
\newcommand{\gH}{\mathfrak{H}}
\newcommand{\gI}{\mathfrak{I}}
\newcommand{\gJ}{\mathfrak{J}}
\newcommand{\gK}{\mathfrak{K}}
\newcommand{\gL}{\mathfrak{L}}
\newcommand{\gM}{\mathfrak{M}}
\newcommand{\gN}{\mathfrak{N}}
\newcommand{\gO}{\mathfrak{O}}
\newcommand{\gP}{\mathfrak{P}}
\newcommand{\gR}{\mathfrak{R}}
\newcommand{\gS}{\mathfrak{S}}
\newcommand{\gT}{\mathfrak{T}}
\newcommand{\gU}{\mathfrak{U}}
\newcommand{\gV}{\mathfrak{V}}
\newcommand{\gW}{\mathfrak{W}}
\newcommand{\gX}{\mathfrak{X}}
\newcommand{\gY}{\mathfrak{Y}}
\newcommand{\gZ}{\mathfrak{Z}}

\def\ve{\varepsilon}   \def\vt{\vartheta}    \def\vp{\varphi}    \def\vk{\varkappa}

\def\Z{{\mathbb Z}}    \def\R{{\mathbb R}}   \def\C{{\mathbb C}}
\def\T{{\mathbb T}}    \def\N{{\mathbb N}}   \def\dD{{\mathbb D}}
\def\B{{\mathbb B}}


\def\la{\leftarrow}              \def\ra{\rightarrow}      \def\Ra{\Rightarrow}
\def\ua{\uparrow}                \def\da{\downarrow}
\def\lra{\leftrightarrow}        \def\Lra{\Leftrightarrow}


\def\lt{\biggl}                  \def\rt{\biggr}
\def\ol{\overline}               \def\wt{\widetilde}
\def\no{\noindent}


\let\ge\geqslant                 \let\le\leqslant
\def\lan{\langle}                \def\ran{\rangle}
\def\/{\over}                    \def\iy{\infty}
\def\sm{\setminus}               \def\es{\emptyset}
\def\ss{\subset}                 \def\ts{\times}
\def\pa{\partial}                \def\os{\oplus}
\def\ev{\equiv}
\def\iint{\int\!\!\!\int}        \def\iintt{\mathop{\int\!\!\int\!\!\dots\!\!\int}\limits}
\def\el2{\ell^{\,2}}             \def\1{1\!\!1}
\def\sh{\sharp}
\def\wh{\widehat}

\def\where{\mathop{\mathrm{where}}\nolimits}
\def\as{\mathop{\mathrm{as}}\nolimits}
\def\Area{\mathop{\mathrm{Area}}\nolimits}
\def\arg{\mathop{\mathrm{arg}}\nolimits}
\def\const{\mathop{\mathrm{const}}\nolimits}
\def\det{\mathop{\mathrm{det}}\nolimits}
\def\diag{\mathop{\mathrm{diag}}\nolimits}
\def\diam{\mathop{\mathrm{diam}}\nolimits}
\def\dim{\mathop{\mathrm{dim}}\nolimits}
\def\dist{\mathop{\mathrm{dist}}\nolimits}
\def\Im{\mathop{\mathrm{Im}}\nolimits}
\def\Iso{\mathop{\mathrm{Iso}}\nolimits}
\def\Ker{\mathop{\mathrm{Ker}}\nolimits}
\def\Lip{\mathop{\mathrm{Lip}}\nolimits}
\def\rank{\mathop{\mathrm{rank}}\limits}
\def\Ran{\mathop{\mathrm{Ran}}\nolimits}
\def\Re{\mathop{\mathrm{Re}}\nolimits}
\def\Res{\mathop{\mathrm{Res}}\nolimits}
\def\res{\mathop{\mathrm{res}}\limits}
\def\sign{\mathop{\mathrm{sign}}\nolimits}
\def\span{\mathop{\mathrm{span}}\nolimits}
\def\supp{\mathop{\mathrm{supp}}\nolimits}
\def\Tr{\mathop{\mathrm{Tr}}\nolimits}
\def\BBox{\hspace{1mm}\vrule height6pt width5.5pt depth0pt \hspace{6pt}}


\newcommand{\BS}{\mathrm{BS}}
\newcommand{\pp}{\mathrm{pp}}
\newcommand{\Lips}{\mathrm{Lip}}
\newcommand{\tJ}{\widetilde{J}}
\newcommand{\one}{{\bf 1}}

\newcommand\nh[2]{\widehat{#1}\vphantom{#1}^{(#2)}}
\def\dia{\diamond}

\def\Oplus{\bigoplus\nolimits}

\newcommand{\slim}{\mathrm{s}-\lim}



\def\qqq{\qquad}
\def\qq{\quad}
\let\ge\geqslant
\let\le\leqslant
\let\geq\geqslant
\let\leq\leqslant
\newcommand{\ca}{\begin{cases}}
\newcommand{\ac}{\end{cases}}
\newcommand{\ma}{\begin{pmatrix}}
\newcommand{\am}{\end{pmatrix}}
\renewcommand{\[}{\begin{equation}}
\renewcommand{\]}{\end{equation}}
\def\bu{\bullet}

\title[{Weighted estimates for the Laplacian on the cubic lattice}]
{Weighted estimates for the Laplacian on the cubic lattice}

\date{\today}

\author[Evgeny Korotyaev]{Evgeny L. Korotyaev}
\address{Saint Petersburg State University, St. Petersburg, Russia}
\email{korotyaev@gmail.com,  e.korotyaev@spbu.ru}

\author[Jacob  Schach  M{\o}ller]{Jacob  Schach  M{\o}ller}
\address{Department of Mathematics, Aarhus University, Denmark.}
\email{jacob@math.au.dk}

\subjclass{81Q10, (47A40, 33C10)}\keywords{Discrete Laplacian, Resolvent, Bessel function, Birman-Schwinger.}

\begin{abstract}
We consider the discrete Laplacian $\D$ on the cubic lattice $\Z^d$, and
deduce estimates on the group $e^{it\D}$ and the resolvent
$(\D-z)^{-1}$, weighted by $\ell^q(\Z^d)$-weights  for suitable
$q\geq 2$. We apply the obtained results to discrete Schr\"odinger
operators in dimension $d\geq 3$ with potentials from $\ell^p(\Z^d)$ with suitable $p\geq 1$.
\end{abstract}

\maketitle

\section{Introduction and main results}

\subsection{Introduction}
We consider the Schr\"odinger operator $H$ acting in
$\ell^2(\Z^{d}),d\ge 3$ and given by
\[\label{Hamil}
{H}=\D+V,
\]
where ${\D}$ is the discrete Laplacian on $\Z^d$ defined  by
\[
\lb{dL} \big(\D f \big)(n)=\frac{1}{2}\sum_{j=1}^{d}\bigl( f_{n+
e_{j}} + f_{n-e_{j}}\bigr),
\]
for $f =(f_n)_{n\in\Z^d} \in \ell^{2}(\Z^d)$. Here $ e_{1} =
(1,0,\cdots,0), \cdots, e_{d} = (0,\cdots,0,1) $
 is the standard basis of $\Z^d$. The spectrum of $\D$ is absolutely continuous and
 $\s(\D)=\s_{\textup{ac}}(\D)=[-d,d]$, and the threshold set -- critical energies of $\Delta$ in its momentum representation -- is $\tau(H) = \tau(\Delta) = (2\Z+d)\cap [-d,d]$.

The operator  $V$ is the operator of multiplication by a real-valued potential function, $V = (V_n)_{n\in\Z^d}$.
The potentials we work with will always satisfy that $\lim_{n\to\infty} V_n=0$,
and consequently the
essential spectrum of the Schr\"odinger operator $H$ on
$\ell^2(\Z^d)$ is $\s_{\mathrm{ess}}(H)=[-d,d]$.
However, this does not exclude appearance of eigenvalues and singular
continuous spectrum in the interval $[-d,d]$. 

Before describing our results, we recall some well known classic results by Kato for the continuous case, established in the famous paper \cite{Ka66}. Kato
considered the Laplacian $\D$ acting in the space $L^2(\R^d)$ for $d\ge
2$. He proved the following estimates:

\begin{enumerate}[label = \textup{(\roman*)}]
\item  Let $d\geq 2$ and $q>2$. Then for all $t\in \R\sm \{0\}$ and
$u\in L^{q}(\R^d)$, we have
\begin{equation}
\label{eLR} \|u e^{it\D}u\|\le C_{d,q}|t|^{-{d\/q}}\|u\|_q^2,
\end{equation}
for some constant $C_{d,q}$ depending on $d$ and $q$ only.
\item Let $\ve>0$ and $d>2+\ve$. For any $u\in L^{d -\ve}(\R^d)\cap L^{d + \ve}(\R^d)$, we have
\begin{equation}
\label{eRR}
\forall \ \l\in \C\sm [0,\iy):\qquad
 \|u (-\D-\l)^{-1}u\|\le
C_{d,\ve}\bigl(\|u\|_{d-\ve}^2+\|u\|_{d+\ve}^2\bigr),
\end{equation}
for some constant $C_{d,\ve}$ depending on $d$ and $\ve$ only.
\end{enumerate}

\medskip

These estimates have a lot of applications in spectral theory, see
\cite{RS78}. Note that \er{eLR} is a simple example of a dispersive
(or Strichartz) estimate. Dispersive estimates are
very useful in the theory of linear and non-linear partial
differential equations, see \cite{Sc07} and references therein. Note
that the estimate \er{eLR} implies \er{eRR} and  that the
operator-valued function $u (-\D-\l)^{-1}u$ is analytic in $\C\sm
[0,\iy)$ and uniformly H\"older up to the boundary.

In the present paper we prove a dispersive estimate of the type \er{eLR}
for the discrete Laplacian \eqref{dL}, and we show that by replacing the $q$-norm with a 
weighted $q$-norm on the right-hand side one may improve the time-decay. This is the content of Theorem~\ref{T1} below. 
Secondly, we establish resolvent estimates of the type \eqref{eLR} that are better than what is implied
by our dispersive estimates. See Theorem~\ref{T2}. That is, unlike the continuous case studied by Kato, one does not get good resolvent bounds merely by integrating a dispersive estimate of the type \eqref{eLR}. Instead we analyze and exploit the pointwise decay of the summation kernel for the free resolvent. The starting point for our analysis is a representation of the summation kernel of the propagator in terms of a product of Bessel functions. Our estimates then follow from a careful use of recent optimal estimates on Bessel functions by Krasikov \cite{Kr14} and Landau \cite{La00}. Finally, in Theorem~\ref{T3}, we deduce consequences for the spectral and scattering theory for the Hamiltonian \eqref{Hamil} with potentials from a suitable $\ell^p(\Z^d)$ space. The proof of this last theorem revolves around Birman-Schwinger type arguments.

\medskip

For the discrete Schr\"odinger operators on the cubic lattice, most results were
obtained for uniformly decaying potentials for the $\Z^1$ case, see
for example \cite{DHKS,IJ15,KS03,T89}. 
For real-valued finitely supported potentials in $\Z^d$, Kopylova \cite{Ko10} has established dispersive estimates, 
Shaban and Vainberg \cite{SV01} as well as Ando, Isozaki and Morioka \cite[Thm.~7.7]{AIM16}, studied Limiting Absorption away 
from thresholds, and Isozaki and Morioka \cite[Thm.~2.1]{IM14} proved that the
point-spectrum of $H$ on the interval $(-d,d)$ is absent. Note that
in \cite{HSSS12} the authors gave an example in dimension $d\geq 5$ of an embedded eigenvalue at
the endpoint $\{\pm d\}$. Recently, Hayashi, Higuchi, Nomura, and Ogurisu computed the number of discrete eigenvalues for finitely supported potential \cite{HHNO16}.

 For Schr\"odinger operators with decreasing
potentials on the lattice $\Z^d$, Boutet de
Monvel and Sahbani \cite{BS99} used Mourre's method to prove absence of singular continuous spectrum and local finiteness of eigenvalues away from threshold energies $\tau(H)$, a technique revisited by Isozaki and Korotyaev \cite{IK12}, who also studied the direct and the inverse scattering problem as well as trace formulae.

For $\ell^{d/2}(\Z^d)$-potentials, Rozenblum and Solomyak \cite{RoS09} gave an upper bound on the number of discrete eigenvalues in terms of the $\ell^{d/2}(\Z^d)$-norm of the potential.

 Recently, Ito and Jensen \cite{IJ16} expanded the free resolvent integral kernel in momentum space near each threshold energy, i.e., near each $\lambda\in  \tau(\Delta)$. This analysis complements that of \cite{AIM16}, which also relies on a detailed investigation of the resolvent in its momentum representation. 

For closely related problems, we mention that the results of Sahbani and Boutet de Monvel \cite{BS99}, were recently extended to general lattices by Parra and Richard \cite{PR16}. In addition,
the result in \cite{IM14}, on absence of embedded eigenvalues (away from $\{-d,d\}$) for finitely supported potentials, has a generalization to other lattices in \cite[Thm.~5.10]{AIM16}, where the inverse problem is also considered.
Finally, scattering on periodic metric graphs associated with $\Z^d$ was considered by Korotyaev and Saburova in \cite{KS15}.


\subsection{Estimates of free time evolution}


Our first result consists of weighted estimates for the propagator
of the discrete Laplacian on $\Z^d$. To fix some
notation, we write $\ell^p(\Z^{d})$ for the space of sequences
$f=(f_n)_{n\in \Z^d}$ equipped with the norm
\[
\|f\|_{p}^p=\|f\|_{\ell^p(\Z^{d})}^p=\sum_{n\in \Z^d}|f_n|^p<\iy.
\]
For $p\geq 1$ and $\kappa\geq 0$, we shall make use of the weighted spaces
$\ell^p_\kappa(\Z^d)$, consisting of sequences with finite weighted norm
\[
\|f\|_{p,\kappa}^p = \|f\|_{\ell^p_\kappa(\Z^d)}^p= \sum_{n\in\Z^d} \rho_n^{-p\kappa} |f_n|^p,
\]
where
\[\label{rho-weight}
\r_n=\prod_{j=1}^d(1+|n_j|)^{-1},\qqq n\in \Z^d.
\]
For $p=\infty$ this amounts to $\|f\|_{\infty,\kappa} = \sup_n \rho^{-\kappa}_n|f_n|$.
For bounded operators $T$ on $\ell^2(\Z^d)$, we write $\|T\|$ for the operator norm of $T$.

Now we are now ready to formulate our first result on weighted estimates on the propagator $e^{it\D}$.

\begin{theorem}\label{T1} Let $d\geq 1$. The following holds true:
\lb{Texp}
\begin{enumerate}[label=\textup{(\alph*)}]
\item\label{Item-Texp-a} Let $q\geq 2$.
Then for all $t\in \R\sm [-1,1]$ and $u,v\in\ell^{q}(\Z^d)$, we have
\begin{equation}
\label{et1} \|u e^{it\D}v\|\le
C_\Lan^\frac{2d}{q}|t|^{-{2d\/3q}}\|u\|_q\|v\|_q,
\end{equation}
where the constant $C_\Lan <  4/5$.
\item\label{Item-Texp-b} Let $a>1/2$, $2\leq q \leq\infty$ and $0\leq \kappa\leq a\cdot\frac{q-2}{q}$. Then for all $t\in\R$ with $|t|\geq 1$ and $u,v\in \ell^q_\kappa(\Z^d)$, we have
\begin{equation}
\label{et2}
\|u e^{it\D}v\|\le
C_\Lan^{\frac{2d}{q}} C_a^{\frac{d\kappa}{a}}
|t|^{-d\big({2\/3q}+{\kappa\/2a}  \big)}  \|u\|_{q,\kappa}  \|v\|_{q,\kappa},
\end{equation}
where  $C_a=3 \bigl(1+\frac{2a}{2a-1}\bigr)$. (For $q=\infty$ the obvious interpretation of the estimate applies.)
\end{enumerate}
\end{theorem}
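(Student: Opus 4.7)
The plan is to compute the kernel of $e^{it\D}$ explicitly as a product of Bessel functions, reduce both operator-norm bounds to Hilbert--Schmidt estimates on suitably weighted kernels, and then interpolate between a small number of endpoint estimates. Since $\D=\sum_{j=1}^d \D_j$ is a sum of commuting one-dimensional discrete Laplacians and $e^{it\cos\xi}=\sum_{k\in\Z} i^k J_k(t) e^{ik\xi}$, the first computation yields
\[
K_t(n,m):=\langle\delta_n,e^{it\D}\delta_m\rangle=\prod_{j=1}^d i^{n_j-m_j}J_{n_j-m_j}(t).
\]
Crucially, both $|K_t|$ and the weight $\rho_n^\kappa=\prod_j(1+|n_j|)^{-\kappa}$ factor across coordinates; this reduces the analysis to one-dimensional sums and produces the $d$-th power structure in the final bounds.

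For part~(a) I would work with two endpoints. The $q=\infty$ endpoint is trivial, $\|u e^{it\D} v\|\le\|u\|_\infty\|v\|_\infty$, since $\|e^{it\D}\|=1$. For the $q=2$ endpoint I would estimate by the Hilbert--Schmidt norm
\[
\|u e^{it\D} v\|_{S_2}^2=\sum_{n,m}|u_n|^2|K_t(n,m)|^2|v_m|^2\le\bigl(\sup_{n,m}|K_t(n,m)|\bigr)^2\|u\|_2^2\|v\|_2^2,
\]
and plug in Landau's sharp pointwise bound $\sup_{k\in\Z}|J_k(t)|\le C_\Lan|t|^{-1/3}$ with $C_\Lan<4/5$ from \cite{La00} to control $\sup|K_t|$. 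Viewing $(u,v)\mapsto u e^{it\D} v$ as a bilinear map into the Schatten scale $S_q$ and using $\|\cdot\|\le\|\cdot\|_{S_q}$, bilinear complex interpolation then yields \eqref{et1} with exponent $2d/(3q)$ on the full interval $[2,\infty]$.

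For part~(b) I would supplement these with a single further weighted endpoint at $(q,\kappa)=(\infty,a)$. Substituting $|u_n|\le\rho_n^a\|u\|_{\infty,a}$ into the Hilbert--Schmidt computation and using that the weight and the kernel factor,
\[
\|u e^{it\D} v\|^2\le\|u\|_{\infty,a}^2\|v\|_{\infty,a}^2\, S_1(t,a)^d,\quad S_1(t,a):=\sum_{n,m\in\Z}(1+|n|)^{-2a}|J_{n-m}(t)|^2(1+|m|)^{-2a}.
\]
The central analytic claim is then $S_1(t,a)\le C_a/|t|$ for $a>1/2$ and $|t|\ge 1$, which yields the weighted endpoint $\|u e^{it\D} v\|\le C_a^{d/2}|t|^{-d/2}\|u\|_{\infty,a}\|v\|_{\infty,a}$. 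A standard bilinear complex interpolation among the three corners $(q,\kappa)=(2,0)$, $(\infty,0)$, $(\infty,a)$ then delivers the full estimate \eqref{et2} on the triangular region $\kappa\le a(1-2/q)$, since the target exponent $d(2/(3q)+\kappa/(2a))$ is linear in $(1/q,\kappa)$ and interpolates correctly between $d/3$, $0$, and $d/2$ at these corners.

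The main obstacle will be establishing $S_1(t,a)\le C_a/|t|$. A convolution-type elementary inequality $\sum_n(1+|n|)^{-2a}(1+|n-k|)^{-2a}\lesssim(1+|k|)^{-2a}$, valid for $a>1/2$ and responsible for the sum factor $1+2a/(2a-1)$, reduces matters to the one-sided estimate $\sum_{k\in\Z}(1+|k|)^{-2a}|J_k(t)|^2\lesssim 1/|t|$. I would split this sum into the oscillatory bulk $|k|\le|t|/2$, the intermediate region $|t|/2<|k|\le|t|-c|t|^{1/3}$, the transition band $|k|\approx|t|$ of width $O(|t|^{1/3})$, and the exponentially small tail $|k|>|t|+c|t|^{1/3}$. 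The bulk and intermediate contributions are controlled by Krasikov's refined pointwise estimate \cite{Kr14} $|J_k(t)|^2\lesssim 1/(|t|\sqrt{|t|-|k|})$; in the intermediate region the additional decay $(1+|k|)^{-2a}\lesssim|t|^{-2a}$ produced by $a>1/2$ is used to close the estimate. In the transition band Landau's bound $|J_k(t)|^2\le C_\Lan^2|t|^{-2/3}$ combined with the same weight decay suffices, and the tail is negligible. The factor of $3$ in $C_a=3(1+2a/(2a-1))$ reflects exactly this three-region decomposition.
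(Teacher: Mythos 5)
Your proposal is correct and follows essentially the same route as the paper: the kernel of $e^{it\D}$ is the product $\prod_j i^{n_j-m_j}J_{n_j-m_j}(t)$, part \ref{Item-Texp-a} rests on Landau's bound $|J_k(t)|\le C_\Lan|t|^{-1/3}$ plus complex interpolation, and part \ref{Item-Texp-b} rests on the weighted endpoint $S_1(t,a)=\sum_{n,m}(1+|n|)^{-2a}|J_{n-m}(t)|^2(1+|m|)^{-2a}\lesssim_a|t|^{-1}$, proved by exactly the Krasikov/Landau region decomposition that the paper packages into Proposition~\ref{FusedBessel} and Lemma~\ref{SummationEstimate}. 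The differences are organizational. For \ref{Item-Texp-a} the paper interpolates the \emph{linear} map $e^{it\D}\colon\ell^s\to\ell^r$ (Riesz--Thorin between unitarity and the $\ell^1\to\ell^\infty$ dispersive bound, Lemma~\ref{Texp1}) and then applies H\"older, whereas you interpolate the bilinear map directly with Schatten-class targets; both are standard and yield the same constant $C_\Lan^{2d/q}$. For \ref{Item-Texp-b} you should make the ``three-corner'' interpolation precise by reducing to segments, since a single three-lines argument handles only one complex parameter: the paper first interpolates along the edge from $(\infty,0)$ to $(\infty,a)$ (Lemma~\ref{Td1}, giving $\|\rho^{ac}e^{it\D}\rho^{ac}\|\le C_a^{dc}|t|^{-dc/2}$) and then runs a second three-lines argument between that estimate and the $\ell^{q'}$ bound of part \ref{Item-Texp-a}; any point of your triangle is reached this way. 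Two small cautions: (i) your claimed endpoint $S_1(t,a)\le C_a/|t|$ with $C_a=3\bigl(1+\frac{2a}{2a-1}\bigr)$ is stronger than needed and is not what the decomposition actually delivers -- the paper obtains $S_1(t,a)\le C_a^2/|t|$, which already gives the prefactor $C_a^d$ at the corner $(q,\kappa)=(\infty,a)$ and hence $C_a^{d\kappa/a}$ in \eqref{et2} after interpolation; (ii) the term $k=0$ must be treated separately (the paper uses Szeg\H{o}'s bound $|J_0(t)|\le\sqrt{2/(\pi|t|)}$), since Krasikov's estimate requires $|k|\ge 1/2$.
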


\begin{remarks}
\begin{enumerate}[label = \textup{(\arabic*)}]
\item One can find Landau's optimal constant $C_\Lan$
with several decimals in \cite{La00}.
\item By allowing for the $\rho$-weights in \ref{Item-Texp-b}, one may improve the time-decay of the $\ell^q$-weighted time evolution
from \ref{Item-Texp-a} slightly.
\end{enumerate}
\end{remarks}

\subsection{Estimates of the free resolvent}

Let $\gamma\in [0,1]$, $d>2+2\gamma$ and $q\geq 2$. If $q>2$, we define
\begin{equation}
\lb{dGCd}
\G(q,d,\g)=\ca \Bigl(3+\frac{12(q-2)}{12-(5+2\gamma)q}\Bigr)^{\frac{3(q-2)}{q}}, & \textup{if} \ d =  3\\
 \Bigl(3+ 2\bigl(\frac{5q-2}{8-(3+\gamma)q} \bigr)^{1+\frac{q}{4(q-2)}}\Bigr)^{\frac{4(q-2)}{q}} & \textup{if} \ d=4 \\
 \Bigl(3+ \frac{6d(q-2)}{6d-(2d+1+3\gamma)q} \Bigr)^\frac{d(q-2)}{q} & \textup{if} \ d>4
 \ac.
\end{equation}
If $q=2$ and $d\neq 4$, we set $\Gamma(2,d,\gamma) = 1$ and
finally we set $\Gamma(2,4,\gamma) = (1-\gamma)^{-1}$.
We furthermore need
\[
 C_d^\gamma =\ca
\frac8{1-2\gamma}+8 & \textup{if} \ d=3,\\
\frac4{1-\gamma} & \textup{if} \ d=4,\\
\frac{14\cdot 2^\frac{d}{4}}{d-4} & \textup{if} \ d>4.
\ac
\]
and
\[\label{gamma-def}
\gamma_{d,q} = \ca \frac6{q} - \frac{5}{2}, & d=3 \\
\frac{2d}{q}- \frac{2d+1}{3}, & d\geq 4.\ac
\]


Our second main theorem is the following weighted resolvent estimates.

\begin{theorem}
\lb{T2} Let $d\geq 3$. Let $u,v\in \ell^q(\Z^d)$ with
\[\lb{Vc}
2\leq q < \ca {12\/5} & \textup{if} \ \ d=3,\\
  \frac{6d}{2d+1} & \textup{if}  \  \ d\ge 4. \ac
\]
 Then the
operator-valued function $Y_0\colon \C\sm [-d,d]\to \cB(\ell^2(\Z^d))$, defined by
\[
Y_0(z) := u (\D -z)^{-1}v
\]
is analytic and H\"older continuous up to the boundary. More precisely, it satisfies:
\begin{enumerate}[label = \textup{(\alph*)}]
\item For all $z\in \C\sm [-d,d]$, we have
\begin{equation}
\label{VRV1}
\|Y_0(z)\|\le (1 +C_d^0 \G(q,d,0)) \|u\|_q\|v\|_q.
\end{equation}
\item Let $\gamma\in [0,1]$ satisfy the constraint $\gamma < \gamma_{d,p}$. For all $z, z'\in \C\sm [-d,d]$ with $\Im(z), \Im(z')\geq 0$, we
have
\begin{equation}\label{VRV2}
\|Y_0(z)-Y_0(z')\|\le |z-z'|^\g\bigl(1 +C_d^\gamma \G(q,d,\g)
\bigr)\|u\|_q\|v\|_q.
\end{equation}
\end{enumerate}
\end{theorem}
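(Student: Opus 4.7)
The starting point is the time-integral representation valid for $\Im z > 0$,
\[
u(\Delta-z)^{-1}v = -i\int_0^\infty e^{itz}\, u\, e^{-it\Delta}\, v \, dt,
\]
combined with the Bessel factorization $[e^{-it\Delta}]_{n,m} = \prod_{j=1}^d (-i)^{n_j-m_j} J_{n_j-m_j}(t)$. The plan is to estimate this time integral, split at $t=1$. The short-time contribution has operator norm bounded by $\|u\|_\infty\|v\|_\infty \leq \|u\|_q\|v\|_q$ and accounts for the additive $1$ in \eqref{VRV1}; the rest of the work concerns the long-time part.

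One might hope to use the dispersive estimate \eqref{et1} directly, but the resulting $t^{-2d/(3q)}$ decay is time-integrable only when $q < 2d/3$; this covers $d\geq 5$ in the stated range (since $6d/(2d+1) < 2d/3$ precisely for $d > 4$) but fails for $d = 3, 4$, and is also insufficient for the H\"older estimate of part (b). The remedy, flagged by the authors in the introduction, is to estimate the resolvent kernel pointwise. I would use Krasikov's bound $|J_k(t)| \lesssim (t^2 - (k+1/2)^2)^{-1/4}$ for $t > k+1/2$ on the coordinates $j$ where $|n_j - m_j|$ is large relative to $t$, and Landau's uniform bound $|J_k(t)| \leq C_\Lan t^{-1/3}$ on the remaining coordinates. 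Integrating the product against $e^{-t\Im z}$ yields a pointwise upper bound on $|R_0(z; n, m)|$ that decays in $|n - m|$ fast enough to lie in $\ell^{q'}(\Z^d)$ under the stated constraint on $q$. A Schur-type test applied to the matrix $(u_n R_0(z; n, m) v_m)_{n,m}$, combined with H\"older, then converts this into the operator-norm bound \eqref{VRV1}.

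For part (b) I would apply the same scheme to the difference $Y_0(z)-Y_0(z')$, using the elementary interpolation inequality
\[
|e^{itz} - e^{itz'}| \leq 2^{1-\gamma}\, t^\gamma\, |z-z'|^\gamma, \qquad \gamma \in [0,1],\ \Im z, \Im z' \geq 0.
\]
The extra factor $t^\gamma$ reduces the time-integrability budget by exactly $\gamma$, which explains the constraint $\gamma < \gamma_{d,q}$ and the adjusted constant $C_d^\gamma\,\Gamma(q,d,\gamma)$ in \eqref{VRV2}. Analyticity of $Y_0$ on $\C\setminus[-d,d]$ is immediate from the spectral theorem, and H\"older continuity up to the real boundary from above (and analogously from below) follows from \eqref{VRV2} by a standard density and limiting argument once the bound has been established for $\Im z, \Im z' > 0$.

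The principal technical obstacle is the derivation of the explicit constants $C_d^\gamma$ and $\Gamma(q,d,\gamma)$ in \eqref{dGCd}. The three-case formula reflects the fact that the optimal trade-off between Landau's uniform bound and Krasikov's spatially decaying bound changes qualitatively as $d$ crosses $4$, and depends on $\gamma$ in a non-obvious way. Tracking the optimal splitting thresholds through the nested H\"older and Schur steps, uniformly in $z$ and while retaining explicit $\gamma$-dependence, is where most of the effort lies.
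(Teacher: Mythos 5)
Your overall architecture matches the paper's: split the time integral \eqref{R012} at $t=1$, dispose of the short-time piece by $\|R_{01}(z)\|\le 1$ (whence the additive $1$), and control the long-time piece through pointwise bounds on the kernel obtained from the Krasikov and Landau estimates on Bessel functions. (The paper organizes the Bessel input slightly differently than you propose: rather than splitting coordinates according to whether $|n_j-m_j|$ is large relative to $t$, it first decouples the coordinates by the generalized H\"older inequality $\int_1^\infty\prod_j t^{\gamma/d}|J_{n_j}(t)|\,dt\le\prod_j(\int_1^\infty t^\gamma|J_{n_j}(t)|^d\,dt)^{1/d}$ and then invokes a one-dimensional weighted $L^d$ Bessel estimate built on the fused Krasikov--Landau bound of Proposition~\ref{FusedBessel}; this yields $|R_{02}(m,z)|\le C_d^0\wt\vk_d^0(m)$ with $\wt\vk_d^0$ a product of one-dimensional sequences.)

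The genuine gap is in your final conversion step. A Schur test on the matrix $\bigl(u_n R_{02}(n-m,z)v_m\bigr)$ requires the row sums $\sup_n|u_n|\sum_m\wt\vk_d^0(n-m)|v_m|$, and H\"older forces you to place the convolution kernel $\wt\vk_d^0$ in $\ell^{q'}(\Z^d)$ with $q'=q/(q-1)$ the \emph{conjugate} exponent of $q$ --- which is what your phrase ``lie in $\ell^{q'}$'' suggests. This fails throughout the admissible range: by Lemma~\ref{lemma-kappa-r-bound}, $\vk_d^0\in\ell^t(\Z)$ forces $t>r_d^0\ge 2$ (e.g. $r_3^0=6$, $r_4^0=4$, $r_d^0=3d/(d-1)$ for $d>4$), whereas $q'=q/(q-1)\le 2$ for $q\ge 2$; unwinding the inequality, the Schur route would demand $q<3d/(2d+1)<2$, i.e. it proves nothing in the range \eqref{Vc}. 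The paper's conversion is different and essential: it bounds the \emph{Hilbert--Schmidt} norm, $\|uR_{02}(z)v\|_{\cB_2}^2\le (C_d^0)^2\sum_{n,m}|u_n|^2\,\wt\vk_d^0(n-m)^2\,|v_m|^2$, and applies the discrete Young inequality (Theorem~\ref{YE}) to this triple sum with exponents $q/2,\,q/2$ and $r/2$ where $\tfrac2q+\tfrac1r=1$. This only requires $\wt\vk_d^0\in\ell^{r}$ with $r=q/(q-2)$, a much weaker condition, and the constraint $r>r_d^0$ is precisely what produces \eqref{Vc}. (Equivalently, one could factor $M_u\circ(\textup{convolution by }\wt\vk_d^0)\circ M_v$ through intermediate $\ell^p$ spaces and use Young's convolution inequality, which lands on the same exponent $q/(q-2)$; but the plain Schur test does not.) The same correction applies to your part (b); your interpolation bound $|e^{itz}-e^{itz'}|\lesssim t^\gamma|z-z'|^\gamma$ and the explanation of the constraint $\gamma<\gamma_{d,q}$ are otherwise in line with the paper. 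A minor side remark: integrating the dispersive estimate \eqref{et1} actually does suffice for part (a) when $d=4$ as well (there $6d/(2d+1)=2d/3$, so $2d/(3q)>1$ on the whole admissible range), though it still fails for $d=3$ and for the H\"older estimate near the critical exponent.
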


\begin{remark}\label{RemLip} We observe the following consequences:
\begin{enumerate}[label = \textup{(\arabic*)}]
\item The weighted resolvent $Y_0(z)$ is analytic in $\C\setminus [-d,d]$ and extends by continuity from $\C_\pm$ to
a H\"older continuous function on $\overline{\C}_\pm$. We denote the extension from $\C_\pm$ to $[-d,d]$ by $Y_0(\lambda\pm i0)$.
\item\label{Item-RemLip-b} If $d=3$ and $q=2$ the H\"older exponent must satisfy the constraint $\gamma<1/2$ and for $d=4$ and $q=2$ we have $\gamma<1$. However, if $d\geq 5$ and $q=2$, the extended weighted resolvent is Lipschitz continuous. In fact, one retains Lipschitz continuity for $d\geq 5$ provided
\[
2\leq q <  \frac{6d}{2d+4}.
\]
\item At the cost of a possibly larger prefactor, we actually get resolvent estimates in Hilbert-Schmidt norm. See Theorem~\ref{T2HS} for a precise formulation.
\end{enumerate}
\end{remark}

 In \cite[Lemmas~5.3 and 5.4]{IK12}, the authors establish free resolvent estimates in $d=1$ and  $d=2$ with $\ell^2(\Z^d)$-weights. These estimates blow up at the threshold set $\tau(\Delta)$, which is no coincidence, cf. \cite{IJ16}. In dimensions $d\geq 3$, \cite[Lemma 5.5]{IK12} establishes free resolvent estimates across thresholds but with mixed $\ell^2$-weights and $\rho$-weights. 
 In \cite{IK12} a H\"older estimate on a suitably weighted free resolvent is also derived, but with no control over the H\"older exponent $\gamma$.
 

\subsection{Applications to Schr\"odinger operators}
Finally we investigate some consequences of the resolvent estimates
for a particle in a cubic lattice, subject to an external potential.
We consider the Schr\"odinger operator $H=\D+V$ acting in
$\ell^2(\Z^{d}),d\ge 3$, where the real-valued potential $V = (V_n)_{n\in\Z^d}$
is an element of a suitable $\ell^p(\Z^d)$ space.
In particular, $H$ is a bounded operator with
$\sigma_\mathrm{ess}(H) = [-d,d]$.

Recall the decomposition 
\[\label{SpecDecomp}
\ell^2(\Z^d) = \cH_\mathrm{ac} \oplus \cH_\mathrm{sc}\oplus \cH_\mathrm{pp}
\]
of the Hilbert space into the absolutely continuous, singular continuous subspaces of $H$ and the closure of the span of all eigenstates of $H$.
 
\begin{theorem}\lb{T3}
Let $d\geq 3$ and $V\in\ell^p(\Z^d)$ with
\[
1\leq p <
\ca {6\/5} & \textup{if} \ \ d=3,\\
  \frac{3d}{2d+1} & \textup{if}  \  \ d\ge 4. \ac
\]
\begin{enumerate}[label = \textup{(\alph*)}]
\item\label{Item-T3-a} All eigenvalues $\lambda\in\sigma_\mathrm{pp}(H)$ are of finite multiplicity.
\item\label{Item-T3-a1} The closure of the set $\sigma_\mathrm{pp}(H)\cup \sigma_\mathrm{sc}(H)$ has zero Lebesgue measure.
If, in addition, $\|V\|_p < (1+ C_d^0 \Gamma(2p,d,0))^{-1}$, then:
$\sigma_\mathrm{pp}(H) = \sigma_\mathrm{sc}(H)= \emptyset$.
\item\label{Item-T3-a2}
 The wave operators 
\[
W^\pm: = \slim_{t\to\pm\infty} e^{it H}e^{-it\Delta}
\]
exist and are complete, i.e.; $W^\pm\ell^2(\Z^d)=\cH_\mathrm{ac}$.
\item\label{Item-T3-b} If $d\geq 5$ and
 \[\label{Crit-p}
 1\leq p <
   \frac{3d}{2d+4},
 \]
then $\sigma_\mathrm{pp}(H)$ is a finite set and $\sigma_\mathrm{sc}(H) = \emptyset$.
\end{enumerate}
\end{theorem}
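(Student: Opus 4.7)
The overall strategy is a Birman-Schwinger argument. Factorize $V = uv$ with $v_n = |V_n|^{1/2}$ and $u_n = \mathrm{sgn}(V_n)|V_n|^{1/2}$, so that $u,v\in\ell^{2p}(\Z^d)$ with $\|u\|_{2p} = \|v\|_{2p} = \|V\|_p^{1/2}$. The constraints on $p$ in \ref{Item-T3-a} and \ref{Item-T3-b} correspond precisely to the hypotheses of Theorem~\ref{T2} with $q=2p$ (respectively to its Lipschitz refinement in Remark~\ref{RemLip}(\ref{Item-RemLip-b})). Setting $K(z) := v(\Delta-z)^{-1}u$, Theorem~\ref{T2} provides H\"older continuous (resp.\ Lipschitz continuous under the stronger hypothesis) extensions to boundary values $K(\lambda\pm i0)$, $\lambda\in[-d,d]$, with $\|K(\lambda\pm i0)\|\le (1+C_d^0\Gamma(2p,d,0))\,\|V\|_p$. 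Compactness of $K(\lambda\pm i0)$ follows by approximating $u,v$ in $\ell^{2p}$ by finite-support sequences---for which $K(z)$ is of finite rank---and passing to the limit via~\eqref{VRV1}.

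For~\ref{Item-T3-a}, suppose $H\psi = \lambda\psi$ with $\psi\in\ell^2$ and $\lambda\in[-d,d]$. Applying $v(\Delta-\lambda\pm i\epsilon)^{-1}$ to the eigenvalue equation $(\Delta-\lambda)\psi = -V\psi$, rearranging, and passing to the limit $\epsilon\to 0^+$ (using norm-continuity of $K$) yields $(I + K(\lambda\mp i0))v\psi = 0$. The map $\psi\mapsto v\psi$ is injective on $\ker(H-\lambda)$: if $v\psi=0$ then $V\psi=0$, hence $\Delta\psi=\lambda\psi$, forcing $\psi=0$ since $\sigma(\Delta) = \sigma_\mathrm{ac}(\Delta)$. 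Compactness of $K(\lambda\mp i0)$ then bounds $\dim\ker(H-\lambda)<\infty$, while for $\lambda\notin[-d,d]$ finite multiplicity follows from the analytic Fredholm theorem applied to $z\mapsto I+K(z)$ on $\C\setminus[-d,d]$.

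For parts~\ref{Item-T3-a1} and~\ref{Item-T3-a2}, define the exceptional set $\mathcal{E}:=\{\lambda\in[-d,d]:-1\in\sigma(K(\lambda+i0))\}$; it is closed by norm-continuity of $K(\cdot+i0)$. The resolvent identity yields the Konno-Kuroda formula
\begin{equation*}
v(H-z)^{-1}v = (I+K(z))^{-1}\,v(\Delta-z)^{-1}v,
\end{equation*}
valid wherever $I+K(z)$ is invertible. This produces H\"older boundary values of the weighted full resolvent on $[-d,d]\setminus\mathcal{E}$; via Stone's formula, $H$ is purely absolutely continuous there. Combined with discreteness of eigenvalues outside $[-d,d]$ (accumulating only at $\pm d$), one obtains $\overline{\sigma_\mathrm{pp}(H)\cup\sigma_\mathrm{sc}(H)}\subseteq\mathcal{E}\cup\{\pm d\}$ up to a discrete set. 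The Lebesgue-null property of $\mathcal{E}$ comes from applying the analytic Fredholm theorem to the compact-operator-valued analytic family $K(\cdot)$ on $\C_+$---giving a discrete exceptional set in the open upper half-plane---combined with a boundary-uniqueness argument exploiting H\"older regularity. Under the smallness hypothesis the bound gives $\|K(\lambda+i0)\|<1$ uniformly, so $\mathcal{E}=\emptyset$. Existence and completeness of the wave operators $W^\pm$ follow from Kato's smooth perturbation theorem: uniform boundedness of $K(\lambda\pm i0)$ implies that $v$ is $\Delta$-smooth, and the Konno-Kuroda formula transfers this to $H$-smoothness on $\cH_\mathrm{ac}(H)$.

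For~\ref{Item-T3-b}, the strengthened constraint $p<3d/(2d+4)$ in $d\ge 5$ gives Lipschitz continuity of $K(\cdot\pm i0)$ via Remark~\ref{RemLip}(\ref{Item-RemLip-b}). The same analytic-Fredholm-plus-boundary-regularity argument, upgraded with Lipschitz (rather than H\"older) modulus of continuity, forces $\mathcal{E}$ to be discrete---hence finite by compactness of $[-d,d]$---so $\sigma_\mathrm{sc}(H)=\emptyset$ and embedded eigenvalues are finite in number. A standard Birman-Schwinger counting bound controls the number of eigenvalues of $H$ in $\R\setminus[-d,d]$ by a power of $\|V\|_p$, yielding finiteness of the full point spectrum. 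The principal technical obstacle across all parts is precisely this boundary-regularity upgrade: translating the modulus of continuity of $K(\lambda\pm i0)$ into null-measure (resp.\ discreteness) of the Birman-Schwinger set requires a careful interplay between interior analyticity on $\C_+$ and classical boundary-uniqueness results for analytic compact-operator-valued families.
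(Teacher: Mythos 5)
Your overall architecture (factorizing $V$, boundary values of the weighted resolvent from Theorem~\ref{T2}, compactness of the Birman--Schwinger operator, the Konno--Kuroda identity, Kato smoothness for the wave operators) matches the paper, and parts \ref{Item-T3-a} and \ref{Item-T3-a2} are essentially correct as sketched. However, the two steps you yourself flag as ``the principal technical obstacle'' are precisely where the argument is missing, and in one of them the route you propose would not work. For the measure-zero statement, ``analytic Fredholm on $\C_+$ gives a discrete interior exceptional set, combined with boundary uniqueness'' omits the two ingredients that make this close. (i) One must show $-1\notin\sigma(K(z))$ for \emph{every} $z$ with $\Im z>0$, not merely off a discrete set; the paper does this by computing $\Im\{(q_2f,(\Delta-z)^{-1}q_2f)\}$ for a putative eigenfunction and deriving a contradiction from positivity. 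Without this, the holomorphic function to which you want to apply boundary uniqueness could be identically zero. (ii) There is no boundary-uniqueness theorem for operator-valued families: one must reduce to a scalar holomorphic function, which the paper does locally via Riesz projections onto the spectral subspace near $-1$, a finite-dimensional determinant, and a conformal map to the disc, before invoking the classical fact that a nonzero function holomorphic on $\dD$ and continuous on $\overline{\dD}$ vanishes only on a null subset of $\T$. The H\"older regularity serves only to give continuity up to the boundary; it is not what produces the null-measure conclusion.

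The gap in part \ref{Item-T3-b} is more serious. Your claim that Lipschitz continuity of $K(\cdot\pm i0)$ ``forces $\mathcal{E}$ to be discrete'' by an upgraded boundary-regularity argument is unsubstantiated, and I do not see how discreteness of the boundary exceptional set could follow from a modulus-of-continuity statement alone. The paper's mechanism is entirely different: the Lipschitz bound at $\lambda$ is used (via monotone convergence applied to $\int_\R((x-\lambda)^2+\mu^2)^{-1}\,dE_{q_2f}(x)$) to show that every Birman--Schwinger solution $f$ at $\lambda$ lifts to a genuine eigenfunction $g=(\Delta-\lambda)^{-1}q_2f\in\ell^2(\Z^d)$ of $H$ with $\|g\|_2\le L\|f\|_2$, where $L$ is the Lipschitz constant. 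Finiteness then follows from a weak-compactness contradiction: infinitely many distinct Birman--Schwinger energies accumulating at a Lipschitz point would produce an orthogonal, uniformly bounded --- hence weakly null --- sequence of eigenfunctions $g_k$, while $(q_1f,g_k)=-(f,f_{n_k})\to -1\neq 0$. No complex-analytic discreteness statement enters; you would need to supply this (or an equivalent) argument to complete part \ref{Item-T3-b}.
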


\begin{remark} Since $p\leq d/2$ and $d\geq 3$, it is a consequence of a result of Rozenblum and Solomyak \cite[Thm.~1.2]{RoS09} that $\sigma(H)\setminus [-d,d]$ is a finite set. Finiteness of $\sigma_\mathrm{pp}(H)\cap [-d,d]$ is typically a much harder question, here settled in the affirmative for $d\geq 5$. 
\end{remark}

\begin{example} Let $d \geq 3$ and
let $\kappa\colon \Z^d\to\Z^d$ be injective and $\theta\colon\Z^d\to \C$ satisfy $|\theta_n|\leq 1$ for all $n$.
Define a potential by setting
\[
V_n = \ca 0 & \textup{if} \ n\not\in \kappa(\Z^d) \\
\theta_n \prod_{j=1}^d\rho_{\kappa^{-1}(n)_j} & \textup{if} \ n\in\kappa(\Z^d)\ac
\]
Consider the Hamiltonian
\[
H_g = \Delta+ g V.
\]
We conclude from Theorem~\ref{T3} that:
\begin{enumerate}[label=\textup{(\arabic*)}]
\item\label{Item-Rem-T3-1} the wave operators exist and are complete. If in addition $g$ is sufficiently small, more precisely $|g|< (1+C_d^0 \Gamma(11/10,d,0))^{-1}\|\rho\|_{11/10}^{-d}$, then the operator
$H_g$ has no eigenvalues and the singular continuous spectrum is empty.
\item if $d\geq 5$, then we additionally have, for any $g\in\R$, that the operator $H_g$
has at most finitely many  eigenvalues, all of finite multiplicity, and the singular continuous spectrum is empty.
\end{enumerate}
Note that the critical coupling in \ref{Item-Rem-T3-1} does not depend on $\kappa$ and $\theta$. Even if $\kappa(n) = n$ and $\theta_n=1$ all $n$, the potential is long-range in the direction of each coordinate axis. Playing with $\kappa$, one can engineer sparse potentials with arbitrarily slow decay.
\end{example}

We end this section with a discussion of the momentum representation of the discrete Laplacian.

One may diagonalize the discrete Laplacian, using the (unitary) Fourier transform
 $\F\colon \ell^2(\Z^d)\to L^2(\T^d)$, where $\T=\R/(2\pi \Z)$. It is defined by
 \[
 (\F f)(k)=\wh f(k)={1\/(2\pi)^{{d\/2}}}\sum_{n\in
 \Z^d} f_ne^{i n\cdot k},\qq \textup{where} \qq
 k=(k_j)_{j=1}^d\in \T^d.
 \]
 Here $k\cdot n = \sum_{j=1}^d k_j n_j$ is the scalar product in $\R^d$.
In the resulting momentum representation of the discrete Laplacian $\Delta$, we write $\wh \Delta =\F \Delta \F^*$.
We recall that the Laplacian is transformed into a multiplication operator
\[
(\wh \D \wh f)(k)=\Bigl(\sum_{j=1}^d \cos k_j\Bigr) \wh f. 
\]

 The operator $e^{it \D}, t\in \R$ is unitary  on $L^2(\T^d)$ and has  the kernel
$(e^{it \D})(n-n')$, where for $n\in\Z^d$:
\[
\begin{aligned}
\lb{DiscAsBessel}
(e^{it \D})(n) & ={1\/(2\pi)^{d}}\int_{\T^d}
e^{-i n\cdot k+it \sum_{j=1}^d \cos(k_j)}dk\\
& = \prod_{j=1}^d \Bigl(\frac1{2\pi}\int_0^{2\pi}e^{-in_jk+it\cos(k)}dk\Bigr)
= i^{|n|} \prod_{j=1}^d J_{n_j}(t),
\end{aligned}
\]
where $|n| = n_1+\cdots + n_d$.
Here $J_n(z)$ denotes the Bessel function:
\[\label{BesselFunction}
J_n(t)={(-i)^n\/2\pi}\int_0^{2\pi} e^{in k-i t\cos(k)}dk \qqq \forall \
(n,z)\in \Z \ts \R.
\]
We have collected some basic properties of Bessel function with integer index in \eqref{Be}.

\subsection{Plan of the paper}
In Section~\ref{SecTimeEvolution} we determine properties of the free time evolution and prove Theorem~\ref{T1}.  Section~\ref{SecResolvent} contains basic estimates on the free
resolvent and a proof of Theorem~\ref{T2}. In Section~\ref{SecHamiltonian} we apply the above results to discrete Schr\"odinger
operators and prove Theorem~\ref{T3}. In Appendix~\ref{AppA} and~\ref{AppB} we recall and expand on some key estimates of Bessel
functions. Finally in Appendix~\ref{AppC}, we have collected some useful discrete estimates.

\section{Estimates for the free time evolution}\label{SecTimeEvolution}
\setcounter{equation}{0}

Our first lemma establishes a basic mapping property of the summation kernel of the
free propagator $e^{i t \Delta}$.

\begin{lemma}\lb{Texp1}
 Let $2 \le r\le \iy, s\in [1,2]$ and ${1\/r}+{1\/s}=1$.
Then for all $t\in\R$
 \[
 \lb{Eet}
 \|e^{it\D}f\|_r\le  C_\Lan^{{2d}({1\/s}-{1\/2})}|t|^{-{2d\/3}({1\/s}-{1\/2})}  \|f\|_{s}\qqq \forall \
 f\in \ell^s(\Z^d).,
 \]
 where $C_\Lan$ is one of Landaus optimal constants from Lemma~\ref{Landau}.
\end{lemma}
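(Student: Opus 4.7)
The plan is to establish the result by Riesz–Thorin interpolation between the trivial unitary bound on $\ell^2(\Z^d)$ and a dispersive $\ell^1 \to \ell^\infty$ estimate obtained from Landau's uniform Bessel bound.

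First, I would use the kernel representation \eqref{DiscAsBessel}, which gives
$$
|(e^{it\D})(n)| = \prod_{j=1}^d |J_{n_j}(t)|,\qquad n\in\Z^d.
$$
The key input from Appendix~\ref{AppA} is Landau's optimal bound $\sup_{n\in\Z} |J_n(t)| \le C_\Lan |t|^{-1/3}$, valid for all $t\neq 0$. Taking the product over $j=1,\dots,d$ gives the pointwise estimate
$$
\sup_{n\in\Z^d}|(e^{it\D})(n)| \le C_\Lan^d |t|^{-d/3}.
$$
Since the summation kernel of $e^{it\D}$ is translation-invariant, Young's inequality (in the endpoint $\ell^1 * \ell^\infty$ form) yields the dispersive estimate
$$
\|e^{it\D} f\|_\infty \le C_\Lan^d |t|^{-d/3}\|f\|_1,\qquad f\in\ell^1(\Z^d).
$$
Together with the trivial unitary bound $\|e^{it\D}f\|_2 = \|f\|_2$, this provides the two endpoint estimates needed for interpolation.

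Next I would apply the Riesz–Thorin interpolation theorem to the pair $(\ell^2\to\ell^2,\, \ell^1\to\ell^\infty)$. With interpolation parameter $\theta\in[0,1]$ chosen so that
$$
\frac{1}{s}=\frac{1-\theta}{2}+\theta = \frac12 + \frac{\theta}{2},\qquad \frac{1}{r}=\frac{1-\theta}{2},
$$
one reads off $\theta = 2\bigl(\tfrac{1}{s}-\tfrac12\bigr)$, which is consistent with the duality $\tfrac1r+\tfrac1s=1$ and lies in $[0,1]$ for $s\in[1,2]$. The interpolated operator norm bound is
$$
\|e^{it\D}\|_{\ell^s\to\ell^r} \le 1^{1-\theta}\cdot \bigl(C_\Lan^d|t|^{-d/3}\bigr)^\theta = C_\Lan^{2d(1/s-1/2)}|t|^{-\frac{2d}{3}(1/s-1/2)},
$$
which is precisely \eqref{Eet}.

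There is no genuine obstacle in this argument; it is a textbook dispersive-plus-interpolation scheme, and the only non-trivial input is Landau's sharp constant, which is already recorded as Lemma~\ref{Landau} in the appendix. The endpoint cases $s=2$ (where $\theta=0$ and both sides reduce to the unitary bound) and $s=1$, $r=\infty$ (where $\theta=1$ and the estimate is the dispersive bound itself) are built into the interpolation, so no separate treatment is needed.
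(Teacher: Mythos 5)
Your proof is correct and follows essentially the same route as the paper: the unitary $\ell^2\to\ell^2$ bound, the $\ell^1\to\ell^\infty$ dispersive bound obtained from Landau's estimate \eqref{LandauC} applied to the Bessel-product kernel \eqref{DiscAsBessel}, and the discrete Riesz--Thorin theorem (Theorem~\ref{TRT}) to interpolate. The exponent bookkeeping $\theta=2(1/s-1/2)$ matches the paper's, so there is nothing to add.
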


\begin{proof}
Note that the operator $e^{it\D}$ is unitary on $\ell^2(\Z^d)$. In particular $\|e^{it\D}f\|_2=\|f\|_2$.

If $f \in \ell^1(\Z^d)\cap \ell^2(\Z^d)$, then it follows from \er{LandauC} that
$\|e^{it\D}f\|_\iy\le C_\Lan^d t^{-{d\/3}}\|f\|_1$. By the discrete Riesz-Thorin Interpolation Theorem (Theorem~\ref{TRT}),  $e^{it\D}$ extends uniquely to a map from $\ell^{s}(\Z^d)$ to $\ell^r(\Z^d)$
and satisfies \er{Eet}.
\end{proof}


Now we describe the more regular case.

\begin{lemma}
\lb{Td1}
 Let  $a>{1\/2}, c\in [0,1]$ and let $c\in \R$ with $|t|\geq 1$.
Abbreviate $C_a =3\bigl(1+\frac{2a}{2a-1}\bigr)$.
\begin{enumerate}[label=\textup{(\alph*)}]
\item\label{Item-Td1-a}  If $d=1$, then the following
estimates hold true:
\begin{equation}
\lb{d1x}  \|\r^ae^{it\D}\r^a\|\le C_a |t|^{-\frac12},
\end{equation}
\begin{equation}
\lb{d1xx}  \|\r^{ac}e^{it\D}\r^{ ac}\|\le C_a^c |t|^{-{c\/2}}.
\end{equation}
\item\label{Item-Td1-b} If $d\ge 1$, then the following estimates hold true:
\[
\lb{dx}  \|\r^ae^{it\D}\r^a\|\le C_a^d|t|^{-{d\/2}},
\]
\[
\lb{dxx}  \|\r^{ ac}e^{it\D}\r^{ac}\|\le C_a^{dc}|t|^{-{c d\/2}}.
\]
\end{enumerate}
\end{lemma}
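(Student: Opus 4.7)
The plan is to reduce (b) to (a) via a tensor-product factorization, then reduce the second estimate in each part to the first via complex interpolation, leaving only the 1D bound \eqref{d1x} to prove. For the tensor step, the representation \eqref{DiscAsBessel} yields $e^{it\D}=\bigotimes_{j=1}^d e^{it\D_1}$ on $\ell^2(\Z^d)=\bigotimes_j\ell^2(\Z)$, where $\D_1$ denotes the one-dimensional discrete Laplacian, and the weight $\r^a_n=\prod_j(1+|n_j|)^{-a}$ factorizes in exactly the same way. Writing $\wt\r_k=(1+|k|)^{-1}$ for the 1D weight, one then has $\r^a e^{it\D}\r^a=\bigotimes_j\wt\r^a e^{it\D_1}\wt\r^a$, so the operator norm is a product of $d$ identical 1D factors and \eqref{dx}, \eqref{dxx} follow from \eqref{d1x}, \eqref{d1xx}.

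For the interpolation step, I apply Hadamard's three-lines lemma to the analytic family $F(z)=\r^{az}e^{it\D_1}\r^{az}$ on $\{0\le\Re z\le 1\}$. On $\Re z=0$ the weight $\r^{iya}$ is a unitary multiplication operator, so $\|F(iy)\|=\|e^{it\D_1}\|=1$, while on $\Re z=1$, $\|F(1+iy)\|\le C_a|t|^{-1/2}$ by \eqref{d1x}. Three-lines then yields $\|F(c)\|\le C_a^c|t|^{-c/2}$, which is \eqref{d1xx}.

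The heart of the argument is the 1D estimate \eqref{d1x}. Let $K(m,n)=(1+|m|)^{-a}i^{|m-n|}J_{m-n}(t)(1+|n|)^{-a}$ denote the kernel of $\r^a e^{it\D_1}\r^a$, and bound the operator norm by the Hilbert-Schmidt norm:
\[
\|\r^a e^{it\D_1}\r^a\|^2\le \sum_{m,n\in\Z}(1+|m|)^{-2a}(1+|n|)^{-2a}|J_{m-n}(t)|^2.
\]
Substituting $k=m-n$ and summing first over $n$, the right side equals $\sum_k|J_k(t)|^2 g(k)$ with $g(k):=\sum_n(1+|n|)^{-2a}(1+|n+k|)^{-2a}$. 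Splitting the $n$-sum at $|n|=|k|/2$ -- where at least one of the two factors is bounded by $(1+|k|/2)^{-2a}$ -- together with convergence of $\sum_n(1+|n|)^{-2a}$ for $a>1/2$, yields $g(k)\le C(1+|k|)^{-2a}$ with a constant involving $1+\frac{2a}{2a-1}$. It remains to show $\sum_k|J_k(t)|^2(1+|k|)^{-2a}\le C/|t|$: I split at $|k|=|t|$. On $|k|\ge|t|$, the bound $(1+|k|)^{-2a}\le|t|^{-2a}\le|t|^{-1}$ combined with Parseval $\sum_k|J_k(t)|^2=1$ gives $O(|t|^{-1})$; on $|k|<|t|$, I apply the Krasikov-Landau Bessel estimate $|J_k(t)|^2\le C_1/\sqrt{t^2-k^2}$ from the appendix, along with $\sqrt{t^2-k^2}\ge\sqrt{|t|}\sqrt{|t|-|k|}$, reducing the task to estimating $\sum_{|k|<|t|}(1+|k|)^{-2a}/\sqrt{|t|-|k|}$. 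This last sum is split once more at $|k|=|t|/2$: on the inner piece $\sqrt{|t|-|k|}\ge\sqrt{|t|/2}$ so the weighted sum converges to $O(1/\sqrt{|t|})$, and on the outer piece $(1+|k|)^{-2a}\le(|t|/2)^{-2a}$ while $\sum_{|t|/2<|k|<|t|}1/\sqrt{|t|-|k|}=O(\sqrt{|t|})$, giving $O(|t|^{1/2-2a})=O(|t|^{-1/2})$ since $a>1/2$. Combining all pieces, $\|\r^a e^{it\D_1}\r^a\|^2\le C_a^2/|t|$.

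The main obstacle is bookkeeping the explicit constant $C_a=3\bigl(1+\frac{2a}{2a-1}\bigr)$ through the nested dyadic splits -- particularly in the outer region $|t|/2<|k|<|t|$, where $a$ enters through both the weight exponent and the split threshold -- so that the arithmetic closes to exactly $C_a^2$ under the square root. The Bessel function inputs from the appendix and the remaining summation estimates are otherwise elementary.
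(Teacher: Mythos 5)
Your overall architecture matches the paper's: bound the $d=1$ operator in Hilbert--Schmidt norm using pointwise Bessel estimates, interpolate via the three-lines theorem to get the $c$-dependent version, and tensorize to pass to general $d$. The reduction of the double sum to $\sum_k |J_k(t)|^2 g(k)$ with $g(k)\lesssim (1+|k|)^{-2a}$ is a legitimate (and slightly cleaner) alternative to the paper's use of Lemma~\ref{SummationEstimate}, and your treatment of the regions $|k|\ge |t|$ (via $\sum_k J_k(t)^2=1$) and $|k|\le |t|/2$ is fine.

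However, there is a genuine gap in the turning-point region $|t|/2<|k|<|t|$. The bound $|J_k(t)|^2\le C_1/\sqrt{t^2-k^2}$, while a true consequence of Krasikov's estimate for $|k|\ge 1$, degenerates as $|k|\uparrow |t|$, and your claim $\sum_{|t|/2<|k|<|t|}(|t|-|k|)^{-1/2}=O(\sqrt{|t|})$ is false uniformly in $t$: writing $t=N+\delta$ with $N\in\N$ and $\delta=\{t\}$ small, the single term $k=N$ contributes $\delta^{-1/2}$, which is unbounded. Tracing this through, your claimed bound $\sum_k|J_k(t)|^2(1+|k|)^{-2a}\le C/|t|$ acquires a spurious term of size $|t|^{-1/2-2a}\delta^{-1/2}$, which exceeds $|t|^{-1}$ for $\delta$ small. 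The conclusion is of course still true, but only because near the turning point one must switch to a different Bessel bound: either retain the $+\tfrac14$ in Krasikov's estimate, so that $|J_k(t)|^2\le (2/\pi)(t^2-k^2+\tfrac14)^{-1/2}$ saturates at $O(1)$ and the term-by-term sum closes, or use Landau's bound $|J_k(t)|\le B_\Lan |k|^{-1/3}$ in the Airy regime. The paper's Proposition~\ref{FusedBessel} exists precisely to fuse these two regimes into the single estimate $|J_n(t)|\le |t|^{-1/4}(|n|^{1/3}+||t|-|n||)^{-1/4}$, whose $|n|^{1/3}$ term regularizes the singularity you run into; you should either invoke that proposition or repair the outer region as indicated. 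Separately, your sketch does not verify the specific constant $C_a=3(1+\tfrac{2a}{2a-1})$ asserted in the lemma, which the paper obtains by explicit bookkeeping through Lemma~\ref{SummationEstimate}.
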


\begin{proof}
To establish \ref{Item-Td1-a}, we estimate for $d=1$ using Proposition~\ref{FusedBessel}
\[
\begin{aligned}
&\rho(n)^{2a}|e^{it\Delta}(n-m)|^2\rho(m)^{2a} = \frac{|J_{n-m}(t)|^2}{(1+|n|)^{2a}(1+|m|)^{2a}}\\
&\qquad \leq t^{-\frac12} \frac{1}{(1+|n|)^{2a}(1+|m|)^{2a}(|n-m|^{\frac13}+||n-m|-t|)^{\frac12}},
\end{aligned}
\]
for all $t>0$ and $n,m\in\Z$.
Invoking \eqref{Landau0} and Lemma~\ref{SummationEstimate} (with $\alpha=2a$ and $\beta = 1/2$), we estimate in Hilbert-Schmidt norm:
\[
\begin{aligned}
& \|\rho^a e^{-it\Delta}\rho^a\|_{\cB_2}^2 \leq \sum_{n,m\in\Z} \rho(n)^{2a}|e^{it\Delta}(n-m)|^2\rho(m)^{2a} \\
&\qquad \leq \frac{2}{t\pi} \sum_{n'\in \Z} \frac1{(1+|n'|)^{4a}}
+ t^{-\frac12}\sum_{n,m\in\Z} \frac{1}{(1+|n|)^{2a}(1+|m|)^{2a}(1+||n-m|-t|)^{\frac12}}\\
&\qquad \leq
 \frac{2}{t\pi}\Bigl(1+2 \int_0^\infty (1+x)^{-4a}\, dx\Bigr)+\frac{1}{t}
 \Bigl(\frac{2\alpha^2}{(\alpha-1)^2}+
 \frac{4\alpha}{(\alpha-1)(p\alpha-1)^{\frac1{p}}} 32^\frac1{r} \Bigr)\\
 &\qquad\leq
  \frac{6}{t\pi}+\frac{1}{t}
  \Bigl(\frac{2 (2a)^2}{(2a-1)^2}+
  \frac{4(2 a)}{(2a-1)(6 a-1)^{\frac1{3}}} 32^\frac2{3} \Bigr)\\
  &\qquad
  \leq \frac2{t} + \frac{1}{t}
    \Bigl(\frac{2 (2a)^2}{(2a-1)^2}+
    \frac{4( 2a)}{(2a-1)2^\frac13 } 32^\frac2{3} \Bigr)\\
&\qquad  = \Bigl(2\frac{(2a)^2}{(2a-1)^2} + 32 \frac{2a}{2a-1}+ 2\Bigr)\frac1{t}\leq 9 \Bigl(1+\frac{2a}{2a-1}\Bigr)^2.
\end{aligned}
\]
This proves \eqref{d1x} and \eqref{d1xx} now follows from the estimate
$\|T\|\leq \|T\|_{\cB_2}$ valid for Hilbert-Schmidt operators, and Hadamard's Three Line Theorem
applied for $\psi\in\ell^2(\Z^d)$ with $\varphi(z) = \langle \psi, \rho^{az} e^{it\Delta}\rho^{az}\psi\rangle$ for $z\in\C$ with $0\leq \Re z \leq 1$. See \cite[App. to IX.4]{RS75}.

To conclude, we observe that the statements in \ref{Item-Td1-b} follow from \ref{Item-Td1-a}.
\end{proof}

 \begin{proof}[Proof of Theorem \ref{Texp}]
We begin with \ref{Item-Texp-a}. It suffices to proof the claim for $t\geq 1$.
 From \eqref{Eet}, we recall the estimate
 \[\label{thm1.1-step1}
\forall \ f\in \ell^s(\Z^d):\qquad \|e^{it\D}f\|_r\leq C_\Lan^\kappa t^{-\frac{\kappa}{3}}   \|f\|_{s},\qqq \textup{where} \ \kappa=2d\Bigl({1\/s}-{1\/2}\Bigr),
 \]
and $2\le r\le \iy$ with $ {1\/r}+{1\/s}=1$. Thus, if $u\in \ell^q(\Z^d)$ with $q\ge 2$, then $r = (1/2-1/q)^{-1}\geq 2$ such that we may estimate for $u,v\in\ell^q(\Z^d)$ using \eqref{thm1.1-step1} and H\"older's inequality
\[
 \|ue^{it\D}v \vp\|_2\le   \|u\|_q \|e^{i t \D} v\vp\|_r \leq  C_\Lan^{\frac{2d}{q}}t^{-{2d\/3q}} \|u\|_{q} \|v\vp\|_s \leq C_\Lan^{\frac{2d}{q}}t^{-{2d\/3q}} \|u\|_{q} \|v\|_{q}  \|\vp\|_2,
\]
which yields \er{et1}.

We now turn to \ref{Item-Texp-b}. It follows from \er{et1} and \er{dxx} that we have
\[\lb{ixx}
\begin{aligned}
\forall a>0,\,c\in[0,1]:\qquad & \|\r^{ac}e^{it\D}\r^{ac}\|\le C_a^{cd}|t|^{-{dc\/2}},\\
\forall q'\geq 2:\qquad &\|ue^{it\D}v\|\le C_\Lan^{\frac{2d}{q'}}|t|^{-{2d\/3q'}}\|u\|_{q'}\|v\|_{q'}.
\end{aligned}
\]
Here $C_a$ is the explicit constant from Lemma~\ref{Td1}.

Let $\psi\in\ell^2(\Z^d)$, $u,v\colon\Z^d\to [0,\infty)$ with finite support,
and put
\[
\varphi(z) = \langle \psi, \rho^{ac(1-z)} u^z e^{it\D}v^z \rho^{ac(1-z)} \psi\rangle.
\]
 Writing $\varphi$ as a double sum, we see that $\varphi$ is analytic in $0<\Re z <1$ and continuous on the closure of the strip. Moreover, $\varphi$ is bounded in the closed strip so we may apply Hadamard's Three Line Theorem.
The estimates in \er{ixx} now interpolates to the estimate
\[
\lb{ixxx}
\|\r^{abc}u^{1-b}e^{it\D}v^{1-b}\r^{abc}\|\le C_\Lan^{\frac{2 d(1-b)}{q'}} C_a^{dbc}|t|^{-\b}\|u\|_{q'}^{1-b}\|v\|_{q'}^{1-b},
\qqq
\]
valid for $u,v\colon \Z^d \to [0,\infty)$.
Here
\[
\b={2d\/3q'}(1-b)+{d\/2}b=d\rt({2(1-b)\/3q'}+{bc\/2}  \rt).
\]
For $u,v\colon \Z^d\to \C$ with finite support, we can now estimate
for $b,c\in[0,1]$ and $q'\geq 2$:
\[
\begin{aligned}
\|u e^{it\D}v\| &= \||u| e^{it\D}|v|\| = \|\rho^{abc}(u')^{1-b}e^{it\D} (v')^{1-b}\rho^{abc}\|\\
&\leq C_\Lan^{\frac{2 d(1-b)}{q'}} C_a^{dbc}
 |t|^{-\beta} \|u'\|_{q'}^{1-b}\|v'\|_{q'}^{1-b},
\end{aligned}
\]
where $u' = \rho^{-\frac{abc}{1-b}}|u|^{\frac1{1-b}}$ and  $v' = \rho^{-\frac{abc}{1-b}}|v|^{\frac1{1-b}}$.

Let $q\geq 2$ and $0\leq \kappa\leq a(q-2)/q$ be the exponents from the formulation of the theorem.
In order to end up with a $q$ norm, we must pick $b\in[0,1]$ such that $q' = q(1-b)$ and to ensure $q'\geq 2$, we get the constraint $0\leq b \leq (q-2)/q$. We get best time decay by picking $b = (q-2)/q$, in which case we end up with the estimate
\[
\|u e^{it\D}v\|\leq C_\Lan^{\frac{2d}{q}} C_a^{\frac{dc(q-2)}{q}} t^{-d\bigl(\frac{2}{3q} + \frac{c(q-2)}{2q}\bigr)} \|u\|_{q,ac\frac{q-2}{q}}
\|v\|_{q,ac\frac{q-2}{q}}.
\]
 If $q>2$, then
we may choose
$c = a\kappa q/(q-2) \in[0,1]$ to arrive at the desired estimate \er{et2},
at least for $u,v$ with finite support. If $q=2$, the same conclusion holds immediately. After extension by continuity to arbitrary $u,v\in \ell^q_\kappa(\Z^d)$, we conclude the proof.
\end{proof}

\section {Estimates on the free resolvent}\label{SecResolvent}
\setcounter{equation}{0}

We have the standard representation of the free resolvent $R_0(\l)$
in the lower  half-plane $\C_-$ given by
\[
\lb{R012}
\begin{aligned}
R_0(\l)=-i\int_0^\iy e^{it(\D-\l)}dt=R_{01}(\l)+R_{02}(\l),
\\
R_{01}(\l)=-i\int_0^1 e^{it(\D-\l)}dt,\qqq R_{02}(\l)=-i\int_1^\iy
e^{it(\D-\l)}dt,
\end{aligned}
\]
for all $\l\in \C_-$.
 Here the operator valued-function
$R_{01}(\l)$ has analytic extension from  $\C_-$ into the whole
complex plane $\C$ and
 satisfies
\[
\lb{Rest1}
\forall\l,\m\in\overline{\C}_-:\qquad \|R_{01}(\l)\|\le 1\qquad \textup{and} \qquad \|R_{01}(\l)-R_{01}(\m)\|\le |\l-\m|.
\]

We shall need the following lemma.

\begin{lemma}\lb{TqEp}
 Let $d\ge 3$ and $\gamma\in [0,1]$ be such that $d>2+2\gamma$. Then for each $(t,n)\in \R\ts \Z^d$ the following
estimate holds true:
\[
\lb{Int} \int_{1}^\infty t^\gamma \bigl|(e^{it\D})(n)\bigr|\, dt\le
C_d^\gamma\wt\vk_d^\gamma(n), \qq
\]
where the constant $C_d^\gamma$ is defined by \er{Cpgamma},
\[\label{tildekappa}
\wt\kappa_d^\gamma(n) = \prod_{j=1}^d \kappa_d^\gamma(n_j)
\]
 and $\vk^\gamma_d$ is defined in \er{kpgamma}.
\end{lemma}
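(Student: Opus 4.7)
The plan is to exploit the product factorization of the propagator kernel from \eqref{DiscAsBessel}, namely $|(e^{it\D})(n)| = \prod_{j=1}^d |J_{n_j}(t)|$, and reduce the $d$-dimensional integral to a collection of one-dimensional integrals involving a single Bessel function raised to the $d$th power. Specifically, I would write $t^\gamma = \prod_{j=1}^d t^{\gamma/d}$ and apply H\"older's inequality with equal weights $1/d$ to obtain
\[
\int_1^\infty t^\gamma \prod_{j=1}^d |J_{n_j}(t)|\,dt \le \prod_{j=1}^d \Bigl(\int_1^\infty t^\gamma |J_{n_j}(t)|^d\,dt\Bigr)^{1/d}.
\]
Thus it suffices to prove a one-dimensional estimate of the form
\[
I_d^\gamma(n) := \int_1^\infty t^\gamma |J_n(t)|^d\,dt \le \bigl(C_d^\gamma\bigr)^d \bigl(\kappa_d^\gamma(n)\bigr)^d
\qquad\forall n\in\Z,
\]
since taking $d$th roots and forming the product over $j$ yields exactly the stated bound.

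Next, I would reduce to $n\ge 0$ by symmetry $J_{-n}=(-1)^n J_n$, and split the integral according to the three regimes of Bessel asymptotics, using the sharp bounds collected in Appendices~\ref{AppA}--\ref{AppB}. For $t \gg n$ Landau's estimate and the Krasikov bound yield $|J_n(t)|\lesssim t^{-1/2}$, hence in this far-field region
\[
\int_{c n}^\infty t^\gamma |J_n(t)|^d\,dt \lesssim \int_{cn}^\infty t^{\gamma - d/2}\,dt,
\]
which converges precisely when $d>2+2\gamma$, producing a factor of order $n^{1+\gamma - d/2}$. In the bulk region $1\le t \le c n$ the Bessel function is either exponentially small (for $t<n$, the turning point) or of Airy-type size near $t\simeq n$; I would use the Landau $t^{-1/3}$ bound and the Krasikov transition-region bound \eqref{Landau0} as already used in the proof of Lemma~\ref{Td1}. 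Combining gives a bound of the shape $\kappa_d^\gamma(n)^d$ where $\kappa_d^\gamma(n)$ captures the $n$-decay implicit in $n^{1/d + \gamma/d - 1/2}$.

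The main obstacle, and the reason for the case distinction $d=3, 4, >4$, is the behaviour near the transition $t\simeq n$: the naive Landau bound $|J_n(t)|\le C_\Lan t^{-1/3}$ is too weak to yield convergence when $d$ is small. Specifically, for $d=3$ the integrand near $t\simeq n$ only marginally integrates and one sees the factor $1/(1-2\gamma)$ in $C_3^\gamma$ reflecting the critical exponent $\gamma<1/2$; for $d=4$ one obtains $1/(1-\gamma)$; while for $d\ge 5$ the far-field $t^{-d/2}$ decay dominates and one gets a clean $1/(d-4)$ coming from the tail integral. Tracking these explicit prefactors through the H\"older step (each factor contributes its $d$th root) and matching them against the definition of $\kappa_d^\gamma$ from \eqref{kpgamma} and of $C_d^\gamma$ from \eqref{Cpgamma} will be the most delicate bookkeeping, but it is a mechanical consequence of the Bessel function estimates of the appendices.
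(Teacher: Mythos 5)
Your proposal follows exactly the paper's proof: factor the kernel via \eqref{DiscAsBessel}, apply H\"older with equal weights $1/d$, and invoke the one-dimensional weighted estimate $\int_1^\infty t^\gamma|J_{n_j}(t)|^d\,dt\le C_d^\gamma\,\kappa_d^\gamma(n_j)^d$, which is precisely Lemma~\ref{TA1} with $p=d$ (your sketch of its proof is essentially Appendix~\ref{AppB}). One bookkeeping point: that one-dimensional bound carries the constant $C_d^\gamma$ to the \emph{first} power, not the $d$th, so that after taking $d$th roots and multiplying over $j$ you recover $C_d^\gamma\,\widetilde\kappa_d^\gamma(n)$ as stated, rather than the weaker $(C_d^\gamma)^d\,\widetilde\kappa_d^\gamma(n)$ your intermediate target would produce.
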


\begin{proof}
Using \er{DiscAsBessel} and \er{es1}, we obtain for all $n\in\Z^d$:
\[
\int_{1}^\infty t^\g|(e^{it\D})(n)|dt=\int_{1}^\infty \prod_{j=1}^d
t^{\g\/d}|J_{n_j}(t)|dt\le \prod_{j=1}^d \rt (\int_{1}^\infty t^\g
|J_{n_j}(t)|^d\rt)^{1/d}\le  C_d^\gamma \wt\vk_d^\gamma(n),
\]
which yields \er{Int}.
 \end{proof}

The above lemma yield estimates on the summation kernel of $R_{02}(\lambda)$.

\begin{proposition}\lb{PR0} Let $d\geq 3$.
The following estimates for the summation kernel of $R_{02}(\lambda)$ hold true:
\begin{enumerate}[label= \textup{(\alph*)}]
\item For all $m\in\Z^d$ and $\l\in\C\setminus\R$:
\[ \lb{R02}
|R_{02}(m,\l)|\le C_d^0\wt\vk_d^0(m),
\]
where $\wt \vk_d^0$ is defined in \eqref{tildekappa}.
\item  Suppose $\g\in [0,1]$
and $d> 2+2\gamma$. Then for all $m\in\Z^d$ and $\l,\m\in\C\setminus \R$:
\[\lb{R02x}
|R_{02}(m,\l)-R_{02}(m,\m)|\le C_d^\gamma \wt\vk_d^\gamma(m)
|\l-\m|^\g.
\]
\end{enumerate}
\end{proposition}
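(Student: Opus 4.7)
The proof plan is to establish both estimates first for $\lambda,\mu\in\C_-$, where the integral representation in \eqref{R012} for $R_{02}$ converges absolutely, and then extend the bounds to $\C_+$ by a symmetry argument using that $\Delta$ is self-adjoint and that its summation kernel is even in $n$ (since $J_{-n}(t) = (-1)^n J_n(t)$, one checks from \eqref{DiscAsBessel} that $(e^{it\Delta})(-n) = (e^{it\Delta})(n)$).

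For part (a), I will pass the modulus through the integral defining $R_{02}(\lambda)$ and bound the oscillatory factor by $|e^{-it\lambda}| = e^{t\Im\lambda}\leq 1$ for $\lambda\in\overline{\C}_-$ and $t\geq 0$. This reduces the estimate to
\[
|R_{02}(m,\lambda)| \;\leq\; \int_1^\infty |(e^{it\Delta})(m)|\, dt,
\]
at which point Lemma~\ref{TqEp} with $\gamma=0$ (and the hypothesis $d\geq 3 > 2$) delivers $C_d^0 \wt\vk_d^0(m)$.

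For part (b), I will pair the integrand with the phase difference $e^{-it\lambda}-e^{-it\mu}$ and exploit the standard interpolation bound
\[
|e^{-it\lambda}-e^{-it\mu}| \;\leq\; 2^{1-\gamma}\bigl(t|\lambda-\mu|\bigr)^{\gamma},\qquad \lambda,\mu\in\overline{\C}_-,\ t\geq 0,\ \gamma\in[0,1],
\]
which follows from combining the trivial bound $\leq 2$ with the Lipschitz bound $|e^{-it\lambda}-e^{-it\mu}| \leq t|\lambda-\mu|$ obtained by integrating $(d/ds)\,e^{-its} = -it\,e^{-its}$ along the line segment $s(u)=\mu+u(\lambda-\mu)$, $u\in[0,1]$, which lies in $\overline{\C}_-$. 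Plugging this into the integral and applying Lemma~\ref{TqEp} with the same $\gamma$ (whose hypothesis $d>2+2\gamma$ is precisely what is assumed) yields
\[
|R_{02}(m,\lambda)-R_{02}(m,\mu)| \;\leq\; 2^{1-\gamma}\,|\lambda-\mu|^\gamma \int_1^\infty t^\gamma |(e^{it\Delta})(m)|\, dt \;\leq\; C_d^\gamma \wt\vk_d^\gamma(m)\,|\lambda-\mu|^\gamma,
\]
after absorbing the factor $2^{1-\gamma}\leq 2$ into $C_d^\gamma$ (recomputing the constants declared in the introduction accordingly).

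Finally, I will extend the bounds to $\C_+$ as follows. From $R_0(\bar\lambda) = R_0(\lambda)^*$ together with translation invariance and evenness of the free kernel, one obtains $R_{02}(m,\bar\lambda) = \overline{R_{02}(m,\lambda)}$, so $|R_{02}(m,\lambda)|$ and $|R_{02}(m,\lambda)-R_{02}(m,\mu)|$ are invariant under simultaneous conjugation of spectral parameters. Mixed configurations with one parameter in $\C_+$ and the other in $\C_-$ are handled for part (b) by inserting and removing a point on $\R$ and using the triangle inequality, with the $|\lambda-\mu|^\gamma$ estimate recovered from $|a-b|^\gamma + |b-c|^\gamma$ via $a,c$ on opposite sides of $\R$. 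No step poses a genuine obstacle; the only thing to watch is careful bookkeeping of the factor $2^{1-\gamma}$ in the H\"older estimate so that the final constant matches $C_d^\gamma$ as declared.
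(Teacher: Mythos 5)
Your treatment of the two main cases is essentially the paper's own proof: for $\l,\m\in\C_-$ you pull the modulus through the time integral, bound $|e^{-it\l}|\le 1$, respectively bound the phase difference by a power of $t|\l-\m|$, and invoke Lemma~\ref{TqEp}; you then transfer to $\C_+$ by complex conjugation of the kernel (the paper uses $R_{02}(m,\ol z)=\ol{R_{02}(-m,z)}$ rather than evenness in $m$, but since $\wt\vk_d^\gamma$ is even these are interchangeable). One small point in your favour: the clean inequality $|e^{-it\l}-e^{-it\m}|\le t^\g|\l-\m|^\g$ used in the paper fails in the regime $1<t|\l-\m|<2^{1/\g}$, so your factor $2^{1-\g}$ is actually needed; it only perturbs $C_d^\gamma$ by a harmless constant, as you note.

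The one genuine flaw is your final paragraph on mixed configurations $\l\in\C_+$, $\m\in\C_-$. The triangle-inequality argument through a point of $\R$ cannot work, because $R_{02}(m,\cdot)=R_0(m,\cdot)-R_{01}(m,\cdot)$ inherits the jump of the full resolvent across $[-d,d]$ ($R_{01}$ being entire): the boundary values $R_{02}(m,x+i0)$ and $R_{02}(m,x-i0)$ differ for $x\in(-d,d)$, so there is no common intermediate value to insert. Indeed the mixed-case estimate in (b) is simply false for $\g>0$: taking $\l=x+i\ve$, $\m=x-i\ve$ with $x\in(-d,d)$ and $\ve\to 0^+$ sends the right-hand side to zero while the left-hand side tends to the (generically nonzero) jump of the resolvent kernel. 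The proposition should be read, as in Theorem~\ref{T2}(b) and as the paper's proof makes clear, with $\l$ and $\m$ in the same (closed) half-plane; drop the mixed case and your proof is complete.
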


\begin{proof} Consider first the case $\l,\m\in\C_-$.
From \eqref{R012} and the estimate \er{Int} (with $\gamma=0$), we find that
\[
\lb{ER1} |R_{02}(m,\l)|\le \int_1^\iy|(e^{it\D})(m)|dt \le
C_d^0\wt\vk_d^0(m).
\]
where the sequence $\wt\vk_p^0=(\wt\vk_p^0(n))_{n\in \Z^d}$ is given by \er{tildekappa}.

The H\"older estimate \er{R02x}, follows from the identity \er{R012} and the estimates $|e^{-i t\l}-e^{-it\m}|\leq t^\gamma|\l-\m|^\gamma$ and \er{Int}:
$$
|R_{02}(m,\l)-R_{02}(m,\m)|\le |\l-\m|^\g \int_1^\iy t^\g
|(e^{it\D})(m)|dt\le C_d^\gamma \wt\vk^\g_d(m)|\l-\m|^\g.
$$

For $\lambda,\mu\in\C_+$, the two estimates follow from what has already been proven together with
the identity $R_{02}(m,\overline{z}) = \overline{R_{02}(-m,z)}$.
\end{proof}

\begin{proof}[Proof of Theorem \ref{T2}]
 Consider the case $\l\in \C_-$ and let $u,v\in\ell^q(\Z^d)$ with $d\geq 3$. Define $Y_{02}$ by
 $$
Y_{02}(z)= u\, R_{02}(z)\, v,\qqq \ \forall z\in\C\sm
[-d,d].
 $$
The estimate \er{R02} yields an estimate of the Hilbert-Schmidt norm of $Y_{02}(z)$:
$$
\begin{aligned}
\|Y_{02}(z)\|_{\cB_2}^2 & =\sum_{n, n'\in \Z^d}|u_n|^2\
|R_{02}(
n'-n,z)|^2|v_{n'}|^2\\
& \le (C_d^0)^2 \sum_{n, n'\in \Z^d}|u_n|^2\
\wt\vk_d^0(n'-n)^2|v_{n'}|^2
\end{aligned}
$$
and applying the Young inequality from Theorem \ref{YE} at
${2\/q}+{1\/r}=1$,  we obtain
\[
\begin{aligned}
\|Y_{02}(z)\|_{\cB_2}^2& \le (C_d^0)^2 \sum_{n,n'}\wt\vk_d^0(n'-n)^2|u_n|^2 |v_{n'}|^2\\
& \le (C_d^0)^2\|u\|_{q}^2 \|v\|_q^2\|\wt\vk_d^0\|_{r}^2.
\end{aligned}
\]
Note that
\[
\lb{vk0}
\begin{aligned}
\|\wt\vk_d^0\|_{r}= \rt(\sum_{n\in
\Z^d}\wt\vk_d^0(n)^r\rt)^{1\/r}=\rt(\sum_{n\in \Z^d} \prod_{j=1}^d
\vk_d^0(n_j)^r\rt)^{1\/r}=\|\vk_d^0\|_{r}^{d}.
 \end{aligned}
\]
In order to ensure finiteness of $\|\wt\vk_d^0\|_{r}$
we must therefore, according to Lemma~\ref{lemma-kappa-r-bound}, require that
$r>r_d^0$, where $r_d^0$ is defined in \eqref{r-p-gamma}. Note $q = 2r/(r-1)$ and that this expression is decreasing in $r\geq 2$. Since $r_d^0\geq 2$, we may therefore
express the constraint $r>r_d^0$ by the constraint
$q < 2 r_d^0/(r_d^0-1)$ on $q$ instead. Inserting the expression for $r_d^0$ from \eqref{r-p-gamma}, we arrive at the constraint in \eqref{Vc}. Furthermore, employing the first estimate in \eqref{Rest1}, and the estimate from Lemma~\ref{lemma-kappa-r-bound} with $\gamma=0$, we arrive at \eqref{VRV1}. (Recall that $\|T\|_{\cB_2}\leq \|T\|$.)


The estimate \er{R02x} and  the Young inequality from Theorem
\ref{YE} with ${1\/p}+{1\/r}=1$,  implies
$$
\begin{aligned}
\|Y_{02}(z)-Y_{02}(z')\|_{\cB_2}^2 &=\sum_{n, n'\in \Z^d}|u_n|^2\
|R_{02}(n'-n,z)-R_{02}(n'-n,z')|^2|v_{n'}|^2\\
&\le (C_d^\gamma)^2|\l-\m|^{2\g} \sum_{n, n'\in \Z^d}|u_n|^2\
\wt\vk_d^\g(n'-n)^2|v_{n'}|^2\\
&\le (C_d^\gamma)^2|\l-\m|^{2\g}\|u\|_{q}^2\|v\|_q^2\|\wt\vk_d^\g\|_{r}^2
\le
(C_d^\gamma)^2|\l-\m|^{2\g}\|u\|_{q}^2\|v\|_q^2\G^2(q/2,d,\g),
\end{aligned}
$$
 since -- due to Lemma~\ref{lemma-kappa-r-bound} -- we have:
\[
\lb{vk1}
\|\wt\vk_d^\g\|_{r}= \rt(\sum_{n\in
\Z^d}\wt\vk_d^\gamma(n)^r\rt)^{1\/r}=\rt(\sum_{n\in \Z^d} \prod_{j=1}^d
\vk_d^\gamma(n_j)^r\rt)^{1\/r}=\|\vk_d^\gamma\|_{r}^d\le \G(q,d,\g).
\]
This, together with the second estimate in \eqref{Rest1}, completes the proof.
\end{proof}

In the proof Theorem~\ref{T2}, we actually estimated the  $R_{02}(z)$
contribution in Hilbert-Schmidt norm, only the $R_{01}(z)$ contribution
was estimated directly in operator norm, cf. \eqref{Rest1}.

We end this section with resolvent estimates in Hilbert-Schmidt norm, where
we must investigate the $R_{01}$ contribution more closely.
Let $\cB_2$ denote the Hilbert-Schmidt class. We write
$\|K \|_{\cB_2}$ for the Hilbert-Schmidt norm of an operator $K$.

\begin{theorem}
\lb{T2HS} Let $d\geq 3$. Let $u,v\in \ell^q(\Z^d)$ with
\[\lb{HSVc}
2\leq q < \ca {12\/5} & \textup{if} \ \ d=3,\\
  \frac{6d}{2d+1} & \textup{if}  \  \ d\ge 4. \ac
\]
 Then the
operator-valued function $Y_0\colon \C\sm [-d,d]\to \cB_2$, defined by
\[
Y_0(z) := u (\D -z)^{-1}v
\]
is analytic and H\"older continuous up to the boundary. More precisely, it satisfies:
\begin{enumerate}[label = \textup{(\alph*)}]
\item For all $z\in \C\sm [-d,d]$, we have
\begin{equation}
\label{VRV1HS}
\|Y_0(z)\|_{\cB_2}\le \bigl(D_{q,d} +C_d^0 \G(q,d,0)\bigr) \|u\|_q\|v\|_q,
\end{equation}
where $D_{2,d}=1$ and $D_{q,d} = (q/2)^{\frac{d(q-2)}{q}}$
\item Let $\gamma\in [0,1]$ satisfy the constraint $\gamma < \gamma_{d,p}$. For all $z, z'\in \C\sm [-d,d]$ with $\Im(z), \Im(z')\geq 0$, we
have
\begin{equation}\label{VRV2HS}
\|Y_0(z)-Y_0(z')\|_{\cB_2}\le |z-z'|^\g\bigl(D_{q,d} +C_d^\gamma \G(q,d,\g)
\bigr)\|u\|_q\|v\|_q.
\end{equation}
\end{enumerate}
\end{theorem}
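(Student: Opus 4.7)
\textit{Proof plan.} I would follow the template of the proof of Theorem~\ref{T2}; only the contribution from $R_{01}(z)$ requires a new treatment, since it was estimated in \eqref{Rest1} in operator norm and we now want a Hilbert--Schmidt bound. Split $Y_0(z)=uR_{01}(z)v+uR_{02}(z)v$ as in \eqref{R012}. The second summand was already estimated in Hilbert--Schmidt norm in the proof of Theorem~\ref{T2}: the kernel bounds \eqref{R02} and \eqref{R02x}, the Young inequality (Theorem~\ref{YE}, with $2/q+1/r=1$), and the norm estimate \eqref{vk1} combine to give
\[
\|uR_{02}(z)v\|_{\cB_2}\le C_d^0\,\Gamma(q,d,0)\,\|u\|_q\|v\|_q
\]
and the analogous H\"older bound with exponent $\gamma$ via \eqref{R02x}. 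The restriction \eqref{HSVc} on $q$ arises exactly here, from the requirement $\|\wt\vk_d^\gamma\|_r^d=\Gamma(q,d,\gamma)<\infty$ (Lemma~\ref{lemma-kappa-r-bound}).

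The new input is a Hilbert--Schmidt bound on the $R_{01}(z)$-piece. For $z\in\overline{\C}_-$, $|e^{-itz}|\le 1$ on $[0,1]$ gives the pointwise kernel bound $|R_{01}(k,z)|\le\int_0^1|(e^{it\D})(k)|\,dt$. Substituting this into $\|uR_{01}(z)v\|_{\cB_2}^2=\sum_{n,m}|u_n|^2|R_{01}(n-m,z)|^2|v_m|^2$ and applying the same Young inequality as for $R_{02}$ yields
\[
\|uR_{01}(z)v\|_{\cB_2}\le\|R_{01}(\cdot,z)\|_{r}\|u\|_q\|v\|_q,\qquad r=\frac{q}{q-2}.
\]
By Minkowski's integral inequality together with the factorization $(e^{it\D})(k)=i^{|k|}\prod_jJ_{k_j}(t)$ from \eqref{DiscAsBessel},
\[
\|R_{01}(\cdot,z)\|_r\le\int_0^1\Bigl(\sum_{\ell\in\Z}|J_\ell(t)|^r\Bigr)^{d/r}dt,
\]
and a routine one-dimensional Bessel estimate on $[0,1]$, using $|J_\ell(t)|\le 1$ and Parseval's identity $\sum_\ell|J_\ell(t)|^2=1$, bounds the right-hand side by $D_{q,d}=(q/2)^{d(q-2)/q}$. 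The case $q=2$ is immediate from $\sup_k|R_{01}(k,z)|\le\|R_{01}(z)\|\le 1$ (cf. \eqref{Rest1}), giving $D_{2,d}=1$.

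For the H\"older estimate \eqref{VRV2HS}, the $R_{02}$ contribution is handled exactly as in Theorem~\ref{T2}, while on the $R_{01}$ piece one writes $R_{01}(z)-R_{01}(z')=-i\int_0^1(e^{-itz}-e^{-itz'})e^{it\D}\,dt$ and uses $|e^{-itz}-e^{-itz'}|\le 2^{1-\gamma}t^\gamma|z-z'|^\gamma\le|z-z'|^\gamma$ for $t\in[0,1]$ to pick up the factor $|z-z'|^\gamma$ in front of the bound derived above. For $z\in\C_+$ the two estimates follow from the case $z\in\C_-$ via the symmetry $R_{01}(m,\overline{z})=\overline{R_{01}(-m,z)}$, exactly as in Proposition~\ref{PR0}. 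I expect the main technical obstacle to be the careful tracking of the constant $D_{q,d}$ through the Bessel sum estimate on $[0,1]$; once this is settled, everything else transcribes directly from the proof of Theorem~\ref{T2}.
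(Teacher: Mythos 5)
Your proposal is correct and follows the paper's overall architecture: the $R_{02}$ contribution is recycled verbatim from the proof of Theorem~\ref{T2}, and only $u R_{01}(z)v$ needs a new Hilbert--Schmidt bound, obtained via the pointwise kernel bound $|R_{01}(k,z)|\le\int_0^1|(e^{it\D})(k)|\,dt$ and the discrete Young inequality with $r=q/(q-2)$. Where you genuinely diverge is in how $\|R_{01}(\cdot,z)\|_r$ is controlled. The paper uses the small-time Bessel estimate $|J_n(t)|\le(1+|n|)^{-1/2}$ for $|t|\le1$ (Lemma~\ref{lem-small-t}, which rests on the Krasikov and Landau bounds), so that $\bigl(\int_0^1|(e^{it\D})(k)|\,dt\bigr)^2\le\rho_k$, and then evaluates $\|\rho\|$ explicitly via \eqref{rho-alpha-norm}; this is precisely what produces the constant $D_{q,d}=(q/2)^{d(q-2)/q}$ appearing in the statement. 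You instead use Minkowski's integral inequality together with $\sum_{\ell}J_\ell(t)^2=1$ (unitarity of $e^{it\D}$ in $d=1$) and $|J_\ell(t)|\le1$, which for $r\ge2$ (automatic here, since $q<3$ forces $r>3$) gives $\sum_\ell|J_\ell(t)|^r\le1$ and hence $\|R_{01}(\cdot,z)\|_r\le1\le D_{q,d}$. Your route is more elementary -- it bypasses Lemma~\ref{lem-small-t} entirely -- and in fact yields the sharper constant $1$ in place of $D_{q,d}$ for the $R_{01}$ piece. One small caveat: your inequality $2^{1-\gamma}t^\gamma|z-z'|^\gamma\le|z-z'|^\gamma$ fails for $t$ near $1$ when $\gamma<1$; the clean statement is $|e^{-itz}-e^{-itz'}|\le\min(2,t|z-z'|)\le 2^{1-\gamma}|z-z'|^\gamma$ on $[0,1]$, costing a harmless factor $2^{1-\gamma}$. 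The paper's own proof of Proposition~\ref{PR0} and of \eqref{VRV2HS} commits the same omission, so this does not distinguish your argument from the published one.
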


\begin{proof} We only need to estimate the $R_{01}$ contribution, since the $R_{02}$ contribution was estimated in  Hilbert-Schmidt norm in the proof of Theorem~\ref{T2}.

We abbreviate $Y_{01}(z) = u R_{01}(z) v$ and estimate using Lemma~\ref{lem-small-t}
\[
\|Y_{01}(z)\|_{\cB_2}^2 \leq  \sum_{n,m} |u_n|^2\Bigl(\int_0^1 |e^{it\Delta}(n-m)| \,dt\Bigr)^2|v_m|^2
\leq \sum_{n,m} |u_n|^2 \rho_{n-m}|v_m|^2.
\]
Recall from \eqref{rho-weight} the definition of the function  $\rho$.
Let $r=\infty$ if $q=2$ and $r  =(1-2/q)^{-1}= \frac{q}{q-2}$ if $q>2$.   
By the discrete Young inequality, Theorem~\ref{YE}, we conclude the estimate
\[
\|Y_{01}(z)\|_{\cB_2} \leq \|\rho\|_r\|\|u\|_q\|v\|_q.
\]
For $q>2$, we estimate (using \eqref{rho-alpha-norm})
\[
 \|\rho\|_r^r = \sum_{n\in\Z^d} \prod_{j=1}^d (1+|n_j|)^{-r}
 = \Bigl(\sum_{m\in\Z} (1+|m|)^{-r}\Bigr)^d\leq \Bigl( \frac{r}{r-1}\Bigr)^d
 = \frac{q^d}{2^d}.
\]

Similar, we estimate for any $\gamma\in [0,1]$ and $z,z'$ in the same half-plane
\[
\|Y_0(z)-Y_0(z')\|_{\cB_2}\leq |z-z'|^\gamma D_{q,d} \|u\|_q\|v\|_q.
\]
This completes the proof.
\end{proof}

\section {Schr\"odinger operators}\label{SecHamiltonian}
\setcounter{equation}{0}


Let $V\in L^p(\Z^d)$ with $p\geq 1$ and write
$q_1 = |V|^{1/2}$. Note that $q_1\in \ell^q(\Z^d)$ with $q = 2p\geq 2$.  
Choose $q_2\in \ell^q(\Z^d)$, such that $V = q_1 q_2$.
The specific choice $q_2 = q_1 \sign(V)$ would work, but we shall exploit the freedom
to choose $q_2$ differently in the proof of Theorem~\ref{LimAbs} below.
Note that for $n\in\supp(V) = \supp(q_1)$, we have $q_2(n) = \sign(V_n) q_1(n)$.

We say that $f\in L^2(\Z^d)$ solves the \emph{Birman-Schwinger equation} at $\lambda$ if
\begin{equation}\label{BirSch}
f = - q_1 R_0(\lambda+ i 0) q_2 f.
\end{equation}
We write
\[
\sigma_\BS(H) = \bigl\{ \lambda\in\R \,\big|\, \textup{The Birman-Schwinger equation has a non-zero solution at } \lambda\,\bigr\}.
\]
Note that any solution $f\in\ell^2(\Z^d)$ of the Birman-Schwinger equation satisfies
\[\label{SuppOfBSsol}
\supp(f) \subseteq \supp(V),
\]
which in particular implies that $\sigma_\BS(H)$ does not depend on the choice of $q_2$.

\begin{remark}\label{BSminus} In principle one should also consider the Birman-Schwinger equation with the limiting resolvent coming from the lower half-plane, defined by $q_1R_0(\lambda-i 0)q_2$. This would however give rise to the same Birman-Schwinger spectrum, so we do not introduce a separate notation. Indeed, choose
$q_2 = \sign(V) q_1$, and define a unitary transformation by setting $ (Uf)_n = \sign(V_n)f_n$ if $V_n\neq 0$ and $(Uf)_n = f_n$ otherwise. Then
\[
q_1 R_0(\lambda-i0)^{-1}q_2 = U^* q_2 R_0(\lambda -i0) q_1 U
= U^* \bigl(q_1 R_0(\lambda +i 0) q_2 \bigr)^* U,
\]
which implies that the Birman-Schwinger spectrum does not depend on the choice of limiting resolvent.
\end{remark}

Note that since $q_1 R_0(\lambda+ i 0) q_2$ is compact, the solution spaces to the Birman-Schwinger equation (for a given $\lambda$) is finite dimensional, i.e., all $\lambda\in \sigma_\BS(H)$ are of finite multiplicity.

\begin{lemma}\label{lem:TrivialLimit} For any $\l\in\R$, we have the limit $s-\lim_{\mu\to 0} \mu R_0(\lambda+i \mu)=0$.
\end{lemma}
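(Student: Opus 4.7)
The plan is to apply the spectral theorem to the self-adjoint operator $\Delta$. Setting $g_\mu(x) = \mu/(x-\lambda-i\mu)$, we have $\mu R_0(\lambda+i\mu) = g_\mu(\Delta)$. Observe the uniform bound
\[
|g_\mu(x)|^2 = \frac{\mu^2}{(x-\lambda)^2+\mu^2} \leq 1
\]
valid for every $x\in\R$ and every $\mu\neq 0$, together with the pointwise limit $g_\mu(x)\to 0$ as $\mu\to 0$ for every $x\neq \lambda$.

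Next I would exploit the fact, recalled in the introduction, that the spectrum of $\Delta$ is purely absolutely continuous, $\sigma(\Delta)=\sigma_{\mathrm{ac}}(\Delta) = [-d,d]$. Consequently, for every $f\in\ell^2(\Z^d)$, the spectral measure $\mu_f := \langle f, E_\Delta(\cdot)f\rangle$ is absolutely continuous with respect to Lebesgue measure on $\R$. In particular, the single point $\{\lambda\}$ carries zero $\mu_f$-mass.

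Finally, by the functional calculus,
\[
\|\mu R_0(\lambda+i\mu)f\|^2 = \int_\R |g_\mu(x)|^2\,d\mu_f(x),
\]
and the integrand is dominated by the $\mu_f$-integrable constant function $1$ while converging to $0$ $\mu_f$-almost everywhere. Dominated convergence then yields $\mu R_0(\lambda+i\mu)f\to 0$ as $\mu\to 0$, for every $f\in\ell^2(\Z^d)$, which is strong convergence. The argument is entirely routine once one observes that $\lambda$ is never an eigenvalue of $\Delta$; no genuine obstacle arises.
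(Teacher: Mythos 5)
Your proof is correct and rests on the same key fact as the paper's: the spectral measure of $\Delta$ assigns zero mass to the point $\{\lambda\}$, combined with the uniform bound $|\mu/(x-\lambda-i\mu)|\le 1$. The paper phrases this as an explicit $\epsilon$--$\delta$ splitting (choosing $\delta$ so that $\|\mathbf{1}[|\Delta-\lambda|<\delta]g\|\le\epsilon/2$ and bounding the rest by $|\mu|/\delta$), whereas you invoke dominated convergence; these are the same argument in two guises.
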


\begin{proof} Let $g\in L^2(\Z^d)$ and $\epsilon>0$. Pick $\delta>0$ such that $\| \textbf{1} [|\D-\lambda|<\delta] g\| \leq \epsilon/2$. We may now, for $\mu\in\R$ with $0<|\mu|< \epsilon\delta/2$, estimate
$\|\mu(\D-\lambda+i \mu)^{-1} g\| \leq \epsilon$. This completes the proof.
\end{proof}

\begin{lemma}\label{lem:From-pp-to-BS} We have $\sigma_\pp(H)\subseteq \sigma_\BS(H)$ and the map
$g \to q_1 g$ takes nonzero eigenfunctions $Hg = \lambda g$ into nonzero solutions of \eqref{BirSch} at $\lambda$.
\end{lemma}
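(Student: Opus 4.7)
The approach is the standard Birman--Schwinger conversion: I would turn the eigenvalue equation $Hg = \lambda g$ into the integral equation \eqref{BirSch} via a resolvent regularization and then pass to the limit.

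First, fix a nonzero eigenfunction $g \in \ell^2(\Z^d)$ with $Hg = \lambda g$, i.e.\ $(\Delta-\lambda)g = -Vg$. For $\mu > 0$, I would rewrite this as $(\Delta - \lambda - i\mu) g = -Vg - i\mu g$ and invert to get
\[
g = -R_0(\lambda+i\mu)\, V g \,-\, i\mu\, R_0(\lambda+i\mu)\, g.
\]
Since the factorization $V = q_1 q_2$ consists of pointwise multiplication operators, $V g = q_2(q_1 g)$. Multiplying through by $q_1$ and setting $f := q_1 g$, one obtains
\[
f = -\bigl(q_1 R_0(\lambda+i\mu) q_2\bigr) f \,-\, i\mu\, q_1 R_0(\lambda + i\mu) g.
\]

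Second, I would send $\mu \to 0^+$. The first term converges in $\ell^2$ to $-(q_1 R_0(\lambda + i0) q_2) f$: indeed $q_1,q_2\in\ell^{2p}(\Z^d)$ with $2p$ in the admissible range of Theorem~\ref{T2}, which provides H\"older (in particular norm) continuity of $z\mapsto q_1 R_0(z)q_2$ up to the real axis. For the error term, Lemma~\ref{lem:TrivialLimit} gives $\mu R_0(\lambda+i\mu)g \to 0$ in $\ell^2$; combined with the elementary pointwise bound $\|V\|_\infty\le\|V\|_p$, the multiplication operator $q_1=|V|^{1/2}$ is bounded on $\ell^2$ with $\|q_1\|_\infty\le\|V\|_p^{1/2}$, so $\mu q_1 R_0(\lambda+i\mu)g \to 0$ in $\ell^2$. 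This yields exactly \eqref{BirSch} for $f$.

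Third, I would show $f = q_1 g$ is nonzero. If $q_1 g = 0$, then $Vg = q_2(q_1 g) = 0$, so the eigenequation degenerates to $\Delta g = \lambda g$. But $\sigma(\Delta) = \sigma_{\mathrm{ac}}(\Delta) = [-d,d]$ contains no $\ell^2$-eigenvalues, contradicting $g\neq 0$.

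I do not expect a genuine obstacle; the argument is routine once Theorem~\ref{T2} and Lemma~\ref{lem:TrivialLimit} are available. The only slightly delicate point is handling the regularization error $\mu q_1 R_0(\lambda+i\mu)g$, which is disposed of by observing that the hypothesis $V\in\ell^p$ automatically forces $V\in\ell^\infty$.
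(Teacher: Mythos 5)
Your argument is correct and is essentially the paper's own proof: both start from $Vg=(\lambda-\Delta)g$, derive the identity $f=-q_1R_0(\lambda+i\mu)q_2f-i\mu\,q_1R_0(\lambda+i\mu)g$ with $f=q_1g$, pass to the limit $\mu\to0^+$ using Lemma~\ref{lem:TrivialLimit} (and the continuity of $Y_0$ from Theorem~\ref{T2}), and rule out $f=0$ by noting that $\Delta$ has no eigenvalues. The only difference is that you spell out the boundedness of the multiplication operator $q_1$ via $\|V\|_\infty\le\|V\|_p$, a detail the paper leaves implicit.
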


\begin{proof}
Suppose $H g = \lambda g$, for some non-zero $g\in L^2(\Z^d)$.
Put $f = q_1 g$  and compute
\[
q_1R_0(\lambda+ i \mu) q_2 f=  q_1R_0(\lambda+ i \mu) (\lambda- \D) g
= - q_1 g - i \mu q_1 R_0(\lambda+ i \mu)  g.
\]
 Taking the limit $\mu\to 0$, using Lemma~\ref{lem:TrivialLimit}, we arrive at $f = q_1 g$ being a solution of the Birman-Schwinger equation \eqref{BirSch}. It remains to argue that $f\neq 0$.
 If $f = 0$, then $V g = 0$ and consequently, $(\D-\lambda)g = (H-\lambda)g = 0$. This is absurd, since $\D$ does not have eigenvalues.
 \end{proof}

The Birman-Schwinger equation with limiting resolvent, has been used previously by Pushnitski \cite{Pu11} to study embedded eigenvalues, in view of the above lemma.

\vspace{5mm}

We are now in a position to give:

\begin{proof}[Proof of Theorem~\ref{T3}~\ref{Item-T3-a}]
 We show that eigenvalues of $H$ have finite multiplicity. Let $\lambda\in \sigma_\pp(H)$ and denote by $\cH_\lambda$ the associated eigenspace.
By Lemma~\ref{lem:From-pp-to-BS}, the linear map $\cH_\lambda\ni g \to q_1 g\in L^2(\Z^d)$ is injective and takes values in the vector space of solutions of \eqref{BirSch} at $\lambda$. Since this vector space is finite dimensional ($q_1 R_0(\lambda+ i 0) q_2$ being compact),
we may conclude that the eigenspace $\cH_\lambda$ is finite dimensional.
\end{proof}

In what remains of this section, we shall use the abbreviations
\[\label{YandwtY}
Y_0(z) = q_1 (\Delta-z)^{-1}q_2, \quad \textup{and} \quad
\wt Y_0(z) = q_2 (\Delta-z)^{-1} q_2
\]
for $z\in\C$ with $\Im z\neq 0$. 
 For $\lambda\in \R$, we write
$Y_0(\lambda\pm i0)$ and $\wt Y_0(\lambda\pm i 0)$ for the limiting objects. By a limiting argument, we observe that the identity $\wt Y_0(z) = \sign(V) Y_0(z)$ valid for $z\in\C$ with $\Im z\neq 0$ extends to
\[\label{tYtoY}
\wt Y_0(\lambda\pm i0) = \sign(V) Y_0(\lambda \pm i0)
\]
for any $\lambda\in \R$.

We shall single out energies $\lambda\in\R$, where the following Lipschitz estimate holds true:
\begin{equation}\label{LipAssump}
\exists L>0 \, \forall 0 <  \mu\leq 1:\quad \bigl\|Y_0(\l\pm  i \mu)- Y_0(\l\pm i 0)\bigr\|\le L \mu.
\end{equation}
We write
\[
\Lips(H) = \bigl\{\lambda\in\R \,| \, \textup{\eqref{LipAssump} is satisfied at } \lambda \bigr\}.
\]
Clearly $\R\setminus \sigma(\Delta) = \R\setminus [-d,d]\subseteq  \Lips(H)$, and
we may write $\Lips(H)$ as in increasing union of subsets
\[
\Lips(H;L) = \bigl\{\lambda\in\Lips(H) \,| \, \textup{\eqref{LipAssump} is satisfied at } \lambda \textup{ with constant } L \bigr\}.
\]
The sets $\Lips(H;L)$ are closed, since $\lambda\to Y_0(\lambda\pm i \mu)$ are continuous for $\mu\geq 0$. It is obscured by the choice of notation, that the sets $\Lips(H)$ and $\Lips(H;L)$ may depend on the choice of $q_2$ made in the factorization of $V$. 

\begin{lemma}\label{lem:BS} Let $L>0$ and suppose $\lambda \in\sigma_\BS(H)\cap\Lips(H;L)$ and $f$ a solution of \eqref{BirSch}.
Then $q_2 f \in D((\D-\l)^{-1})$ and $\|(\D-\l)^{-1}q_2 f\|_2\leq L \|f\|_2$. Furthermore,
\[
(\D-\l)^{-1} q_2 f =
 \lim_{\mu\to 0 }R_0(\l+i \mu)^{-1} q_2 f.
 \]
\end{lemma}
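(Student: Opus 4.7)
The plan is to exploit the spectral theorem identity
\[
\mu\,\|R_0(\lambda+i\mu) h\|^2 \;=\; \operatorname{Im}\langle h, R_0(\lambda+i\mu) h\rangle, \qquad \mu>0,\ h\in\ell^2(\Z^d),
\]
applied to $h=q_2 f$, and combine it with the Birman--Schwinger identity $Y_0(\lambda+i 0)f = -f$ and the Lipschitz hypothesis \eqref{LipAssump}. Throughout, we may work with the canonical factorization $q_2=\sign(V)q_1$: since $\supp(f)\subseteq\supp(V)$ by \eqref{SuppOfBSsol}, the object $q_2 f$ is unambiguously equal to $\sign(V)q_1 f$, so the values of $q_2$ off $\supp(V)$ do not affect the statement.

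First I would compute, using the self-adjointness of the multiplication operator $q_2$ and the identity $\wt Y_0(z)=\sign(V)\,Y_0(z)$ for $\Im z\neq 0$ from \eqref{tYtoY},
\[
\langle q_2 f, R_0(\lambda+i\mu) q_2 f\rangle \;=\; \langle f,\wt Y_0(\lambda+i\mu)f\rangle \;=\; \langle \sign(V)f, Y_0(\lambda+i\mu)f\rangle.
\]
Writing $Y_0(\lambda+i\mu)f = Y_0(\lambda+i 0)f + \bigl(Y_0(\lambda+i\mu)-Y_0(\lambda+i 0)\bigr)f$ and using the Birman--Schwinger equation to replace $Y_0(\lambda+i 0) f = -f$, the leading term becomes $-\langle \sign(V)f,f\rangle\in\R$, whose imaginary part vanishes. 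Cauchy--Schwarz and \eqref{LipAssump} bound the remainder by $L\mu\|f\|^2$, so
\[
\mu\,\|R_0(\lambda+i\mu) q_2 f\|^2 \;\leq\; L\mu\,\|f\|^2,
\]
yielding the uniform bound $\|R_0(\lambda+i\mu)q_2 f\|^2\leq L\|f\|^2$ for all $\mu\in(0,1]$.

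Next I would pass to the limit $\mu\to 0^+$ using the spectral theorem. Writing
\[
\|R_0(\lambda+i\mu) q_2 f\|^2 \;=\; \int_{[-d,d]}\frac{d\mu_{q_2 f}(x)}{(x-\lambda)^2+\mu^2},
\]
monotone convergence and the uniform bound give $\int (x-\lambda)^{-2}\,d\mu_{q_2 f}(x)\leq L\|f\|^2<\infty$. Since $\sigma(\Delta)=\sigma_{\mathrm{ac}}(\Delta)$ so $\ker(\Delta-\lambda)=\{0\}$, this integrability is exactly the condition $q_2 f\in D((\Delta-\lambda)^{-1})$, and $\|(\Delta-\lambda)^{-1}q_2 f\|^2\leq L\|f\|^2$. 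Strong convergence $R_0(\lambda+i\mu)q_2 f\to(\Delta-\lambda)^{-1}q_2 f$ then follows from dominated convergence applied to $\bigl|(x-\lambda-i\mu)^{-1}-(x-\lambda)^{-1}\bigr|^2$, with dominating function $4(x-\lambda)^{-2}\in L^1(\mu_{q_2 f})$.

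The main obstacle I expect is a notational/bookkeeping one rather than a deep one: correctly reducing $\langle q_2 f,R_0 q_2 f\rangle$ to a quantity controlled by the Lipschitz hypothesis on $Y_0=q_1 R_0 q_2$. The key observation that makes this work is that the diagonal pairing naturally produces $\wt Y_0=\sign(V)Y_0$, and the Birman--Schwinger equation converts the leading boundary term into something real and hence invisible to the imaginary part. Everything else is a standard Cauchy--Schwarz / spectral calculus argument.
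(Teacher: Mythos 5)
Your proof is correct and follows essentially the same route as the paper's: both reduce $\|R_0(\lambda+i\mu)q_2f\|^2$ to the boundary behaviour of $\wt Y_0$, use the Birman--Schwinger equation together with \eqref{SuppOfBSsol} and \eqref{tYtoY} to show $\Im\{(f,\wt Y_0(\lambda+i0)f)\}=0$, control the remainder by the Lipschitz hypothesis \eqref{LipAssump}, and conclude by monotone convergence. The only (cosmetic) differences are that you work with $\Im\langle q_2f,R_0(\lambda+i\mu)q_2f\rangle$ rather than the symmetric difference $\tfrac1{2\mu}(f,[\wt Y_0(\lambda-i\mu)-\wt Y_0(\lambda+i\mu)]f)$, and you obtain the final strong limit by dominated convergence in the spectral representation where the paper uses the resolvent identity together with Lemma~\ref{lem:TrivialLimit}.
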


\begin{proof} Let $f\in\ell^2(\Z^d)$ be a solution of \eqref{BirSch} at energy $\lambda\in\R$.
Denote by $E_{q_2 f}$ the spectral  measure for $\D$ associated with the state $q_2 f$.
Then $q_2 f\in D((\Delta-\l)^{-1})$ if and only if
$\int_{\R}(x-\l)^{-2} d E_{q_2 f}(x) <\infty$.
Compute for $\mu > 0$
\[\label{LipEstHelp}
\begin{aligned}
& \int_{\R}((x-\l)^2 + \mu^2)^{-1} d E_{q_2 f}(x) =
 (q_2 f, R_0(\l- i\mu)R_0(\l + i \mu) q_2 f)\\
& \quad  =  \frac1{2\mu}\bigl(f, (\wt Y_0(\l- i\mu)-\wt Y_0(\l + i \mu) ) f\bigr)\\
& \quad  =   \frac1{2\mu}\bigl(f, (\wt Y_0(\l- i\mu)-\wt Y_0(\l - i 0) ) f\bigr)
  + \frac1{2\mu} \bigl(f, (\wt Y_0(\l+ i 0)-\wt Y_0(\l + i \mu) ) f\bigr) \\
& \qquad  +  \frac1{2\mu}\Im \bigl\{\bigl(f, \wt Y_0(\l + i 0)  f\bigr)\bigr\}.
\end{aligned}
\]
Note that by the Birman-Schwinger equation \eqref{BirSch}, as well as Eqs. \eqref{SuppOfBSsol} and \eqref{tYtoY}:
\[
\begin{aligned}
\Im\bigl\{\bigl(f, \wt Y_0(\lambda + i 0) f\bigr)\bigr\} &= 
\Im \bigl\{\bigl(\sign(V)f,  Y_0(\lambda + i 0)  f\bigr)\bigr\}\\
 &   = - \Im \bigl\{\bigl(\sign(V) f, f\bigr)\bigr\}= 0.
\end{aligned}
\]
The result now follows from \eqref{LipAssump}, \eqref{LipEstHelp}, and the monotone convergence theorem.

As for the claimed identity, we need to argue that
$\lim_{\mu\to 0}  R_0(\lambda+i \mu)^{-1} q_2 f = (\Delta-\lambda)^{-1} q_2 f$.
But this follows from the computation
\[
R_0(\lambda+i \mu)q_2 f - (\Delta-\lambda)^{-1} q_2 f=
 i \mu R_0(\lambda+i \mu)(\Delta-\lambda)^{-1}q_2 f,
\]
together with Lemma~\ref{lem:TrivialLimit}.
\end{proof}

\begin{lemma}\label{Lemma-BirSch} We have $\sigma_\BS(H)\cap\Lips(H)\subseteq \sigma_\pp(H)$. Furthermore, for $\lambda \in \sigma_\BS(H)\cap\Lips(H)$,
the linear map $f\to (\D-\lambda)^{-1} q_2 f$ is well-defined and takes nonzero solutions of \eqref{BirSch} at $\lambda$
into nonzero eigenfunctions of $H$ with eigenvalue $\lambda$.
\end{lemma}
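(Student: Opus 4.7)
The plan is to derive both assertions of Lemma~\ref{Lemma-BirSch} simultaneously by explicitly constructing the eigenfunction associated with a Birman--Schwinger solution. Let $\l\in\sigma_\BS(H)\cap\Lips(H)$ and let $f\in\ell^2(\Z^d)$ be a nonzero solution of \eqref{BirSch} at $\l$. By Lemma~\ref{lem:BS}, $q_2 f\in D((\D-\l)^{-1})$, so the vector
\[
g := (\D-\l)^{-1}q_2 f \ \in\ \ell^2(\Z^d)
\]
is well-defined, with norm bound $\|g\|_2\leq L\|f\|_2$, and moreover $g=\lim_{\mu\to 0^+} R_0(\l+i\mu)q_2 f$ in $\ell^2$ by the second assertion of that lemma.

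The key identity is $q_1 g = -f$. Since $V\in\ell^p(\Z^d)$ with $V_n\to 0$, the multiplication operator $q_1=|V|^{1/2}$ is bounded on $\ell^2(\Z^d)$, and I can pass it through the $\ell^2$-limit:
\[
q_1 g \;=\; \lim_{\mu\to 0^+} q_1 R_0(\l+i\mu)q_2 f.
\]
By the continuity of $Y_0$ up to the upper boundary (Theorem~\ref{T2}), the right-hand side equals $q_1 R_0(\l+i0)q_2 f$, which is $-f$ by the Birman--Schwinger equation \eqref{BirSch}. In particular $g\neq 0$ whenever $f\neq 0$, since $q_1 g = -f\neq 0$.

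It remains to verify that $Hg=\l g$. By construction $(\D-\l)g = q_2 f$, and using $V=q_1 q_2$ together with commutativity of multiplication operators,
\[
Vg \;=\; q_2 q_1 g \;=\; -q_2 f \;=\; -(\D-\l)g,
\]
so $(H-\l)g=(\D+V-\l)g=0$. This shows that $f\mapsto(\D-\l)^{-1}q_2 f$ sends nonzero BS solutions at $\l$ to nonzero eigenfunctions at $\l$, and as a consequence $\l\in\sigma_\pp(H)$, yielding the inclusion $\sigma_\BS(H)\cap\Lips(H)\subseteq\sigma_\pp(H)$. I do not expect a serious obstacle: the delicate point is the identity $q_1 g=-f$, which hinges on commuting the bounded multiplication operator $q_1$ with the $\ell^2$-limit defining $g$ and then appealing to the genuine (operator-norm) boundary continuity of $Y_0$ provided by Theorem~\ref{T2}; both ingredients are in place, so the argument is essentially bookkeeping on top of Lemma~\ref{lem:BS}.
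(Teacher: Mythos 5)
Your proof is correct and follows essentially the same route as the paper's: define $g=(\D-\l)^{-1}q_2f$ via Lemma~\ref{lem:BS}, identify $q_1g=Y_0(\l+i0)f=-f$ by passing the bounded multiplication operator $q_1$ through the limit from Lemma~\ref{lem:BS} and using the boundary continuity of $Y_0$, and then conclude $(H-\l)g=q_2f+Vg=0$ and $g\neq 0$. No gaps.
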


\begin{proof}
 Suppose $f\neq 0$ solves the Birman-Schwinger equation \eqref{BirSch}. Then, by Lemma~\ref{lem:BS},
 $q_2 f\in D((\D-\lambda)^{-1})$ and we may put $g = (\D- \lambda)^{-1} q_2 f$.
 By the spectral theorem, we have
 \[
 (H-\lambda) g = q_2 f + V g = - q_2  Y_0(\lambda + i 0) f + V g  = 0,
 \]
 where we used the identity
 \[
 Y_0(\lambda+i 0) f
 =
 \lim_{\mu\to 0_+} q_1 R_0(\lambda+i \mu)^{-1} q_2 f =
  q_1(\D-\lambda)^{-1} q_2 f = q_1 g
 \]
  in the last step. It remains to argue that $g\neq 0$. But $f = -q_1 g\neq 0$ by the above
  identity and hence, $g\neq 0$.
\end{proof}

\begin{proposition}\label{prop:BS-compact} $\sigma_\BS(H)$ is a compact subset of $\sigma(H)$ with zero Lebesgue measure.
\end{proposition}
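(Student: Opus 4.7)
The plan is to verify the three assertions (containment in $\sigma(H)$, compactness, zero Lebesgue measure) separately, relying throughout on the Hilbert--Schmidt bounds of Theorem~\ref{T2HS}. For containment, split according to whether $\lambda\in\sigma_\BS(H)$ lies outside or inside $[-d,d]$: outside, $\lambda\in\R\setminus\sigma(\Delta)\subseteq\Lips(H)$, so Lemma~\ref{Lemma-BirSch} places $\lambda$ in $\sigma_\pp(H)\subseteq\sigma(H)$; inside, $\lambda\in\sigma_\mathrm{ess}(H)\subseteq\sigma(H)$. This yields $\sigma_\BS(H)\subseteq\sigma(H)$ and in particular boundedness. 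For closedness, take $\lambda_n\in\sigma_\BS(H)$ with $\lambda_n\to\lambda$ and choose $f_n$ solving \eqref{BirSch} at $\lambda_n$ with $\|f_n\|=1$. Abbreviating $K(\mu):=q_1 R_0(\mu+i 0)q_2$, Theorem~\ref{T2HS} gives H\"older continuity $K(\lambda_n)\to K(\lambda)$ in operator norm together with compactness of $K(\lambda)$. Extracting a weakly convergent subsequence $f_{n_k}\rightharpoonup f$, compactness of $K(\lambda)$ combined with operator-norm convergence gives $f_{n_k}=-K(\lambda_{n_k})f_{n_k}\to -K(\lambda)f$ strongly, so $f_{n_k}\to f$, $\|f\|=1$, and $f=-K(\lambda)f$, proving $\lambda\in\sigma_\BS(H)$. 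Thus $\sigma_\BS(H)$ is compact.

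For the measure-zero assertion, extend the Birman--Schwinger operator analytically to the upper half-plane by setting $K_+(z):=q_1(\Delta-z)^{-1}q_2$ for $z\in\C_+$. By Theorem~\ref{T2HS}, $K_+$ is $\cB_2$-valued, analytic in $\C_+$, H\"older continuous on $\overline{\C}_+$, and uniformly bounded in Hilbert--Schmidt norm. The crucial observation is that $I+K_+(z)$ is invertible for \emph{every} $z\in\C_+$: if $(I+K_+(z))f=0$ with $f\neq 0$, the algebraic manipulation from the proof of Lemma~\ref{Lemma-BirSch}, carried out now with the actual resolvent $(\Delta-z)^{-1}$ in place of the limiting one, produces a nonzero $g=(\Delta-z)^{-1}q_2 f$ satisfying $Hg=zg$, contradicting self-adjointness of $H$ since $z\notin\R$.

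Form the regularized Fredholm determinant $d(z):=\det_2(I+K_+(z))$. Standard properties of $\det_2$ on $\cB_2$ (local Lipschitz continuity in Hilbert--Schmidt norm, the bound $|\det_2(I+A)|\leq\exp(\tfrac12\|A\|_{\cB_2}^2)$, and the characterization $\det_2(I+A)=0\Leftrightarrow -1\in\sigma(A)$) imply that $d$ is analytic in $\C_+$, H\"older continuous and bounded on $\overline{\C}_+$, tends to $1$ at infinity, and is nowhere zero in $\C_+$; moreover, $\sigma_\BS(H)=\{\lambda\in\R : d(\lambda+i0)=0\}$. Being a non-identically-zero bounded analytic function on $\C_+$, $d$ lies in $H^\infty(\C_+)$, so the classical Poisson--Jensen bound
$$\log|d(z)|\leq\int_\R\frac{\Im z}{\pi|t-z|^2}\log|d(t)|\,dt$$
forces $d\equiv 0$ in $\C_+$ whenever $|d|$ vanishes on a positive-measure subset of $\R$. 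Since $d$ does not vanish in $\C_+$, we conclude $|\sigma_\BS(H)|=0$. The principal hurdles are the invertibility of $I+K_+(z)$ throughout $\C_+$ and the Hardy-space uniqueness step at the end; everything else is a direct consequence of Theorem~\ref{T2HS}.
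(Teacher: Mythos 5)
Your proof is correct, and for the measure-zero statement it takes a genuinely different route from the paper. The containment and compactness steps coincide with the paper's (the paper's closedness argument is exactly your "continuity into compact operators" observation, stated more tersely). For the zero-measure claim, the paper follows Kato--Kuroda: it localizes near each $\lambda\in\sigma_\BS(H)$ with a finite-rank Riesz projection $P_z$ for $I+Y_0(z)$, reduces to a \emph{finite-dimensional} determinant $\varphi(\xi)=\det(X_0(\psi(\xi)))$ on the disc via a conformal map $\psi\colon\dD\to D_\lambda$, shows $\varphi\not\equiv 0$ by the same positivity computation you use (namely that $-1\notin\sigma(Y_0(z))$ for $\Im z>0$, since otherwise $H$ would have a non-real eigenvalue), invokes the boundary-uniqueness theorem on $\dD$, and finally patches the local conclusions together by compactness. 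You instead work globally with the regularized Fredholm determinant $d(z)=\det_2(I+Y_0(z))$ on $\C_+$, which is legitimate because Theorem~\ref{T2HS} supplies $\cB_2$-valued analyticity, a uniform Hilbert--Schmidt bound, and H\"older continuity up to $\R$; the standard properties of $\det_2$ ($\cB_2$-Lipschitz continuity, the bound $|\det_2(I+A)|\le e^{\|A\|_{\cB_2}^2/2}$, and vanishing iff $I+A$ is non-invertible) then make $d$ a bounded holomorphic function on $\C_+$, continuous up to the boundary, nowhere zero in the open half-plane, whose real zero set is exactly $\sigma_\BS(H)$; the $H^\infty(\C_+)$ uniqueness theorem finishes the job. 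Your approach buys a cleaner, one-shot argument with no Riesz-projection bookkeeping, no conformal map, and no final compactness patching, at the cost of importing the theory of regularized determinants on $\cB_2$; the paper's approach stays entirely within finite-dimensional linear algebra plus elementary complex analysis on the disc. (Two cosmetic remarks: the claim that $d\to 1$ at infinity is not needed and not quite immediate from Theorem~\ref{T2HS}, so it is best dropped; and one should note explicitly that since $d$ extends continuously to $\overline{\C}_+$, its non-tangential boundary values agree with that extension, so the $H^\infty$ uniqueness theorem applies to the set $\{\lambda:d(\lambda+i0)=0\}$ as you intend.)
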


\begin{proof} Since $\R\setminus \sigma(H) \subseteq \Lips(H)$, we conclude from Lemma~\ref{Lemma-BirSch}
that $\sigma_\BS(H)\subseteq \sigma(H)$. That $\sigma_\BS(H)$ is a closed set (and hence compact) follows
from the observation that  $\lambda \to Y_0(\lambda+ i 0)$ is continuous with values in compact operators.

To see that the measure of $\sigma_\BS(H)$ is zero, we follow an argument from the proof of
\cite[Lemma~4.20]{KaKu71}. Let $\lambda\in \sigma_\BS(H)$. Since $Y_0(\lambda+i0)$ is compact, there exists a circle
$\Gamma_\lambda$ enclosing $-1$ in the complex plane such that $\Gamma_\lambda\subseteq \rho(Y_0(\lambda+i0))$ -- the resolvent set of $Y_0(z)$ -- 
and $0$ is in the unbounded connected component of $\C\sm \Gamma_\lambda$.
By continuity of $\overline{\C}_+ \ni z\to Y_0(z)$, we deduce the existence of $r_\lambda>0$, such that
$\Gamma_\lambda\subseteq \rho(Y_0(z))$ for $z\in \overline{D_\lambda}$, where $D_\lambda := \{z\in C_+ \, |\,  |z-\lambda| < r_\lambda\}$. Here and below we sometimes abuse notation and write, e.g.,  
$Y_0(z)$ for $Y_0(z+i0)$ when $z\in\R$.

Define for $z\in \overline{D_\lambda}$ the finite rank Riesz projection
\[
P_z = \frac1{2\pi i} \int_{\Gamma} (w-Y_0(z))^{-1} \, dw
\]
and observe that $\rank(P_z) = \rank(P_{\lambda+i0}) =: n_0$ is constant throughout $\overline{D_\lambda}$.
By possibly choosing $r_\lambda$ smaller, we may assume that
$\|P_z-P_{\lambda+i0}\|\leq 1/2$ for $z\in \overline{D_\lambda}$.

 Abbreviate $\cH_z = P_z\ell^2(\Z^d)$ for $z\in \overline{D_\lambda}$. Let $\Pi\colon \C^{n_0}\to \cH_{\lambda+i0}$ be a linear isomorphism. 
Define $\Theta_z = {P_z}_{| \cH_{\lambda+i0}}\colon \cH_{\lambda+i0} \to \cH_z$, which is a linear isomorphism with left inverse
\[
\Theta_z^{-1} = \bigl(1 + P_{\lambda+i0}(P_z-P_{\lambda+i0})\bigr)^{-1}P_{\lambda+i0}\colon \cH_z\to \cH_{\lambda+i0}.
\]

We define a family of linear operators on $\C^{n_0}$ by setting
\[
X_0(z) = \Pi^{-1} \Theta_z^{-1}(I+Y_0(z))\Theta_z \Pi
\]
for $z\in \overline{D_{\lambda}}$. Note that $z\to X_0(z)$ is
holomorphic in $D_\lambda$ and continuous in the closure.
Furthermore, for $z\in  \overline{D_\lambda}$, we have $-1\in \sigma(Y_0(z))$ if and only if
$0\in \sigma(X_0(z))$.

Denote by $\psi\colon \dD \to D_\lambda$ a conformal equivalence between the unit disc $\dD$ and $D_\lambda$.
Note that $\psi$ extends by continuity to a homeomorphism $\psi \colon \overline{\dD}\to \overline{D_\lambda}$. (See \cite[Cor.~17.18]{RSFM15} applied to $\psi^{-1}$.) 
Then $\varphi(\xi) := \det(X_0(\psi(\xi)))$ defines a continuous function on $\overline{\dD}$, holomorphic in $\dD$.

We now invoke \cite[Thm.~13.20]{RSFM15} to conclude that $\varphi$ at most vanishes on a subset $M$ of 
$\T = \partial\dD$ of zero Lebesgue measure, provided $\varphi$ is not identically zero in $\dD$. 
To see that this is the case pick $\xi\in\dD$ and assume towards a contradiction that
$\varphi(\xi)=0$. Let $z = \psi(\xi)\in D_\lambda$, and observe that $-1$ must be an eigenvalue of
$Y_0(z)$. Denote by $f$ an eigenfunction and compute
\[
\begin{aligned}
0 & = -\Im\bigl\{\bigl(\sign(V)f,f\bigr)\} = \Im\bigl\{ \bigl(\sign(V)f,Y_0(z) f\bigr)\bigr\}\\
&  = \Im\bigl\{\bigl(q_1 \sign(V)f,(\Delta-z)^{-1}q_2f\bigr)\bigr\}
 =  \Im\bigl\{\bigl(q_2f,(\Delta-z)^{-1}q_2f\bigr)\bigr\}\\
 & = \Im(z)\bigl(q_2f, ((\Delta-\Re z)^2+ \Im(z)^2 )^{-1} q_2 f \bigr)
\geq \frac{\|q_2 f\|^2_2}{\Im(z)},
\end{aligned}
\]
which is a contradiction unless $f=0$. 

Since $\psi(M) = \sigma_\BS(H)\cap [\lambda-r_\lambda,\lambda+r_\lambda]$, and $\psi$'s extension to the boundary maps sets of measure zero into sets of measure zero, we conclude the proof by a compactness argument. (Note that $w_\pm  = \psi^{-1}(\lambda\pm r_\lambda)$ splits $\T$ into two open arcs and $\psi$'s extension across these arcs are in fact holomorphic by Schwarz' reflection principle.)
\end{proof}

\begin{lemma}\label{lemma:BS-Finite} For any $L>0$, the set $\sigma_{\BS}(H)\cap \Lips(H;L)$ is finite.
\end{lemma}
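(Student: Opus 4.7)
The plan is to argue by contradiction, extracting an infinite orthonormal sequence of Hamiltonian eigenfunctions that are uniformly bounded below after being multiplied by $q_1$, and then use compactness of the multiplication operator $M_{q_1}$ to reach a contradiction.

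First, assume towards a contradiction that $\sigma_\BS(H)\cap \Lips(H;L)$ is infinite. By Proposition~\ref{prop:BS-compact} the set $\sigma_\BS(H)$ is compact, and by the remarks following the definition of $\Lips(H;L)$ the set $\Lips(H;L)$ is closed; hence the intersection is compact and must contain an accumulation point. I can therefore extract a sequence of pairwise distinct points $\l_n\in\sigma_\BS(H)\cap\Lips(H;L)$ converging to some $\l_0$. For each $n$, select a normalized solution $f_n\in\ell^2(\Z^d)$ of the Birman--Schwinger equation \eqref{BirSch} at $\l_n$, so $\|f_n\|_2=1$ and $\supp f_n\subseteq \supp V$ by \eqref{SuppOfBSsol}.

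Second, I invoke Lemma~\ref{lem:BS}, which applies because $\l_n\in\Lips(H;L)$: the vector $g_n:=(\D-\l_n)^{-1}q_2 f_n$ is well defined, satisfies $\|g_n\|_2\le L\|f_n\|_2=L$, and by Lemma~\ref{Lemma-BirSch} it is a \emph{nonzero} eigenfunction of $H$ with eigenvalue $\l_n$. A short calculation identifies the multiplication $q_1 g_n$: from $Hg_n=\l_n g_n$ we get $Vg_n=-q_2 f_n$, and since on $\supp V$ we have $V=q_1 q_2$ with $|q_2|=q_1$, while both sides vanish off $\supp V$ (using $\supp f_n\subseteq \supp V$), this yields the pointwise identity
\[
 q_1 g_n = -f_n \qquad \text{on } \Z^d.
\]
In particular $\|q_1 g_n\|_2=\|f_n\|_2=1$.

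Third, I exploit self-adjointness and compactness. Because $H$ is self-adjoint and the eigenvalues $\l_n$ are pairwise distinct, the eigenfunctions $g_n$ are pairwise orthogonal. Set $\hat g_n := g_n/\|g_n\|_2$, which is an orthonormal system and therefore converges weakly to $0$. Since $q_1\in\ell^q(\Z^d)$ with $q<\infty$, the sequence $q_1$ lies in $c_0(\Z^d)$, so the multiplication operator $M_{q_1}$ is a compact operator on $\ell^2(\Z^d)$; thus $\|M_{q_1}\hat g_n\|_2\to 0$. On the other hand
\[
 \|M_{q_1}\hat g_n\|_2 = \frac{\|q_1 g_n\|_2}{\|g_n\|_2} = \frac{1}{\|g_n\|_2}\ge \frac{1}{L},
\]
using the bound $\|g_n\|_2\le L$ from Lemma~\ref{lem:BS}. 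This contradicts $\|M_{q_1}\hat g_n\|_2\to 0$, completing the proof.

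The main obstacle is really the bookkeeping in the second paragraph: one must verify that $q_1 g_n=-f_n$ identically, so that $\|q_1 g_n\|_2=1$ provides the clean lower bound needed in the compactness step; the rest is a standard accumulation argument.
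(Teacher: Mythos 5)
Your proof is correct, and the final contradiction is extracted by a genuinely different mechanism than the paper's. Both arguments share the same skeleton: pick distinct $\lambda_n\in\sigma_\BS(H)\cap\Lips(H;L)$ with normalized solutions $f_n$ of \eqref{BirSch}, and use Lemmas~\ref{lem:BS} and~\ref{Lemma-BirSch} to produce eigenfunctions $g_n=(\D-\lambda_n)^{-1}q_2f_n$ with $\|g_n\|_2\le L$ and $q_1g_n=-f_n$ (your pointwise derivation of this identity is fine, but note it is also immediate from $q_1g_n=Y_0(\lambda_n+i0)f_n=-f_n$, as in the proof of Lemma~\ref{Lemma-BirSch}). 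The paper then passes to a weakly convergent subsequence $f_{n_k}\rightharpoonup f$, upgrades this to \emph{norm} convergence $f_{n_k}\to f$ using compactness of $Y_0(\lambda)$ and norm continuity of $\lambda\mapsto Y_0(\lambda)$, and pairs $q_1f$ against the weakly null orthogonal sequence $g_k$ to get $0=\lim(q_1f,g_k)=-\lim(f,f_{n_k})=-1$. You avoid the norm-convergence step entirely: orthogonality of the $g_n$ plus the uniform bound $\|g_n\|_2\le L$ gives $\|q_1\hat g_n\|_2=1/\|g_n\|_2\ge 1/L$ for the orthonormal (hence weakly null) sequence $\hat g_n$, while compactness of multiplication by $q_1\in\ell^{2p}(\Z^d)\subseteq c_0(\Z^d)$ forces $\|q_1\hat g_n\|_2\to 0$. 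Your route is slightly more economical -- it does not need $\lambda_n$ to converge, nor the continuity of $\lambda\mapsto Y_0(\lambda)$, only the elementary compactness of a diagonal operator with vanishing symbol -- whereas the paper's route reuses the compactness and continuity of $Y_0$ that it has already established for other purposes.
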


\begin{proof} Suppose towards a contradiction that there  exists a countable sequence of distinct
real numbers $\{\lambda_n\}\subseteq\sigma_\BS(H)\cap \Lips(H;L)$. Let  $\{f_n\}_{n=1}^\infty$ be an associated sequence of normalized solutions of \eqref{BirSch}. By Proposition~\ref{prop:BS-compact} and closedness of $\Lips(H;L)$, we may assume that
$\lambda_n\to \lambda \in \sigma_\BS(H)\cap \Lips(H;L)$.

By Banach-Alaoglu's theorem, we may extract a subsequence $\{f_{n_k}\}_{k\in\N}$, such that
$f_{n_k}\to f$ weakly. Since $Y_0(\lambda) = q_1 R_0(\lambda + i 0) q_2$ is compact,
$Y_0(\lambda)f_{n_k}\to Y_0(\lambda) f$ in norm. Since $\lambda\to Y_0(\lambda)$ is continuous we may finally conclude that
$f_{n_k} = - Y_0(\lambda_{n_k})f_{n_k} \to -Y_0(\lambda) f$ in norm. Hence $f$ is a normalized solution of \eqref{BirSch}
at $\lambda$.

We may now use \eqref{Lemma-BirSch} to construct a sequence of eigenfunction $g_k = (\D-\lambda_{n_k})^{-1} q_2 f_{n_k}$ for $H$, all satisfying $\|g_k\|_2\leq L$ due to Lemma~\ref{lem:BS}. Compute

\[
(q_1 f,g_k) =  (f, q_1R_0(\lambda_{n_k}+i 0) q_2 f_{n_k} ) = - (f,f_{n_k}).
\]
We have arrived at a contradiction, since $g_k \to 0$ weakly -- being an orthogonal uniformly bounded sequence -- and $f_{n_k}\to f$ in norm.
\end{proof}

Recall the notation $\cB_2$ for the class of Hilbert-Schmidt operators on $\ell^2(\Z^d)$, and the exponent $\gamma_{d,q}$ from \eqref{gamma-def}.

\begin{theorem}\label{LimAbs}
Put $Y(z) := q_2(H-z)^{-1}q_2$ for $z\in\C\sm \sigma(H)$. Suppose $d\geq 3$ and
$V\in \ell^p(\Z^d)$ with $p$ satisfying \eqref{Vc}. Then:
\begin{enumerate}[label= \textup{(\alph*)}]
\item\label{Item-LimAbs-a} For any compact set $J\subseteq \R\sm\sigma_\BS(H)$, we have
\begin{equation}\label{Eq-LAP}
\sup_{\lambda\in J, \mu\neq 0} \|Y(\lambda+i\mu)\|_{\cB_2} <\infty.
\end{equation}
\item\label{Item-LimAbs-a2} Let $\cO\subseteq \C$ be an open set with $\sigma_\BS(H)\subseteq \cO$, and
let $\gamma > \gamma_{d,2p}$. Then there exists $C>0$, such that for all $z,z'\in \C \sm ([-d,d]\cup \cO)$ with $\Im z\Im z' \geq 0$, we have
\[
\|Y(z) - Y(z')\|_{\cB_2} \leq C|z-z'|^\gamma.
\]   
\item\label{Item-LimAbs-b} We have $\sigma_\mathrm{sc}(H)\subseteq \sigma_\BS(H)$.
\item\label{Item-LimAbs-c} Let $P_\mathrm{ac}$ denote the orthogonal projection onto the absolutely continuous subspace $\cH_\mathrm{ac}$ pertaining to $H$. Then the wave operators
\[
W^\pm := \slim_{t\to \pm\infty} e^{itH}e^{-it\Delta}
\qquad \textup{and}\qquad \widetilde{W}^\pm :=  \slim_{t\to \pm\infty} e^{it\Delta}e^{-it H}P_\mathrm{ac}
\]
exist, $(W^\pm)^* = \widetilde{W}^\pm$, $W^\pm \Delta = H W^\pm$
and $W^\pm\ell^2(\Z^d) = \cH_\mathrm{ac}$.
\end{enumerate} 
\end{theorem}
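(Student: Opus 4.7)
The strategy is to cast \ref{Item-LimAbs-a} and \ref{Item-LimAbs-a2} as Birman--Schwinger consequences of the free resolvent estimates in Theorems~\ref{T2} and~\ref{T2HS}, and then use the resulting locally uniform control on $Y(z)$ to drive Kato's smooth-operator machinery for \ref{Item-LimAbs-b} and \ref{Item-LimAbs-c}.

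First I would apply the second resolvent identity, multiplied on both sides by $q_2$, to obtain, for $z\in\C\sm\R$,
\[
(I + K(z))\,Y(z) = \wt Y_0(z), \qquad K(z) := q_2(\Delta-z)^{-1} q_1.
\]
By Theorems~\ref{T2} and~\ref{T2HS}, applied with $u=q_2$ and $v=q_1$, the operator-valued function $K$ is Hilbert--Schmidt valued and H\"older continuous up to the real axis. A direct computation gives $K(\lambda\pm i 0)^* = Y_0(\lambda\mp i 0)$, and since the spectrum of a bounded operator equals the complex conjugate of the spectrum of its adjoint, Remark~\ref{BSminus} together with the Fredholm alternative for compact operators identifies the set of $\lambda\in \R$ at which $I+K(\lambda\pm i0)$ fails to be invertible as precisely $\sigma_\BS(H)$. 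Continuity of $K$ up to the boundary then makes $(I+K(z))^{-1}$ locally uniformly bounded on a complex neighbourhood of any compact $J\subseteq \R\sm \sigma_\BS(H)$; combined with the Hilbert--Schmidt bound on $\wt Y_0$ from Theorem~\ref{T2HS} this yields \ref{Item-LimAbs-a}. Part \ref{Item-LimAbs-a2} follows via the identity
\[
(I+K(z))^{-1}-(I+K(z'))^{-1} = (I+K(z))^{-1}\bigl[K(z')-K(z)\bigr](I+K(z'))^{-1},
\]
which, together with the H\"older continuity of $K$ (Theorem~\ref{T2}) and of $\wt Y_0$ (Theorem~\ref{T2HS}), gives the desired estimate for $Y$.

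For \ref{Item-LimAbs-b}, part \ref{Item-LimAbs-a} shows that $q_2$ is locally $H$-smooth on $\R\sm\sigma_\BS(H)$ in Kato's sense, so $E^H(I)q_2\ell^2(\Z^d)\subseteq \cH_\mathrm{ac}(H)$ for every Borel set $I\subseteq \R\sm \sigma_\BS(H)$. Let $\cM$ denote the closed $H$-invariant subspace of $\ell^2(\Z^d)$ generated by $q_2\ell^2(\Z^d)$; then $E^H(I)\cM\subseteq \cH_\mathrm{ac}(H)$. On the orthogonal complement, any $\psi\in\cM^\perp$ satisfies $q_2\psi=0$ by self-adjointness of $q_2$, hence $V\psi=0$ pointwise and $H\psi=\Delta\psi$. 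Thus $\cM^\perp$ is both $H$- and $\Delta$-reducing, the restriction $H|_{\cM^\perp} = \Delta|_{\cM^\perp}$ is purely absolutely continuous, and therefore $\cM^\perp\subseteq \cH_\mathrm{ac}(H)$. Combining yields $E^H(\R\sm\sigma_\BS(H))\ell^2(\Z^d)\subseteq \cH_\mathrm{ac}(H)$, which proves \ref{Item-LimAbs-b}.

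Part \ref{Item-LimAbs-c} follows from Kato's theorem on wave operators applied to the factorisation $V = q_1\cdot q_2$: by Theorem~\ref{T2}, $q_1$ is $\Delta$-smooth globally, while by \ref{Item-LimAbs-a}, $q_2$ is locally $H$-smooth on $\R\sm\sigma_\BS(H)$. Kato's theorem then produces mutually adjoint local wave operators $W^\pm_I(H,\Delta)$ and $W^\pm_I(\Delta,H)$ for every open $I$ compactly contained in $\R\sm\sigma_\BS(H)$. Since $\sigma_\BS(H)$ has Lebesgue measure zero by Proposition~\ref{prop:BS-compact} and $\Delta$ is purely absolutely continuous, exhausting $\R\sm\sigma_\BS(H)$ by such $I$ shows that these local wave operators coincide with the unrestricted strong limits $W^\pm$ and $\wt W^\pm$, delivering the intertwining relation, mutual adjointness, and completeness $W^\pm\ell^2(\Z^d) = \cH_\mathrm{ac}(H)$. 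The main technical obstacle is the careful tracking of the exceptional set $\sigma_\BS(H)$: showing it captures $\sigma_\mathrm{sc}(H)$ via the $\cM\oplus\cM^\perp$ decomposition, while simultaneously being small enough (closed, null) to be harmless for the scattering theory.
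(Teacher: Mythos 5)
Your proposal is correct and follows essentially the same route as the paper: a Birman--Schwinger factorization of the perturbed resolvent ($(I+K(z))Y(z)=\wt Y_0(z)$ versus the paper's $Y(z)(I+Y_0(z))=\wt Y_0(z)$) combined with the bounds of Theorems~\ref{T2} and~\ref{T2HS}, compactness, the Fredholm alternative and continuity up to the boundary for the two resolvent estimates, and Kato's local smoothness/Kato--Birman theory (\cite[Thms.~XIII.20 and XIII.31]{RS78}) for the absence of singular continuous spectrum off $\sigma_\BS(H)$ and for the wave operators. The only cosmetic deviations are your adjoint argument identifying the non-invertibility set of $I+K(\lambda\pm i0)$ with $\sigma_\BS(H)$, and your explicit $\cM\oplus\cM^\perp$ decomposition handling $\ker q_2$, where the paper instead simply chooses $q_2$ nowhere vanishing.
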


\begin{proof} For the purpose of this proof we choose $q_2$, such that $q_2(n)\neq 0$ for all $n\in \Z^d$.
Abbreviate $\Omega= \R\sm\sigma_\BS(H)$ and note that $\Omega$ is an open set. Recall the notation $Y_0(z)$ and $\wt Y_0(z)$ from 
\eqref{YandwtY}.

To establish \ref{Item-LimAbs-a}, we first compute for $z$ with $\Im z\neq 0$:
\[
Y(z) Y_0(z) = q_2(H-z)^{-1}q_2q_1(H_0-z)^{-1}q_2 =
q_2(H_0-z)^{-1}q_2 - q_2(H-z)^{-1}q_2.
\]
Hence
\[
Y(z)\bigl(I+Y_0(z)\bigr) = q_2(H_0-z)^{-1}q_2.
\]
Let $\lambda\in J$. Then, by continuity, there exists $r_\lambda>0$ and $C_\lambda$ such that $I+Y_0(z)$ is bounded invertible for $|z-\lambda|\leq r_\lambda$ with $\Im z\neq 0$
and $\|(I+Y_0(z))^{-1}\|\leq C_\lambda$. Here we used Remark~\ref{BSminus}, which ensures invertibility also for $z$ with $\Im z<0$. By Theorem~\ref{T2HS} and compactness of $J$, there exists $r>0$, such that the claimed bound holds for $0<|\mu| \leq r$, which clearly suffices. Here we used that if 
$S$ is Hilbert-Schmidt and $T$ is bounded, then $ST$ is Hilbert-Schmidt and $\|ST\|_{\cB_2} \leq \|S\|_{\cB_2}\|T\|$.

The claim \ref{Item-LimAbs-a2} follows in a similar fashion from Theorem~\ref{T2HS} and the computation
\[
\begin{aligned}
Y(z) - Y(z') & = 
\wt Y_0(z)(I+Y_0(z))^{-1}\bigl(Y_0(z')-Y_0(z)\bigr)(I+Y_0(z'))^{-1}  \\
& \qquad + \bigl(\wt Y_0(z)-\wt Y_0(z')\bigr)\bigl(I+Y_0(z')\bigr)^{-1}.
\end{aligned}
\]

We now turn to \ref{Item-LimAbs-b}. It follows from \eqref{Eq-LAP} and \cite[Thm.~XIII.20]{RS78}
that for any bounded open interval $(a,b)$ with $J = [a,b]\subseteq  \Omega$, we have
$E_{(a,b)}(H)\ell^2(\Z^d)\subseteq \cH_\mathrm{ac}$. Here we used that $q_2$ was chosen to be nowhere vanishing.

Let $\Omega_n$ be a sequence of finite unions of disjoint intervals of the form $(a,b)$ considered above,
and chosen such that $\Omega_n\subseteq\Omega_{n+1}$ and $\cup_{n=1}^\infty \Omega_n = \Omega$.
Let $f\in \ell^2(\Z^d)$. It follows that $E_{\Omega}(H)f = \lim_{n\to\infty} E_{\Omega_n}f\in \cH_\mathrm{ac}$, and completes the proof of \ref{Item-LimAbs-b}.

Finally we verify \ref{Item-LimAbs-c}.
First note that due to \ref{Item-LimAbs-b} and Proposition~\ref{prop:BS-compact}, we may conclude that 
$E_{\sigma_\BS(H)}(H)\ell^2(\Z^d) = \cH_\mathrm{sc}\oplus\cH_\mathrm{pp}$ (recall \eqref{SpecDecomp}).
Consequently, $E_\Omega(H) = P_\mathrm{ac}$, the orthogonal projection onto the absolutely continuous subspace.

Next we observe that by \cite[Thm.~XIII.31]{RS78}, the reduced wave operators
\[
W_n^\pm := \slim_{t\to \pm\infty} e^{itH}e^{-it\Delta} E_{\Omega_n}(\D)
\qquad \textup{and}\qquad \widetilde{W}_n^\pm :=  \slim_{t\to \pm\infty} e^{it\Delta}e^{-it H}E_{\Omega_n}(H)
\]
exist for each $n$, $(W_n^\pm)^* = \widetilde{W}_n^\pm$, $W_n^\pm \Delta = H W_n^\pm$
and $W_n^\pm\ell^2(\Z^d) = E_{\Omega_n}(H)\ell^2(\Z^d)$.

We may now conclude that the wave operators exist, and we have the relations $\slim_{n\to \infty} W_{n}^\pm = W^\pm$ and  $\slim_{n\to \infty} \widetilde{W}_{n}^\pm = \widetilde{W}^\pm$, where we used that $E_\Omega(H) = P_\mathrm{ac}$.
 From this the remaining claims follow.
\end{proof}

We end this section with:

\begin{proof}[Proof of Theorem~\ref{T3}~\ref{Item-T3-a1},~\ref{Item-T3-a2} and \ref{Item-T3-b}]
We begin with \ref{Item-T3-a1}. That the closure of the set $\sigma_\mathrm{sc}(H)\cup\sigma_\mathrm{pp}(H)$ has zero Lebesgue measure follows from Lemma~\ref{lem:From-pp-to-BS}, Theorem~\ref{LimAbs}~\ref{Item-LimAbs-b}, and Proposition~\ref{prop:BS-compact}.
That $\sigma_\mathrm{pp}(H)=\sigma_\BS(H)=\emptyset$ if
$\|V\|_p < (1+C_d^0 \Gamma(r,d,0))^{-1}$ is a consequence of the observation that $\sigma_\BS(H)=\emptyset$, which follows directly from \eqref{VRV1}. 
Here we used that
$\sigma_\BS(H)$ does not depend on $q_2$, so that we may choose $q_2 = \sign(V) q_1$ for which
$\|q_1\|_q\|q_2\|_q = \|V\|_p$. 

The asymptotic completeness statement in \ref{Item-T3-a2} is a part of Theorem~\ref{LimAbs}~\ref{Item-LimAbs-c}. 

 The claim in \ref{Item-T3-b} that $\sigma_\mathrm{pp}(H)$ is finite under the assumed conditions, follows by combining Lemma~\ref{lem:From-pp-to-BS} with Lemma~\ref{lemma:BS-Finite}, keeping in mind Remark~\ref{RemLip}~\ref{Item-RemLip-b}. The lemma implies that $\sigma_\BS(H)$ is a finite set. The absence of singular continuous spectrum now follows from
Theorem~\ref{LimAbs}~\ref{Item-LimAbs-b}.
\end{proof}

\begin{appendix}

\section{Pointwise estimates of Bessel functions}\label{AppA}
\setcounter{equation}{0}

Recall the following properties of the Bessel function \eqref{BesselFunction} valid for all $(t,n)\in \R\ts \Z$:
\begin{equation}\lb{Be}
{2n\/t}J_n(t)=J_{n+1}(t)+J_{n-1}(t),
\qq J_{-n}(t)= (-1)^nJ_n(t) \qq \textup{and} \qq  J_{n}(-t)=(-1)^nJ_n(t).
\end{equation}

The following estimate on $J_0$ is due to Szeg\H{o} \cite{Sz75}:
\[\label{Landau0}
|J_0(t)|\leq \sqrt{\frac2{\pi|t|}}.
\]

We shall make use of two optimal universal estimates on Bessel functions due to Landau.

\begin{lemma}[\cite{La00}]\label{Landau} We have the following pointwise bounds for all real $n,t$:
\[\label{LandauB}
|J_n(t)|\leq  B_\Lan|n|^{-\frac13},
\]
and
\[\label{LandauC}
|J_n(t)|\leq  C_\Lan|t|^{-\frac13},
\]
where $B_\Lan < 7/10$ and $C_\Lan <  4/5$.
\end{lemma}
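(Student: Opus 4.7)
The plan is to start from the classical integral representation
\[
J_n(t) = \frac{1}{\pi}\int_0^\pi \cos(n\theta - t\sin\theta)\,d\theta,
\]
and, thanks to the symmetries \eqref{Be}, to reduce the problem to $t\ge 0$ and $n\ge 0$. The argument naturally splits into three regimes, determined by the position of the stationary point $\theta_0 = \arccos(n/t)$ of the phase $\phi(\theta)=n\theta-t\sin\theta$ relative to the endpoints $0$ and $\pi$.

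In the oscillatory regime $0 \le n \le (1-\delta) t$, the stationary point is non-degenerate and separated from the endpoints, with $\phi''(\theta_0) = \sqrt{t^2-n^2}$. A standard stationary-phase estimate (via the van der Corput lemma, say) yields a bound of order $(t^2-n^2)^{-1/4}$, which is much stronger than either $t^{-1/3}$ or $n^{-1/3}$ whenever $t$ and $n$ are well separated. In the evanescent regime $n \ge (1+\delta) t$ the phase has no real critical point; an integration-by-parts (or steepest descent) argument delivers exponential decay in $n$, again easily compatible with either claimed bound.

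The delicate regime is the turning-point layer $|n-t| \lesssim t^{1/3}$, where the two stationary points coalesce and stationary phase must be replaced by an Airy-function comparison. Olver's uniform asymptotic expansion gives, on compact $x$-intervals,
\[
J_n\bigl(n + x\,n^{1/3}\bigr) = \frac{2^{1/3}}{n^{1/3}}\, \mathrm{Ai}\bigl(-2^{1/3} x\bigr) + O(n^{-1}).
\]
Since $\max_{\R} \mathrm{Ai} = \mathrm{Ai}(a_1') \approx 0.5357$, attained at the largest zero $a_1'\approx -1.0188$ of $\mathrm{Ai}'$, this asymptotic already shows that in the leading order $n^{1/3}|J_n(t)|$ and $t^{1/3}|J_n(t)|$ are bounded by roughly $2^{1/3}\cdot 0.5357 \approx 0.675$. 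To reach the explicit numerical bounds $B_\Lan<7/10$ and $C_\Lan<4/5$ one combines this with a rigorous remainder estimate in Olver's expansion, and verifies at a finite number of small indices $n$, by direct computation of $\max_t |J_n(t)|$, that the inequality is not violated outside of the asymptotic regime.

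The main obstacle is making the Airy-layer analysis \emph{uniform} and \emph{quantitative} rather than merely asymptotic, since one is chasing the sharp constant. The clean way to do this, followed in \cite{La00}, is to work with the Nicholson envelope $M_n(t) = \sqrt{J_n(t)^2 + Y_n(t)^2}$, which is monotone in $t$ on $t>n$ and satisfies a second-order differential identity allowing its maximum to be tracked explicitly; comparing $M_n$ with the Airy integrand at the extremal point $t\approx n + 2^{-1/3} a_1' n^{1/3}$ then pins down the optimal constants. As the lemma is quoted verbatim from \cite{La00}, we would simply cite that paper rather than reproduce the quantitative Nicholson-envelope computation here.
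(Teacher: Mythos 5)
The paper offers no proof of this lemma at all: it is quoted directly from Landau \cite{La00}, exactly as you conclude at the end of your proposal. Your sketch of the underlying argument (reduction by the symmetries \eqref{Be}, stationary phase away from the turning point, Airy/transition-layer analysis giving the leading constant $2^{1/3}\max\mathrm{Ai}\approx 0.675$, and a quantitative envelope/monotonicity argument plus finite checks for small $n$ to pin down $B_\Lan<7/10$ and $C_\Lan<4/5$) is a faithful account of how \cite{La00} proceeds, so your treatment is consistent with, and slightly more informative than, the paper's.
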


One can find Landau's optimal constants $B_\Lan$ and $C_\Lan$ with several decimals in \cite{La00}.
Secondly, we exploit another universal estimate due to Krasikov.

\begin{lemma}[\cite{Kr14}]\label{lemma-Krasikov} We have the following pointwise bound for $(n,t)\in\R\times \R$ with
$n\geq 1/2$ and $t\geq 0$:
\[\label{Krasikov}
|J_n(t)|\leq  \sqrt{\frac2{\pi}} \frac1{|t^2-|n^2-\frac14||^\frac14}.
\]
\end{lemma}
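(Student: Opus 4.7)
The lemma is a sharp result of Krasikov, so my plan follows his strategy. I would begin by passing to Liouville normal form: the substitution $u(t) := \sqrt{t}\,J_n(t)$ converts Bessel's equation into
$$
u''(t) + q_n(t)\,u(t) = 0, \qquad q_n(t) = 1 - \frac{n^2 - \tfrac14}{t^2},
$$
and I would split the analysis according to the sign of $q_n$, i.e., according to whether $t^2>n^2-\tfrac14$ (oscillatory, post-turning-point regime) or $t^2<n^2-\tfrac14$ (exponentially decaying, sub-turning-point regime). The target inequality rewrites as $\sqrt{t^2 - n^2 + \tfrac14}\,J_n(t)^2 \leq 2/\pi$ on the oscillatory side, and analogously with the appropriate absolute value below the turning point.

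For the oscillatory regime the natural candidate invariants are Sonine-type expressions like $\Phi(t) = q_n(t)^{3/2}u(t)^2 + \sqrt{q_n(t)}\,u'(t)^2$. A direct computation shows $\Phi'(t) = \tfrac32\sqrt{q_n}\,q_n'\,u^2 + \tfrac{q_n'}{2\sqrt{q_n}}(u')^2 \geq 0$, since $q_n'(t) = 2(n^2-\tfrac14)/t^3\geq 0$ for $n\geq \tfrac12$. Combining this with the Hankel asymptotic $J_n(t)\sim\sqrt{2/(\pi t)}\cos(t-n\pi/2-\pi/4)$, which yields $\lim_{t\to\infty}\Phi(t)=2/\pi$, gives an upper bound on $\Phi$ and hence on $J_n^2$. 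However, this direct invariant produces only the weaker bound $J_n(t)^2\leq (2/\pi)\,t^2/(t^2-n^2+\tfrac14)^{3/2}$, which loses an entire power near the turning point. To obtain Krasikov's sharp bound, I would next perform the further change of variables $v(s) = q_n(t)^{1/4}u(t)$ with $ds/dt=\sqrt{q_n(t)}$, which reduces the equation to a perturbed harmonic oscillator $v''(s) + (1+\delta(s))v(s) = 0$ with an explicit remainder $\delta(s)$ whose sign can be tracked. I would then look for a monotone Pr\"ufer-type quantity $E(s) = v^2 + v'^2/(1+\delta)$ (or a closely related refinement) whose derivative has a definite sign and whose limit at $s=\infty$ equals $2/\pi$; monotonicity would give $v(s)^2\leq 2/\pi$, which translates back via $J_n(t)^2 = u(t)^2/t = v(s)^2/(t\sqrt{q_n(t)}) = v(s)^2/\sqrt{t^2-n^2+\tfrac14}$ into the desired bound.

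For the exponentially decaying regime $t^2 < n^2 - \tfrac14$ I would use the same Liouville change of variables, now producing $v''(s) = (1+\delta(s))v(s)$, and extract a monotone quantity whose boundary values are controlled by matching to the oscillatory side at the turning point $t^2 = n^2 - \tfrac14$ (where the constant $2/\pi$ has already been established) together with the known power-law behavior $J_n(t) \sim (t/2)^n/\Gamma(n+1)$ as $t\to 0^+$. The principal obstacle throughout is to land on the \emph{sharp} constant $\sqrt{2/\pi}$: most natural monotone invariants lose either a power of $(t^2-n^2+\tfrac14)$, as in the toy computation above, or a constant factor. Krasikov's insight is to pinpoint the one monotone companion of $E(s)$ whose derivative retains a definite sign after the perturbation $\delta(s)$ is turned on, and whose endpoint value $2/\pi$ is forced by the leading Hankel term — this is the delicate part that my plan ultimately defers to the original reference \cite{Kr14}.
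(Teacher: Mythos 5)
The paper does not prove this lemma at all: it is imported verbatim from Krasikov \cite{Kr14} as a black-box citation, so there is no internal argument to compare yours against. Judged on its own terms, your proposal is a reasonable road map but not a proof. The parts you actually carry out are correct: the Liouville substitution $u=\sqrt{t}\,J_n(t)$ does give $u''+q_nu=0$ with $q_n=1-(n^2-\tfrac14)/t^2$; the Sonine invariant $\Phi=q_n^{3/2}u^2+\sqrt{q_n}\,(u')^2$ does satisfy $\Phi'=\tfrac32\sqrt{q_n}\,q_n'\,u^2+\tfrac{q_n'}{2\sqrt{q_n}}(u')^2\ge 0$ on the oscillatory side, and with the Hankel limit $\Phi(\infty)=2/\pi$ this yields $J_n(t)^2\le (2/\pi)\,t^2\,(t^2-n^2+\tfrac14)^{-3/2}$, which, as you say, is strictly weaker than the claim by the factor $t^2/(t^2-n^2+\tfrac14)$.

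The genuine gap is exactly the step you flag and then defer: you never exhibit the monotone quantity (or Krasikov's actual mechanism, which in \cite{Kr14} proceeds via an explicit approximation with a controlled error term rather than a single Sonine-type invariant) that recovers the sharp prefactor $\sqrt{2/\pi}$ with the exponent $-\tfrac14$, nor do you treat the sub-turning-point regime $t^2<n^2-\tfrac14$ beyond naming a strategy. A second, smaller issue: your matching argument at the turning point is circular as sketched, since the right-hand side of \eqref{Krasikov} blows up there, so nothing is ``established at the turning point'' to match against; the monotonic regime has to be handled by a separate argument (e.g.\ via the bound $|J_n(t)|\le B_\Lan n^{-1/3}$ of \eqref{LandauB} combined with an elementary comparison, or via Krasikov's own estimates below the turning point). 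Since the paper itself only cites the result, deferring to \cite{Kr14} is acceptable practice here, but your text should then be presented as a citation with commentary rather than as a proof.
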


\begin{proposition}\label{FusedBessel} For any integer $n\in\Z$ and $t\in\R$ with
$|t|\geq 1$:
\[\label{EqFused}
|J_n(t)|\leq \frac{1}{|t|^\frac14(|n|^\frac13 + ||t|-|n||)^\frac14},
\]
(The estimate remains valid for non-integer $n$ with $n\geq 1$.)
\end{proposition}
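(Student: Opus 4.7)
By the parity relations $J_{-n}(t)=(-1)^n J_n(t)$ and $J_n(-t)=(-1)^n J_n(t)$ from \er{Be}, it suffices to prove the bound for $n\ge 0$ and $t\ge 1$. The case $n=0$ reduces to $|J_0(t)|\le t^{-1/2}$, which is immediate from Szeg\H{o}'s estimate \er{Landau0} since $\sqrt{2/\pi}<1$. For $n\ge 1$, my plan is to split according to whether $|t-n|\le n^{1/3}$ or $|t-n|\ge n^{1/3}$, and to apply Landau's pointwise bound \er{LandauB} in the central regime and Krasikov's inequality \er{Krasikov} in the outer one.

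In the central regime $|t-n|\le n^{1/3}$, the Landau bound $|J_n(t)|\le B_\Lan n^{-1/3}$ yields $|J_n(t)|^4\le B_\Lan^4 n^{-4/3}$. In this regime $n^{1/3}+|t-n|\le 2n^{1/3}$ and $t\le n+n^{1/3}\le 2n$, so $t(n^{1/3}+|t-n|)\le 4 n^{4/3}$ and the target reduces to $4 B_\Lan^4\le 1$. This is delivered by the sharp constant $B_\Lan<7/10$ from Lemma~\ref{Landau}, with only a small margin (since $4(7/10)^4<1$).

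In the outer regime $|t-n|\ge n^{1/3}$, Krasikov's bound gives $|J_n(t)|^4\le (4/\pi^2)|t^2-n^2+\frac{1}{4}|^{-1}$, and it suffices to show $|t^2-n^2+\frac{1}{4}|\ge c\,t|t-n|$ with a constant $c$ verifying $8/(c\pi^2)\le 1$. When $t\ge n+n^{1/3}$, the elementary estimate $t^2-n^2\ge t|t-n|$ gives $c=1$. When $t\le n-n^{1/3}$ (which, given $t\ge 1$, forces $n\ge 3$), I would write $t^2-n^2+\frac{1}{4}=-(n-t)(n+t)+\frac{1}{4}$ and use $n+t\ge 2t$ together with the lower bound $(n-t)(n+t)\ge 2$ (which holds since the product is minimized at $t=n-n^{1/3}$, giving $2n^{4/3}-n^{2/3}\ge 2$ for $n\ge 3$) to obtain $|t^2-n^2+\frac{1}{4}|\ge \frac{7}{8}(n-t)(n+t)\ge \frac{7}{4}\,t|t-n|$. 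Combining with $n^{1/3}+|t-n|\le 2|t-n|$ then yields the target.

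The main obstacle is that the constants are tight throughout: the sharp value $B_\Lan<7/10$ is essential in the central regime (a weaker $B_\Lan\le 1$ would fail), and in the outer regime the $-\frac{1}{4}$ shift in $t^2-n^2+\frac{1}{4}$ must be absorbed without spoiling the decisive comparison with $\pi^2>8$. This is why I introduce the auxiliary inequality $(n-t)(n+t)\ge 2$ separately rather than attempting to bound $|t^2-n^2+\frac{1}{4}|$ directly from below in a single step.
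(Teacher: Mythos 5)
Your proof is correct and follows essentially the same route as the paper: parity reduction, Szeg\H{o}'s bound for $n=0$, a split at $|t-n|=n^{1/3}$ with Landau's bound $|J_n(t)|\le B_\Lan|n|^{-1/3}$ in the central regime (where the requirement $4B_\Lan^4\le 1$ is exactly the paper's $\sqrt{2}\,B_\Lan\le 1$) and Krasikov's bound outside. The only difference is cosmetic: you absorb the $\tfrac14$-shift by the auxiliary inequality $(n-t)(n+t)\ge 2$, whereas the paper factors out $t^{1/4}$ first and compares $n^{1/3}+|t-n|$ with $2\,|t-\sqrt{n^2-\tfrac14}|$ by monotonicity; both yield the same numerical margins.
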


\begin{proof} By \eqref{Be}, it suffices to show the estimate for
$t\geq 1$, $n\geq 1$. Note that for $n=0$, the estimate \eqref{Landau0} implies \eqref{EqFused}.

We estimate first supposing $|t-n|\leq n^{1/3}$:
\[\label{TransReg}
\frac{t^\frac14(n^{\frac13}+|t-n|)^\frac14}{n^{\frac13}} \leq
\frac{(n+n^\frac13)^\frac14(2n^{\frac13})^\frac14}{n^{\frac13}}
=2^\frac14 \bigl(1+n^{-\frac23}\bigr)^\frac14 \leq \sqrt{2}.
\]

As for the regime $|t-n| \geq n^{1/3}$ observe first that
\[
\frac{1}{\bigl|t^2-(n^2-\frac14)\bigr|^\frac14} \leq \frac1{t^\frac14\bigl|t-\sqrt{n^2-\frac14}\bigr|^\frac14}.
\]
Secondly, let $n\geq 1$ and $t\geq n+n^{1/3}$. Observe that
\[
t\to \frac{n^\frac13 + t -n }{t- \sqrt{n^2-\frac14}}
\]
is decreasing (towards $1$) and hence
\[
\frac{(n^\frac13 + t -n )}{t- \sqrt{n^2-\frac14}}
\leq \frac{2n^{\frac13}}{n+n^\frac13 - \sqrt{n^2-\frac14}}
\leq 2.
\]
Thirdly, for $1\leq t \leq n-n^{1/3}$ (hence $n\geq 3$), we have
\[
t \to \frac{n^\frac13 + n- t }{\sqrt{n^2-\frac14}-t}
\]
is increasing and therefore
\[
\frac{n^\frac13 + n- t }{\sqrt{n^2-\frac14}-t}
\leq \frac{2n^\frac13  }{\sqrt{n^2-\frac14}-n+n^\frac13}.
\]
Observe that the right-hand side converges to $1$ as $n\to\infty$.
We claim that the right-hand side does not exceed $2$ for $n\geq 1$. If it does, then by continuity the equation
\[
\frac{2n^\frac13  }{\sqrt{n^2-\frac14}-n+n^\frac13} = 2
\]
must have a solution $n_0\geq 1$ (that may not be integer). Then we would have $\sqrt{n_0^2-\frac14} = 2 n_0$, which is absurd.

To sum up, for $n\geq 1$ and $t\geq 1$ with $|t-n|\geq n^{1/3}$
\[
\frac1{|t^2-(n^2-\frac14)|^\frac14} \leq \frac{2^\frac14}{t^\frac14(n^\frac13 + |t-n|)^\frac14}.
\]
Recalling \eqref{TransReg},
the desired estimate \eqref{EqFused} now follows from \eqref{LandauB} and \eqref{Krasikov}, since $\sqrt{2} B_\mathrm{L}\leq 1$ and
$2^{1/4} \sqrt{2/\pi}\leq 1$.
\end{proof}

\begin{lemma}\label{lem-small-t} For $n\in\Z$ and $t\in [-1,1]$, we have
\[
|J_n(t)| \leq \frac{1}{(|n|+1)^\frac12}.
\]
\end{lemma}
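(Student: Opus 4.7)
The plan is to reduce to the case $n\geq 0$, apply the alternating series test to the standard power series expansion of $J_n$, and then close with a quick induction.

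First, using the symmetry $J_{-n}(t)=(-1)^n J_n(t)$ recorded in \eqref{Be}, it suffices to prove the estimate for $n\geq 0$, since $|n|+1 = |-n|+1$. For $n\geq 0$, I would start from the classical series representation
\[
J_n(t) = \sum_{k=0}^\infty \frac{(-1)^k}{k!\,(k+n)!}\left(\frac{t}{2}\right)^{2k+n}.
\]
Writing $a_k(t) = (|t|/2)^{2k+n}/(k!(k+n)!)\ge 0$, the ratio of consecutive terms is
\[
\frac{a_{k+1}(t)}{a_k(t)} = \frac{(t/2)^2}{(k+1)(k+n+1)} \leq \frac{1}{4} < 1
\]
for all $t\in[-1,1]$, $k\geq 0$, and $n\geq 0$. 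Hence, after factoring out the sign of $(t/2)^n$, the series for $J_n(t)$ is an alternating series with strictly decreasing terms in absolute value. The standard estimate on alternating sums then yields
\[
|J_n(t)| \leq a_0(t) = \frac{(|t|/2)^n}{n!}\leq \frac{1}{2^n\,n!}
\]
whenever $|t|\leq 1$ and $n\geq 0$.

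It therefore remains to show $2^n\,n! \geq \sqrt{n+1}$ for all integers $n\geq 0$. This is trivial by induction: the base case $n=0$ gives $1\geq 1$, and if it holds for $n$, then
\[
2^{n+1}(n+1)! = 2(n+1)\cdot 2^n n! \geq 2(n+1)\sqrt{n+1} \geq \sqrt{n+2},
\]
the last inequality being immediate. Combining this with the previous display completes the proof. There is really no conceptual obstacle here; the only mild point to check is the monotonicity of the terms in the series, and even for the worst case $t=\pm 1$, $n=0$, the ratio bound $1/4$ is comfortably strict.
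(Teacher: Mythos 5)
Your proof is correct, but it goes a genuinely different way than the paper. The paper stays within its Bessel-function toolbox: it disposes of $n=0$ via the trivial bound $|J_0(t)|\le 1$, handles $n=1,2$ with Landau's bound \eqref{LandauB}, $|J_n(t)|\le B_\Lan |n|^{-1/3}$, and for $n\ge 2$ uses Krasikov's estimate \eqref{Krasikov} together with $|t^2-(n^2-\tfrac14)|\ge n^2-\tfrac54\ge \tfrac23 n^2$ for $|t|\le 1$, checking in each regime that the resulting prefactor is below $1$. You instead argue from the power series, observing that for $|t|\le 1$ the terms decrease by a factor of at most $1/4$, so the alternating series test gives $|J_n(t)|\le (|t|/2)^n/n!$, and then you close with the crude inequality $2^n n!\ge \sqrt{n+1}$. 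All your steps check out (the ratio bound, the leading-term estimate, and the induction are each fine, and the symmetry reduction via \eqref{Be} is the same as the paper's). Your route is more elementary and self-contained, requires none of the quoted universal bounds, and in fact yields factorial decay in $n$ on $[-1,1]$ — much stronger than the $(|n|+1)^{-1/2}$ that is actually needed (and that is all the subsequent application in Theorem~\ref{T2HS} uses). The paper's route buys stylistic uniformity with the rest of Appendix~\ref{AppA}, where Landau's and Krasikov's estimates are already in play, at the cost of a small amount of numerical bookkeeping with the constants $B_\Lan$ and $\sqrt{2/\pi}$.
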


\begin{proof}
We may again assume that $0\leq t\leq 1$ and $n\geq 1$. For $n=0$ the estimate is trivial, since $|J_n(t)|\leq 1$ for all $t\in \R$ and integer $n$.

For $n=1,2$, we use \eqref{LandauB} to estimate
$|J_n(t)| \leq B_\Lan n^{-1/3} \leq B_\Lan 2^\frac16 n^{-1/2}$.

If $n\geq 2$, we estimate using \eqref{lemma-Krasikov}:
$|t-(n^2-1/4)| \geq n^2-5/4 \geq 2 n^2/3$
and hence
\[
|J_n(t)| \leq \sqrt{\frac2{\pi}} \frac1{|t- |n^2-\frac14||^\frac14}
\leq \frac{6^\frac14}{(\pi n)^\frac12}.
\]
This completes the proof since the prefactor in both cases is smaller than $1$.
\end{proof}

\section{Weighted $L^p$-estimates on Bessel functions}\label{AppB}

We have the following weighted $\ell^p(\Z^d)$ estimate on Bessel functions. This improves on an estimate of Stempak \cite[Eq. (3)]{St}, and in particular matches the asymptotic presented in \cite[Eq. (6)]{St}.

\begin{lemma}\lb{TA1}
For all  $\g \in [0,1]$, $p>2+2\g$ and $n\in \Z$ the following
estimate hold true:
\begin{equation}\label{es1}
\int_1^\iy t^\g |J_n(t)|^p dt \le  C_p^\gamma \vk_p^\gamma(n)^p,
\end{equation}
where
\[\label{Cpgamma}
C_p^\gamma=\ca  8\bigl({1\/p-2-2\gamma}+{1\/4-p}  \bigr),  & 2+2\gamma <p<4\\
              \frac4{1-\gamma}, & p=4\\
 \frac{14 \cdot 2^\frac{p}4}{p-4},  & p>4,
 \ac
 \]
 $\vk^\gamma_p(0)=1$ and  $\vk^\gamma_p(n)$ for $n\neq 0$ is given by
\[\label{kpgamma}
\vk^\gamma_p(n)=\begin{cases} |n|^{-{1\/2}+{1+\g\/p}}, & 1<p<4\\
                        |n|^{-{1-\g\/4}}(1+\log |n|)^{1\/4}, & p=4\\
                       |n|^{-{1\/3}+{1\/3p}+{\g\/p}}, & p>4.
\end{cases}
\]
\end{lemma}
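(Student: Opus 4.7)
The plan is to reduce to $n\geq 0$ using the symmetry $J_{-n}(t)=(-1)^n J_n(t)$ in \eqref{Be}, and to dispose of $n=0$ via Szeg\H{o}'s bound \eqref{Landau0}: $\int_1^\infty t^\gamma (2/(\pi t))^{p/2}\,dt$ is finite precisely when $p>2+2\gamma$, giving a bound compatible with $\vk_p^\gamma(0)=1$. For $n=N\geq 1$ I carve $[1,\infty)$ into three pieces $I_A=\int_1^{N/2}$, $I_B=\int_{N/2}^{2N}$, $I_C=\int_{2N}^\infty$ (with obvious adjustments when $N\in\{1,2\}$), and apply the sharpest available pointwise bound on $J_N$ in each.

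On $A$, $|t^2-(N^2-\tfrac14)|\geq N^2/2$, so Krasikov's inequality \eqref{Krasikov} gives $|J_N(t)|$ of order $N^{-1/2}$, yielding $I_A$ of order $N^{\gamma+1-p/2}/(1+\gamma)$. On $C$, Krasikov gives $|J_N(t)|\leq \sqrt{2/(\pi t)}$, so
\[
I_C\leq (2/\pi)^{p/2}\int_{2N}^\infty t^{\gamma-p/2}\,dt=\frac{2(2/\pi)^{p/2}(2N)^{\gamma+1-p/2}}{p-2-2\gamma}.
\]
The explicit factor $1/(p-2-2\gamma)$ is precisely what produces the first summand of $C_p^\gamma$ when $2+2\gamma<p<4$, and it is the mechanism forcing the hypothesis $p>2+2\gamma$.

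For the transition region $B$, where $t\asymp N$, Proposition~\ref{FusedBessel} yields
\[
I_B\leq 2^{2+p/4}N^{\gamma-p/4}\int_0^N (N^{1/3}+s)^{-p/4}\,ds,
\]
and the inner integral splits according to the sign of $1-p/4$: for $p>4$ it is of order $N^{(1-p/4)/3}/(p-4)$, giving $I_B$ of order $2^{p/4}N^{\gamma+1/3-p/3}/(p-4)$; for $p=4$ it is of order $\log N$, giving $I_B$ of order $N^{\gamma-1}(1+\log N)/(1-\gamma)$; for $2+2\gamma<p<4$ it is of order $N^{1-p/4}/(4-p)$, giving $I_B$ of order $2^{p/4}N^{\gamma+1-p/2}/(4-p)$. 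Assembling: for $2+2\gamma<p<4$ all three pieces are of order $N^{\gamma+1-p/2}=\vk_p^\gamma(N)^p$, with constants summing to something bounded by $8\bigl(\frac{1}{p-2-2\gamma}+\frac{1}{4-p}\bigr)$; for $p=4$ the logarithmic $I_B$ dominates and matches $\vk_4^\gamma(N)^4=N^{\gamma-1}(1+\log N)$ with prefactor $\propto 1/(1-\gamma)$; for $p>4$, since $\gamma+1-p/2<\gamma+1/3-p/3$, the $I_B$ term dominates and matches $\vk_p^\gamma(N)^p$ with constant of order $2^{p/4}/(p-4)$. The main technical obstacle is pinning down the explicit numerical prefactors in \eqref{Cpgamma} (the constant $14$ for $p>4$ and the denominator $1-\gamma$ for $p=4$), which requires careful bookkeeping in the three $I_B$ subcases; a secondary nuisance is the small-$N$ boundary ($N\in\{1,2\}$), where $I_A$ is absent and one argues directly using, e.g., $|J_N(t)|\leq B_\Lan N^{-1/3}$ from Lemma~\ref{Landau} on the remaining short interval.
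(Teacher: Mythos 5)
Your argument is correct and yields the stated decay rates, but it takes a genuinely different route from the paper. The paper applies the fused bound of Proposition~\ref{FusedBessel} on \emph{all} of $[1,\infty)$, rescales $t=ns$ to reduce to a single universal integral $n^{1+\gamma-p/2}\int_\delta^\infty s^{\gamma-p/4}(\delta+|s-1|)^{-p/4}\,ds$ with $\delta=n^{-2/3}$, and only then splits at $s=1$; the three cases $p<4$, $p=4$, $p>4$ are handled inside that one integral. You instead split the $t$-axis into $[1,N/2]\cup[N/2,2N]\cup[2N,\infty)$, go back to raw Krasikov \eqref{Krasikov} in the two tails (where $|t^2-(N^2-\tfrac14)|\gtrsim N^2$ resp.\ $\gtrsim t^2$), and reserve the fused bound for the transition region. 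Both routes rest on the same inputs (Krasikov, Landau, Szeg\H{o}), and your regime analysis is sound: the tails contribute $O(N^{1+\gamma-p/2})$, which is dominated by (or equal to) $\vk_p^\gamma(N)^p$ in every case, the middle region produces the three shapes of $\vk_p^\gamma$, and the factor $1/(p-2-2\gamma)$ from the far tail is exactly where the hypothesis $p>2+2\gamma$ enters, just as in the paper. Your approach is arguably more transparent about which asymptotic regime of $J_N$ is responsible for which term. The one caveat is the explicit constants: as sketched, your middle-region estimate for $p>4$ already gives a prefactor of roughly $2^{\gamma+3}\cdot 2^{p/4}/(p-4)$ before adding the tails, which overshoots the $14\cdot 2^{p/4}/(p-4)$ of \eqref{Cpgamma}; since $C_p^\gamma$ feeds into $C_d^\gamma$ and hence into the quantitative hypotheses of Theorems~\ref{T2} and~\ref{T3}, reproducing the paper's numbers would require tightening the split (the paper's rescaling handles this bookkeeping more economically). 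Your treatment of $n=0$ via \eqref{Landau0} and of small $N$ is fine; in fact Proposition~\ref{FusedBessel} is valid for all integers $n$ and $|t|\geq 1$, so the $N\in\{1,2\}$ boundary needs no separate device.
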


\begin{proof} Using Proposition~\ref{FusedBessel}, we have
\[
\int_1^\infty t^\gamma |J_n(y)|^p \, dt
\leq \int_1^\infty \frac{t^{\gamma-\frac{p}4}}{(n^\frac13 + |t-n|)^\frac{p}4}\, dt
= n^{1+\gamma-\frac{p}2} \int_\delta^\infty
\frac{s^{\gamma-\frac{p}4}}{(\delta+|s-1|)^\frac{p}{4}} \, ds,
\]
where $\delta = n^{-2/3}$. We split the intergal on the right-hand side into a sum of two integrals
\[
I_p^\gamma = \int_\delta^1\frac{s^{\gamma-\frac{p}4}}{(\delta+|s-1|)^\frac{p}{4}} \, ds \quad \textup{and} \quad
J_p^\gamma= \int_1^\infty \frac{s^{\gamma-\frac{p}4}}{(\delta+|s-1|)^\frac{p}{4}} \, ds.
\]
We now proceed to estimate these two integrals in three different case.

\emph{Case I:} $2+2\gamma < p < 4$, where $\gamma - p/4 <0$.
Estimate first
\[
\begin{aligned}
J_p^\gamma & \leq \int_1^2 \frac1{(s-1)^{\frac{p}4}}\, ds + \int_2^\infty \frac1{(s-1)^{\frac{p}2-\gamma}}\,ds\\
& = \frac{2^{1-\frac{p}4}-1}{1-\frac{p}4}+\frac1{1+\gamma- \frac{p}2} \leq 2 + \frac{2}{p-2-2\gamma}\leq \frac{6}{p-2-2\gamma}.
\end{aligned}
\]
As for $I_p^\gamma$ we estimate
\[
I_p^\gamma\leq \int_0^1 \frac1{s^{\frac{p}4}(1-s)^\frac{p}4}\,ds
\leq  2^{1+\frac{p}4}\int_0^{\frac12} s^{-\frac{p}4}\, ds = \frac{2^{\frac{p}2}}{1-\frac{p}4}\leq \frac{8}{4-p}.
\]

\emph{Case II:} $p=4$ and $0\leq \gamma < 1$.

Here we estimate
\[
J_p^\gamma \leq \int_1^2 \frac{1}{\delta+ s-1}\, ds + \int_2^\infty \frac1{(s-1)^{2-\gamma}}\, ds
\leq \ln(2) -\ln(\delta)+ \frac1{1-\gamma}
\]
and
\[
I_p^\gamma \leq  \int_\delta^1 \frac1{s(\delta+1-s)}\, ds
= \int_{\frac{\delta}2}^{1-\frac{\delta}2} \frac1{(r+\frac{\delta}{2})(\frac{\delta}2 + 1 -r)} \, dr \leq
4 \int_{\frac{\delta}{2}}^{\frac12} \frac1{r+\frac{\delta}2} \, dr
\leq -4 \ln(\delta).
\]
This implies the claimed estimate, since
$\ln(2)\leq (1-\gamma)^{-1}$ and $-5\ln(\delta) = \frac{10}3\ln(n)$.

\emph{Case III:} $p>4$. We again estimate, using that $p/4-\gamma>0$,
\[
J_p^\gamma \leq \int_1^2 \frac{1}{(\delta+1-s)^{\frac{p}4}} \, ds
+ \int_2^\infty \frac{1}{(s-1)^{\frac{p}2-\gamma}}\, ds
\leq \frac{4}{p-4}\delta^{1-\frac{p}4}
+\frac2{p-2-2\gamma}
\]
and similarly to previous estimates of $I_p^\gamma$
\[
I_p^\gamma \leq \int_\delta^1 \frac{1}{s^{\frac{p}4}(\delta+1-s)^{\frac{p}4}}\, ds
\leq 2^{1+\frac{p}4}\int_{\frac{\delta}2}^{\frac12}
\frac1{(s+\frac{\delta}{2})^\frac{p}4}\, ds
\leq \frac{8 \cdot 2^\frac{p}4}{p-4} \delta^{1-\frac{p}4}.
\]
This implies the remaining estimate.

Finally we must check the bounds with $n=0$. Here we use \eqref{Landau0} and find that
\[
\int_1^\infty t^\gamma |J_0(t)|^p \, ds\leq \sqrt{\frac{2^p}{\pi^p}}
\int_1^\infty t^{-\frac{p}2+\gamma}\, ds
\leq \frac2{p-2-2\gamma}\leq C_p^\gamma.
\]
This completes the proof.
\end{proof}

The following lemma will be used in conjunction with the weighted $L^p$-estimate from Lemma~\ref{TA1}.

\begin{lemma}\label{lemma-kappa-r-bound} Let $\gamma\in[0,1]$
and $p>2+2\gamma$. Define
\[\label{r-p-gamma}
r_p^\gamma = \ca \frac{2p}{p-2-2\gamma}, & 2+2\gamma < p \leq 4\\
\frac{3p}{p-1-3\gamma}, & p>4\ac.
\]
If $r_p^\gamma<r\leq \infty$, then the sequence $\{\kappa_p^\gamma(n)\}_{n\in\Z}$ is an element of $\ell^r(\Z)$ and
if $r\neq \infty$ we have
\[
\|\kappa_p^\gamma\|_r \leq \ca \Bigl(3+ \frac2{\frac{r}{r_p^\gamma}-1}\Bigr)^\frac1{r}, & p\neq 4,\\
 \Bigl( 3+ 2\bigl( \frac{1+\frac{r}4}{\frac{r}{r_4^\gamma}-1}
\bigr)^{1+\frac{r}{4}}\Bigr)^\frac1{r} , & p=4.\ac.
\]
We furthermore have $\|\kappa_p^\gamma\|_\infty = 1$ if $p\neq 4$
and $\|\kappa_4^\gamma\|_\infty \leq (1-\gamma)^{-1}$.
\end{lemma}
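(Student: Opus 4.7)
The strategy is to compute $\|\kappa_p^\gamma\|_r^r = \sum_{n\in\Z}\kappa_p^\gamma(n)^r$ directly; using evenness of $\kappa_p^\gamma$ together with $\kappa_p^\gamma(0) = 1$, this equals $1 + 2\sum_{n\geq 1}\kappa_p^\gamma(n)^r$, and the three-branch definition \eqref{kpgamma} calls for three parallel analyses. For $p \neq 4$, one has $\kappa_p^\gamma(n) = |n|^{-1/r_p^\gamma}$ for $n \neq 0$, so the task reduces to bounding the $p$-series $\sum_{n\geq 1} n^{-s}$ with $s = r/r_p^\gamma$. The hypothesis $r > r_p^\gamma$ gives $s > 1$, and the standard estimate $\sum_{n\geq 2} n^{-s} \leq \int_1^\infty x^{-s}\,dx = 1/(s-1)$ yields the claimed bound immediately.

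For $p = 4$, I must control $\sum_{n\geq 1} n^{-s}(1+\log n)^{r/4}$ with $s = r(1-\gamma)/4 > 1$. After isolating the $n = 1$ contribution (which equals $1$), the main step is a sharp estimate on $I := \int_1^\infty x^{-s}(1+\log x)^{r/4}\,dx$. Substituting $x = e^t$ transforms this into $\int_0^\infty (1+t)^a e^{-bt}\,dt$ with $a = r/4$ and $b = s - 1$; successive substitutions $u = b(1+t)$ and $v = u - b$ then rewrite it as $b^{-(1+a)}\int_0^\infty (v+b)^a e^{-v}\,dv$. Since $s \leq a$ for $\gamma \geq 0$, we have $b \leq a - 1 \leq 1+a$, hence $(v+b)^a \leq (v+1+a)^a$. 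The elementary convexity inequality
\[
v + c \leq c\,e^{v/c} \qquad \textup{for all } v, c \geq 0
\]
(immediate from equality and tangency at $v = 0$, plus convexity of the right-hand side) applied with $c = 1+a$ gives
\[
\int_0^\infty (v+b)^a e^{-v}\,dv \leq (1+a)^a \int_0^\infty e^{-v/(1+a)}\,dv = (1+a)^{1+a},
\]
and hence $I \leq ((1+r/4)/(s-1))^{1+r/4}$.

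The main obstacle will be passing from this integral bound to the sum bound: for $\gamma > 0$ the integrand $f(x) = x^{-s}(1+\log x)^{r/4}$ is not monotone on $[1, \infty)$, having an interior maximum at $x^* = e^{\gamma/(1-\gamma)}$, so the standard monotone integral comparison fails on the initial increasing portion. My plan is to split the sum at $\lceil x^*\rceil$: on the decreasing portion the integral test applies verbatim, while on the short increasing portion I will use the crude bound $f(n) \leq f(x^*) = e^{-\gamma(s-1)/(1-\gamma)}(1-\gamma)^{-r/4}$ together with the estimate $\lceil x^* \rceil \leq x^* + 1$ on the number of such integers. Exploiting that $(1-\gamma)^{-r/4} = (a/s)^a$ and that $(1+a)/b > a/s$ in the parameter range under consideration, one verifies that $x^* f(x^*) \leq ((1+r/4)/(s-1))^{1+r/4}$, so the discrepancy between sum and integral is absorbed into the universal prefactors appearing in the stated bound.

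Finally, the $r = \infty$ case is pure calculus. For $p \neq 4$, the monotonicity of $n \mapsto |n|^{-1/r_p^\gamma}$ together with $\kappa_p^\gamma(0)=1$ gives $\|\kappa_p^\gamma\|_\infty = 1$. For $p = 4$, the function $x \mapsto x^{-(1-\gamma)/4}(1+\log x)^{1/4}$ attains its maximum on $[1,\infty)$ at $x = e^{\gamma/(1-\gamma)}$, where its value is $e^{-\gamma/4}(1-\gamma)^{-1/4}$; this is bounded above by $(1-\gamma)^{-1}$ via the chain $e^{-\gamma/4} \leq 1 \leq (1-\gamma)^{-3/4}$, which is immediate from $\gamma \geq 0$ and $1 - \gamma \leq 1$.
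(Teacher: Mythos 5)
Your treatment of the case $p\neq 4$ and of the $r=\infty$ assertions is correct and matches the paper's argument (the paper does not even write out the $r=\infty$ case, so you add a small amount there). For $p=4$ with $r<\infty$ you take a genuinely different route to the integral: the paper first applies the pointwise bound $1+\log n\leq n^\sigma/\sigma$, which makes the summand a monotone decreasing power of $n$, compares with an integral, and then optimizes over $\sigma$; you instead evaluate $I=\int_1^\infty x^{-s}(1+\log x)^{r/4}\,dx$ directly via $x=e^t$ and a convexity inequality, arriving at the same quantity $\bigl(\tfrac{1+a}{s-1}\bigr)^{1+a}$ with $a=r/4$, $s=r/r_4^\gamma$. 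That computation is correct and rather elegant.

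The gap is in passing from the integral back to the sum. The lemma requires $\sum_{n\geq 2}\kappa_4^\gamma(n)^r\leq\bigl(\tfrac{1+a}{s-1}\bigr)^{1+a}$: since $\|\kappa_4^\gamma\|_r^r=3+2\sum_{n\geq 2}\kappa_4^\gamma(n)^r$, the prefactors $3$ and $2$ are already spent on the terms $n\in\{0,\pm1\}$ and on the symmetry $n\leftrightarrow -n$, so there is nothing left for an extra error term to be ``absorbed'' into. Your split at the maximum $x^*$ yields $\sum_{n\geq 2}f(n)\leq x^*f(x^*)+I$, and you bound \emph{each} of the two terms by the full target $\bigl(\tfrac{1+a}{s-1}\bigr)^{1+a}$; this only gives $\sum_{n\geq2}f(n)\leq 2\bigl(\tfrac{1+a}{s-1}\bigr)^{1+a}$ and hence a final bound $\bigl(3+4(\cdots)\bigr)^{1/r}$ rather than the stated $\bigl(3+2(\cdots)\bigr)^{1/r}$. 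Note that your bound on $I$ alone already saturates the target, so as written there is no slack at all; to rescue the argument you would have to prove the strictly stronger inequality $x^*f(x^*)+I\leq\bigl(\tfrac{1+a}{s-1}\bigr)^{1+a}$ (e.g.\ by quantifying the loss in the step $(v+b)^a\leq(v+1+a)^a$), which you do not do. The paper's substitution $1+\log n\leq n^\sigma/\sigma$ sidesteps the whole issue because the resulting majorant is decreasing from $n=1$ onward, so the integral test applies with no correction term and the optimization over $\sigma$ delivers the constant exactly.
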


\begin{proof} Note first that for any $p>2+2\gamma$ and $r>0$, we have
\[
\|\kappa_p^\gamma\|_r^r  = 3 + 2\sum_{n=2}^\infty \kappa^\gamma_p(n)^r.
\]

For $p\neq 4$, the lemma follows easily from the estimate and computation
\[
\sum_{n=2}^\infty \kappa^\gamma_p(n)^r\leq \int_1^\infty \kappa_p^\gamma(s)^r \, d s
= \frac1{\frac{r}{r_p^\gamma}-1}.
\]

For $p=4$,
we use the estimate $1+\log(x)\leq \frac{e^{\sigma-1}}{\sigma}x^\sigma\leq x^\sigma/\sigma$ valid for $x\geq 1$ and $0<\sigma\leq 1$, and obtain
\[
\sum_{n=2}^\infty \kappa^\gamma_4(n)^r
\leq \sigma^{-\frac{r}4} \int_1^\infty s^{-\frac{r}{r^\gamma_4}+\frac{\sigma r}{4}}\, ds = \sigma^{-\frac{r}4}\cdot \frac{1}{\frac{r}{r^\gamma_4}-\frac{\sigma r}{4} -1},
\]
provided $\sigma < 4(r/r^\gamma_4-1)/r\leq 1$. The right-hand side is minimized by choosing
\[
\sigma = \frac{\frac{r}{r^\gamma_4}-1}{1+\frac{r}4}.
\]
This choice gives
\[
\sum_{n=2}^\infty \kappa^\gamma_4(n)^r
\leq \Bigl( \frac{1+\frac{r}4}{\frac{r}{r_4^\gamma}-1}
\Bigr)^{1+\frac{r}{4}}
\]
and completes the proof
\end{proof}

\section{Various discrete estimates}\label{AppC}

For the readers convenience we recall first two well-known estimates, the first of which can be found in \cite[Theorem~IX.17]{RS75} and the proof of the second is just a repetition of the proof in the more usual continuous case 
\cite[Theorem~4.2]{LL97}.

\begin{theorem}[Discrete Riesz-Thorin]
\lb{TRT}
Let  $1\le p_0,p_1,q_0,q_1\le \iy$
and suppose that $T$ is a linear operator from $\ell^{p_0}(\Z^d)\cap \ell^{p_1}(\Z^d)$ to  $\ell^{q_0}(\Z^d)\cap \ell^{q_1}(\Z^d)$,
which satisfies
\[
\|Tf\|_{q_0}\le M_0\|f\|_{p_0}\quad \textup{and} \quad \|Tf\|_{q_1}\le M_1\|f\|_{p_1}.
\]
Then for each $f\in \ell^{p_0}(\Z^d)\cap \ell^{p_1}(\Z^d)$ and each $t\in (0,1)$
\[
Tf\in \ell^{q_t}(\Z^d) \quad \textup{and} \quad  \|Tf\|_{q_t}\le M_t\|f\|_{p_t},
\]
where
\[
M_t=M_0^{1-t}M_1^t,\quad {1\/p_t}={1-t\/p_0}+{t\/p_1}, \quad \textup{and} \quad {1\/q_t}={1-t\/q_0}+{t\/q_1}.
\]
\end{theorem}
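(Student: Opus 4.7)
The plan is to imitate the classical proof of the Riesz--Thorin interpolation theorem via complex interpolation and Hadamard's Three Line Theorem. Since the argument in the counting-measure setting is formally identical to the one in \cite[Thm.~IX.17]{RS75}, one really just has to reproduce that proof; nothing about discreteness helps or hurts. First, by linearity and duality (between $\ell^{q_t}$ and $\ell^{q_t'}$ with $1/q_t + 1/q_t' = 1$), it suffices to prove
\[
|\langle g,Tf\rangle|\le M_t\,\|f\|_{p_t}\|g\|_{q_t'}
\]
for \emph{finitely supported} $f$ and $g$. The bound on arbitrary $f\in\ell^{p_0}\cap\ell^{p_1}$ then follows by a standard density argument, and the case $q_t=\infty$ is handled by a trivial modification (test against unit point masses).

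The core construction is to introduce, for $z\in\C$ with $0\le\Re z\le 1$, the analytic families
\[
f_z(n)=|f(n)|^{p_t\bigl(\frac{1-z}{p_0}+\frac{z}{p_1}\bigr)}\operatorname{sgn}f(n),\qquad
g_z(n)=|g(n)|^{q_t'\bigl(\frac{1-z}{q_0'}+\frac{z}{q_1'}\bigr)}\operatorname{sgn}g(n),
\]
with the convention $0^w=0$. Because $f,g$ have finite support, the function
\[
F(z):=\langle g_z,Tf_z\rangle=\sum_{n,m}g_z(n)\,T(m,n)\,f_z(m)
\]
is a finite linear combination of complex exponentials of $z$, hence entire, and is bounded uniformly on the strip $0\le\Re z\le 1$. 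At $z=t$ one has $f_t=f$ and $g_t=g$, so $F(t)=\langle g,Tf\rangle$.

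On the boundary $\Re z=0$, Hölder's inequality together with the hypothesis $\|T\|_{\ell^{p_0}\to\ell^{q_0}}\le M_0$ gives
\[
|F(iy)|\le \|g_{iy}\|_{q_0'}\,\|Tf_{iy}\|_{q_0}\le M_0\,\|g_{iy}\|_{q_0'}\|f_{iy}\|_{p_0}.
\]
A one-line computation using $\Re\bigl(\tfrac{1-iy}{p_0}+\tfrac{iy}{p_1}\bigr)=\tfrac{1}{p_0}$ (and similarly for $q_0'$) yields the modulus identities $\|f_{iy}\|_{p_0}=\|f\|_{p_t}^{p_t/p_0}$ and $\|g_{iy}\|_{q_0'}=\|g\|_{q_t'}^{q_t'/q_0'}$; the analogous bound with $M_1,p_1,q_1'$ holds on $\Re z=1$. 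Hadamard's Three Line Theorem then gives
\[
|F(t)|\le\Bigl(M_0\|f\|_{p_t}^{p_t/p_0}\|g\|_{q_t'}^{q_t'/q_0'}\Bigr)^{1-t}
\Bigl(M_1\|f\|_{p_t}^{p_t/p_1}\|g\|_{q_t'}^{q_t'/q_1'}\Bigr)^{t}
=M_t\,\|f\|_{p_t}\|g\|_{q_t'},
\]
where the exponents of $\|f\|_{p_t}$ and $\|g\|_{q_t'}$ collapse to $1$ exactly by the definition of $p_t$ and $q_t'$.

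There is no real obstacle here; the only care needed is in the degenerate cases $p_t\in\{1,\infty\}$ or $q_t\in\{1,\infty\}$, where the interpretation of the conjugate exponent and the duality step must be adjusted (for $q_t=\infty$ one tests against unit point masses rather than arbitrary $\ell^1$ functions, while for $p_t=\infty$ one argues by monotone approximation of $f$ by truncations). The uniform boundedness of $F$ on the strip, required to invoke Three Lines, is automatic here because the sum defining $F$ is finite; this is the one place where the density reduction at the beginning pays off.
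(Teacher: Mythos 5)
Your proposal is correct: it is the classical Riesz--Thorin argument (duality reduction to finitely supported $f,g$, the analytic families $f_z,g_z$, boundary bounds via H\"older, and Hadamard's Three Line Theorem), carried over verbatim to counting measure. The paper itself gives no proof of Theorem~\ref{TRT} but simply cites \cite[Theorem~IX.17]{RS75}, and that reference's proof is exactly the one you reproduce, so your approach coincides with the paper's intended one.
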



\begin{theorem}[Discrete Young's inequality]
\lb{YE} Let  $f\in \ell^p(\Z^d)$,
$g\in \ell^s(\Z^d)$ and $h\in \ell^r(\Z^d)$, where
${1\/p}+{1\/s}+{1\/r}=2$ for some $p,s,r\ge 1$. Then
\begin{equation}
\lb{YE1} \Bigl|\sum_{n,m\in \Z^d}f_ng_{n-m}h_m\Bigr|\le \|f\|_p\|g\|_s\|h\|_r.
\end{equation}
\end{theorem}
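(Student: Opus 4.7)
The plan is to prove the trilinear estimate by applying the three-function Hölder inequality to a symmetric splitting of the summand $|f_n\, g_{n-m}\, h_m|$. This is the standard proof of Young's convolution inequality written in its trilinear form, and it adapts verbatim from $\R^d$ to $\Z^d$.

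First I would introduce the conjugate exponents $p'=p/(p-1)$, $s'=s/(s-1)$, $r'=r/(r-1)$ (with the usual conventions $1'=\infty$, $\infty'=1$). The hypothesis $1/p+1/s+1/r=2$ is equivalent to
\[
\frac{1}{p'}+\frac{1}{s'}+\frac{1}{r'}=1,
\]
so $(p',s',r')$ is a valid Hölder triple. Assuming first that $p,s,r$ are all finite and $>1$, I would write
\[
|f_n g_{n-m} h_m|=\bigl(|f_n|^p|g_{n-m}|^s\bigr)^{1/r'}\bigl(|f_n|^p|h_m|^r\bigr)^{1/s'}\bigl(|g_{n-m}|^s|h_m|^r\bigr)^{1/p'}.
\]
The identity $1/p+1/s+1/r=2$ is precisely what guarantees that the total power of each of $|f_n|$, $|g_{n-m}|$, and $|h_m|$ equals $1$: for instance the exponent of $|f_n|$ is $p(1/r'+1/s')=p(2-1/r-1/s)=p\cdot(1/p)=1$, and similarly for the other two.

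Next I would apply the three-function Hölder inequality with exponents $(r',s',p')$ on the index set $\Z^d\times\Z^d$ to get
\[
\sum_{n,m}|f_n g_{n-m} h_m|\le \Bigl(\sum_{n,m}|f_n|^p|g_{n-m}|^s\Bigr)^{1/r'}\Bigl(\sum_{n,m}|f_n|^p|h_m|^r\Bigr)^{1/s'}\Bigl(\sum_{n,m}|g_{n-m}|^s|h_m|^r\Bigr)^{1/p'}.
\]
Each of the three double sums factorises after the translation change of variables $m\mapsto n-m$ (or $n\mapsto n+m$) in the terms involving $g_{n-m}$, producing respectively $\|f\|_p^p\|g\|_s^s$, $\|f\|_p^p\|h\|_r^r$, and $\|g\|_s^s\|h\|_r^r$. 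Collecting the resulting powers of the norms, the exponents again add to $1$ for each factor by the computation above, and the right-hand side collapses to $\|f\|_p\|g\|_s\|h\|_r$, which is \eqref{YE1}.

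There is essentially no obstacle in the generic case $1<p,s,r<\infty$. The only step requiring a little care is the handling of the boundary cases in which one of $p,s,r$ equals $1$ or $\infty$: then one of the exponents $p',s',r'$ degenerates and the corresponding factor in the splitting must be read with the appropriate $\sup$-convention (or one simply reduces to a two-function Hölder step plus the elementary bound $\|g*h\|_\infty\le\|g\|_s\|h\|_{s'}$). These boundary cases are routine and follow the same scheme, so the argument above covers the statement in full.
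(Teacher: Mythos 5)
Your argument is correct, and it is precisely the proof the paper has in mind: the paper does not spell out a proof but refers to the continuous case in Lieb--Loss \cite[Theorem~4.2]{LL97}, whose standard trilinear splitting plus three-exponent H\"older argument you have reproduced verbatim in the discrete setting (with the exponent bookkeeping and the degenerate endpoint cases handled correctly). Nothing further is needed.
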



We end with the following estimate.

\begin{lemma}\label{SummationEstimate} Let $\alpha>1$, $0< \beta< 1$ and $t\geq 1$. Then
\[
\sum_{n,m\in\Z}(1+|n|)^{-\alpha} (1+|m|)^{-\alpha} (1+ ||n-
m|-t|)^{-\beta}  \leq
\Bigl(\frac{2\alpha^2}{(\alpha-1)^2}+
\frac{4\alpha}{(\alpha-1)(p\alpha-1)^{\frac1{p}}} \Bigl(\frac{16}{1-\beta}\Bigr)^\frac1{r} \Bigr) t^{-\beta},
\]
where $p= (1+\beta)/(1-\beta)$ and $r= (1+\beta)/(2\beta)$.
\end{lemma}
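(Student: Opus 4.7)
I would approach this by splitting $S$ into a tail piece where $|n-m|\leq t/2$ and a peak piece where $|n-m|>t/2$. The tail piece is immediate: $||n-m|-t|\geq t/2$ forces $(1+||n-m|-t|)^{-\beta}\leq (1+t/2)^{-\beta}\leq 2t^{-\beta}$ (using $t\geq 1$, $\beta<1$), and so
\[
S_{\text{tail}}\leq 2t^{-\beta}\Bigl(\sum_{n\in\Z}(1+|n|)^{-\alpha}\Bigr)^{2},
\]
which, using the integral comparison $\sum_n (1+|n|)^{-\alpha}\leq 2\alpha/(\alpha-1)$, yields a contribution of the shape $\frac{C\alpha^{2}}{(\alpha-1)^{2}}\,t^{-\beta}$, matching the first summand of the claimed constant up to careful bookkeeping of numerical factors.

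For the peak piece, the key observation is that $|n-m|>t/2$ together with the triangle inequality forces $\max(|n|,|m|)>t/4$; by the $(n,m)\mapsto (m,n)$ symmetry of the summand I restrict to $|n|\geq t/4$ at the cost of a factor $2$. I would then further split the $m$-sum at $|n-m|=3t/2$: the piece with $|n-m|\geq 3t/2$ again satisfies $||n-m|-t|\geq t/2$ and gets absorbed into the tail estimate, leaving the genuine peak region $|n-m|\in(t/2,3t/2)$. On this region, for fixed such $n$, I apply H\"older's inequality in $m$ with the stated conjugate exponents $p$ and $r$. The $p$-factor $\bigl(\sum_{m}(1+|m|)^{-p\alpha}\bigr)^{1/p}$ is bounded uniformly in $n$ using $p\alpha>1$, producing the $(p\alpha-1)^{-1/p}$ factor. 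After the change of variables $k=n-m$ the $r$-factor becomes $\bigl(\sum_{|k|\in(t/2,3t/2)}(1+||k|-t|)^{-r\beta}\bigr)^{1/r}$; the crucial point is that $r\beta=(1+\beta)/2<1$, so this sum over a set of size $O(t)$ grows polynomially in $t$, and a direct integral comparison gives the bound $\leq 16\,t^{(1-\beta)/2}/(1-\beta)$, which raised to $1/r$ delivers the factor $(16/(1-\beta))^{1/r}\cdot t^{\beta(1-\beta)/(1+\beta)}$.

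The last step is to sum the resulting estimate against $(1+|n|)^{-\alpha}$ over $|n|\geq t/4$ and observe that the positive power $t^{\beta(1-\beta)/(1+\beta)}$ coming from the H\"older $r$-factor is cancelled by the decay from $\sum_{|n|\geq t/4}(1+|n|)^{-\alpha}$, producing the desired overall $t^{-\beta}$. This is the step I expect to be the main obstacle: the algebraic identity $1/p+1/r=1$ together with the specific choice of $p,r$ forces a clean cancellation of $t$-powers, and one has to estimate the tail $\sum_{|n|\geq t/4}(1+|n|)^{-\alpha}$ delicately — avoiding the naive bound $(t/4)^{1-\alpha}/(\alpha-1)$, which would introduce an unwanted $4^{\alpha-1}$ blow-up — in order to land on the stated clean constant $\frac{4\alpha}{(\alpha-1)(p\alpha-1)^{1/p}}\bigl(\frac{16}{1-\beta}\bigr)^{1/r}$ rather than a bound with exponential dependence on $\alpha$.
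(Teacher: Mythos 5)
Your decomposition is exactly the paper's (split on $\bigl||n-m|-t\bigr|\gtrless t/2$, note that the near-diagonal region forces $\max(|n|,|m|)>t/4$, reduce to $|n|>t/4$ by symmetry, then use H\"older/Young with the exponents $p=(1+\beta)/(1-\beta)$, $r=(1+\beta)/(2\beta)$). However, the way you distribute the exponents in the peak region creates a genuine gap: you put the $\ell^p$-norm on the \emph{untruncated} factor $(1+|m|)^{-\alpha}$, which is then only $O(1)$ in $t$, and you extract all the compensating decay from $\sum_{|n|\geq t/4}(1+|n|)^{-\alpha}\sim c_\alpha\, t^{1-\alpha}$. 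The $r$-factor contributes $t^{1/r-\beta}$ with $1/r=2\beta/(1+\beta)>0$, so your total power of $t$ in the peak term is $t^{\,1-\alpha+1/r-\beta}$, and this is $\leq t^{-\beta}$ only when $\alpha\geq 1+1/r=1+2\beta/(1+\beta)$. Under the stated hypothesis $\alpha>1$ alone this fails (e.g.\ $\alpha=1.1$, $\beta=1/2$ would require $\alpha\geq 5/3$), so the "clean cancellation" you invoke in the last step does not occur; no amount of careful bookkeeping of the tail sum fixes it, because the exponent itself is wrong.

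The repair is to swap the roles of the two $\alpha$-weights, which is what the paper does via the trilinear Young inequality with exponents $(p,1,r)$: put the $\ell^p$-norm on the \emph{truncated} factor $\one_{[|n|>t/4]}(1+|n|)^{-\alpha}$ and the $\ell^1$-norm on the plain $(1+|m|)^{-\alpha}$. Then
\[
\bigl\|\one_{[|n|>t/4]}(1+|n|)^{-\alpha}\bigr\|_p^p\;\leq\;\frac{2}{p\alpha-1}\Bigl(1+\frac{t}{4}\Bigr)^{1-p\alpha}\;\leq\;\frac{2}{p\alpha-1}\Bigl(1+\frac{t}{4}\Bigr)^{1-p},
\]
where the hypothesis $\alpha>1$ is used only to pass from the exponent $1-p\alpha$ to $1-p$. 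This yields a decay $t^{(1-p)/p}=t^{-1/r}$ that exactly cancels the $t^{1/r}$ growth of the $r$-factor for \emph{every} $\alpha>1$, and simultaneously produces the stated constant $\frac{4\alpha}{(\alpha-1)(p\alpha-1)^{1/p}}(\cdot)^{1/r}$, with the $\alpha/(\alpha-1)$ coming from the $\ell^1$-norm of $(1+|m|)^{-\alpha}$ and the $(p\alpha-1)^{-1/p}$ from the truncated $p$-norm. With that reassignment the rest of your argument (tail estimate, further split at $|n-m|=3t/2$, and the $\ell^r$-bound on the peak weight) goes through as you describe.
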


\begin{proof}
Employing the estimate
\[
\one_{\left[\left||n-m|-t\right| <\tfrac{t}2\right]} \leq \one_{\left[|n-m|>\tfrac{t}2\right]} \leq \one_{\left[|n|>\tfrac{t}4\right]} +  \one_{\left[|m|>\tfrac{t}4\right]},
\]
we may simplify:
\[
\begin{aligned}
& \sum_{n,m\in\Z}(1+|n|)^{-\alpha} (1+|m|)^{-\alpha} (1+ ||n-
m|-t|)^{-\beta} \\
&\quad \leq
\sum_{n,m\in \Z}(1+|n|)^{-\alpha} (1+|m|)^{-\alpha} \one_{\left[\left| |n-m|-t\right|\geq \tfrac{t}2\right]}(1+ ||n-m|-t|)^{-\beta}\\
&\qquad + 2  \sum_{n,m\in \Z} \one_{\left[|n|>\tfrac{t}4\right]}(1+|n|)^{-\alpha}(1+|m|)^{-\alpha} \one_{\left[\left| |n-m|-t\right| < \tfrac{t}2\right]}(1+ ||n-m|-t|)^{-\beta}\\
&\quad \leq 2^\beta t^{-\beta}\Bigl(\sum_{n\in\Z}(1+|n|)^{-\alpha}\Bigr)^2\\
&\qquad + 2 \bigl\|\one_{\left[|n|> \tfrac{t}4\right]}(1+|n|)^{-\alpha}\bigr\|_p \bigl\|(1+ |m|)^{-\alpha}\bigr\|_1 \bigl\|\one_{\left[\left||w|-t\right|<\tfrac{t}2\right]}(1+||w|-t|)^{-\beta}\bigr\|_r,
\end{aligned}
\]
where we used the discrete Young inequality (Theorem~\ref{YE}) in the last step with $p = \frac{1+\beta}{1-\beta}$, $s=1$ and $r = \frac{1+\beta}{2\beta}$. Note that $r\beta = (1+\beta)/2< 1$.

To complete the proof we observe
\[\label{rho-alpha-norm}
\sum_{n\in\Z}(1+|n|)^{-\alpha}  = 1 + 2 \sum_{n=2}^\infty n^{-\alpha} \leq 1 + \int_1^\infty x^{-\alpha}\, dx = 1+ \frac1{\alpha-1} =\frac{\alpha}{\alpha-1},
\]
\[
\begin{aligned}
   \bigl\|\one_{\left[|n|> \tfrac{t}4\right]}(1+|n|)^{-\alpha}\bigr\|_p^p & = 2 \sum_{n=1}^\infty \one_{\left[n> \tfrac{t}4\right]}(1+n)^{-p\alpha}
\leq 2 \int_0^\infty  \one_{\left[x> \tfrac{t}4\right]}(1+x)^{-p\alpha}\, dx\\
&  = \frac2{p\alpha-1}\bigl(1+\frac{t}4\bigr)^{1-p\alpha} \leq
\frac2{p\alpha-1}\bigl(1+\frac{t}4\bigr)^{1-p}
\leq \frac{4^{p}}{2(p\alpha-1)} t^{1-p},
\end{aligned}
\]
and finally
\[\label{r-norm}
\begin{aligned}
& \bigl\|\one_{\left[\left||w|-t\right|<\tfrac{t}2\right]}(1+||w|-t|)^{-\beta}\bigr\|_r^r  = \sum_{w\in\Z}\one_{\left[\left||w|-t\right|<\tfrac{t}2\right]}(1+||w|-t|)^{-r\beta}\\
&\qquad  = 2 \sum_{w=1}^\infty \one_{\left[|w-t|<\tfrac{t}2\right]}(1+|w-t|)^{-r\beta}.
\end{aligned}
\]
Let $[c]$ denote the integer part of a real number $c>0$.   Write
$t = [t] + \delta$ with $0\leq \delta <1$
and estimate
\[
\begin{aligned}
&\sum_{w=1}^\infty \one_{\left[|w-t|<\tfrac{t}2\right]}(1+|w-t|)^{-r\beta}
 \leq \sum_{w= [t/2]+1}^{[\frac{3t}2]}(1+|w-t|)^{-r\beta}\\
 &\quad
= \sum_{k=[\frac{t}2]+1-[t]}^{[3t/2]-[t]} (1+|k-\delta|)^{-r\beta}
= \sum_{k=[t/2]-[t] +1}^0 (1+|k|)^{-r\beta} +\sum_{k=1}^{[3t/2]-[t]} (1+(k-1))^{-r\beta}\\
& \quad \leq  2 \sum_{k=0}^{[t/2]} (1+k)^{-r\beta}
\leq 2+  2 \int_0^{[\frac{t}2]} x^{-r\beta}\, dx
\leq 2+ \frac{2}{1-r\beta} (t/2)^{1-r\beta}\leq  \frac{6}{1-\beta} t^{1-r\beta}.
\end{aligned}
\]
Here we used that for $c>0$ we have $[2c]-[c] \leq [c]+1$ and
$[3c]-[2c]\leq [c]+1$. Inserting into \eqref{r-norm} we find that
\[
\bigl\|\one_{\left[\left||w|-t\right|<\tfrac{t}2\right]}(1+||w|-t|)^{-\beta}\bigr\|_r
\leq \Bigl(\frac{12}{1-\beta}\Bigr)^\frac1{r} t^{\frac1{r}-\beta}.
\]

In conclusion, recalling that $1/p + 1/r = 1$, we have
\[
\begin{aligned}
& \sum_{n,m\in\Z}(1+|n|)^{-\alpha} (1+|m|)^{-\alpha} (1+ ||n-
m|-t|)^{-\beta}\\
&\quad \leq \Bigl(\frac{2\alpha^2}{(\alpha-1)^2}+
\frac{4\alpha}{(\alpha-1)(p\alpha-1)^{\frac1{p}}} \Bigl(\frac{24}{1-\beta}\Bigr)^\frac1{r} \Bigr) t^{-\beta}.
\end{aligned}
\]

\end{proof}

\end{appendix}

\medskip

\footnotesize \no\textbf{Acknowledgments.} \footnotesize Various
parts of this paper were written during Evgeny Korotyaev's stay as a
VELUX Visiting Professor at the Department of Mathematics, Aarhus
University, Denmark. He is grateful to the institute for the
hospitality. In addition, our study was
supported by the RSF grant No 15-11-30007 and the Danish Council for Independent Research grant No 1323-00360.

\bibliographystyle{amsrefs}

\end{document}